\title{Multiple Access Channel Resolvability Codes from Source Resolvability Codes}
\author{Rumia Sultana and R\'{e}mi A. Chou\\
\thanks{Part of this work has been presented at the 2019 Annual Allerton Conference on Communication, Control, and Computing~\cite{sultana2019explicit}, and the 2020 IEEE International Symposium on Information Theory \cite{sultana2020explicit}. This work was supported in part by NSF grant CCF-1850227. E-mails: rxsultana@shockers.wichita.edu; remi.chou@wichita.edu. } }
\newtheorem{definition}{Definition}
\newtheorem{lem}{Lemma}
\newtheorem{thm}{Theorem}
\newtheorem{corollary}{Corollary}
\newcommand{\argmax}{\operatornamewithlimits{argmax}}
\begin{document}
\maketitle
\begin{abstract}
We show that the problem of code construction for multiple access channel (MAC) resolvability can be reduced to the simpler problem of code construction for source resolvability. Specifically, we propose a MAC resolvability code construction that relies on a combination of multiple source resolvability codes, used in a black-box manner, and leverages randomness recycling implemented via distributed hashing and block-Markov coding. 
Since explicit source resolvability codes are known, our results also yield the first explicit coding schemes that achieve the entire MAC resolvability region for any discrete memoryless multiple-access channel with binary input alphabets.  
\end{abstract}
\section{Introduction}
The concept of multiple access channel (MAC) resolvability has been introduced in \cite{steinberg1998resolvability} as a natural extension of channel resolvability for point-to-point channels \cite{han1993approximation}. MAC resolvability represents a fundamental primitive that finds applications in a large variety of network information-theoretic problems, including strong secrecy for  multiple access wiretap channels~\cite{pierrot2011strongly,yassaee2010multiple}, cooperative jamming~\cite{pierrot2011strongly}, semantic security for multiple access wiretap channels \cite{frey2017mac}, and strong coordination in networks~\cite{bloch2013strongco}. These applications are, however, restricted by
the fact that   no explicit coding scheme is known to optimally implement MAC resolvability. Note indeed that \cite{steinberg1998resolvability,frey2017mac} only provide existence results and no explicit code constructions. The objective of this paper is to bridge this gap by providing explicit coding schemes that achieve the MAC resolvability region \cite{frey2017mac}. 
While previous works have been successful in providing explicit coding schemes for channel resolvability over point-to-point channels,\footnote{Explicit constructions based on polar codes for channel resolvability have been proposed for binary \emph{symmetric} point-to-point channels~\cite{bloch2012strong} and 
discrete memoryless point-to-point channels whose input alphabets have prime cardinalities~\cite{chou2018empirical}. Another explicit construction based on injective group homomorphisms has been proposed in~\cite{hayashi2016secure} for channel resolvability over binary \emph{symmetric} point-to-point channels. Low-complexity, but non-explicit, linear coding schemes for channel resolvability  over arbitrary memoryless point-to-point channels have also been proposed in \cite{amjad2015channel}.} to the best of our knowledge, the only known explicit constructions for MAC resolvability are those of \cite{chou2014low}. However, the explicit constructions in~\cite{chou2014low}, one based on invertible extractors and a second one based on injective group homomorphisms, are limited to \emph{symmetric} multiple access channels, and do not seem to
generalize to \emph{arbitrary} multiple access channels. 

In this paper, we propose a novel approach to the construction of MAC resolvability codes by showing that such a construction can be reduced to the simpler problem of code construction for source resolvability~\cite{han2003information}. Since explicit constructions of source resolvability codes are known, e.g.,~\cite{chou2018empirical}, our results yield the first explicit construction of MAC resolvability codes that achieve the entire MAC resolvability region of arbitrary multiple access channels with binary input alphabets.
More specifically, our approach to the construction of MAC resolvability codes   relies on a combination of appropriately chosen source resolvability codes,  and leverages randomness recycling implemented with distributed hashing and a block-Markov encoding scheme. 
In essence, the idea of block-Markov encoding to recycle randomness is closely related to recursive constructions of seeded extractors in the computer science literature, e.g., \cite{vadhan2012pseudorandomness}. We stress that our construction is valid independently from the way those source resolvability codes are implemented.  Additionally, \textcolor{black}{to avoid time-sharing whenever it is known
to be unnecessary, we also show how to implement the idea of rate splitting, first developed
in  \cite{grant2001rate} for multiple access channel coding, for the MAC resolvability
problem with two transmitters.} 
Note that the main difference with~\cite{chou2014low}, is that our approach aims to reduce the construction of MAC resolvability codes to a simpler problem, namely the construction of source resolvability codes, whereas~\cite{chou2014low} attempts a code construction directly adapted to multiple access channels. 

The remainder of the paper is organized as follows. The problem statement is provided in Section~\ref{sec:problem statement}. Our main result is summarized in Section \ref{sec:main}. Our proposed coding scheme and its analysis are provided in Section~\ref{sec:coding scheme} and Section~\ref{sec:coding scheme analysis}, respectively. While our main result focuses on  multiple access channels with two transmitters, we discuss an extension of our result to an arbitrary number of transmitters in Section~\ref{sec:extension}. Finally, Section~\ref{sec:conclusion} provides concluding remarks.
\section{Problem Statement  and Review of Source Resolvability}\label{sec:problem statement}
\subsection{Notation} \label{sec:notations} For $a,b\in \mathbb{R}$, define $\llbracket a,b\rrbracket \triangleq [ \lfloor a \rfloor, \lceil b \rceil]\cap \mathbb{N}$.
 The components of a vector $X^{1:N}$ of size $N$ are denoted with superscripts, i.e., $X^{1:N} \triangleq (X^1,X^2,\ldots,X^{N})$.  For two  probability  distributions $p$ and $q$ defined  over  the  same alphabet
$\mathcal X$, the variational distance $\mathbb{V}(p,q)$ between $p$ and $q$ is defined
as $\mathbb{V}(p,q)\triangleq \sum_{x \in \mathcal{X}}| p(x)-q(x)|.$
\subsection{Problem Statement} \label{sec:ps}
Consider a discrete memoryless multiple access  channel  $(\mathcal X\times \mathcal Y,  q_{Z|XY}, \mathcal Z)$, where $\mathcal X=\{ 0,1\}=\mathcal Y$, and $\mathcal{Z}$ is a finite alphabet. A target distribution $q_Z$ is defined as the  channel output distribution when the input distributions are $q_X$ and $q_Y$, i.e., 
\begin{align}
 \forall z \in \mathcal{Z}, q_Z (z) \triangleq \displaystyle \sum_{x \in \mathcal{X}} \sum_{y\in \mathcal{Y}} q_{Z|XY}(z|x,y) q_X(x)q_Y(y).\label{eqn1}
\end{align}
\begin{definition}\label{definition1}
 A $(2^{NR_1}, 2^{NR_2}, N)$  code for the memoryless multiple access channel $(\mathcal X\times \mathcal Y,  q_{Z|XY}, \mathcal Z)$ consists of 
\begin{itemize}
\item Two randomization sequences $S_1$ and $S_2$ independent and uniformly distributed over $\mathcal{S}_1 \triangleq \llbracket 1, 2^{NR_1} \rrbracket$ and $\mathcal{S}_2\triangleq~ \llbracket 1, 2^{NR_2} \rrbracket$, respectively;
\item Two encoding functions $f_{1,N}: \mathcal{S}_1 \to \mathcal{X}^N$ and $f_{2,N}: \mathcal{S}_2 \to \mathcal{Y}^N$;
\end{itemize}
and operates as follows: Transmitters 1 and 2 form $f_{1,N}(S_1)$ and $f_{2,N}(S_2)$, respectively, which are sent over the channel $(\mathcal X\times \mathcal Y , q_{Z|XY}, \mathcal Z)$.
\end{definition}
\begin{definition}\label{definition2}
$(R_1, R_2)$ is an achievable resolvability rate pair  for the memoryless  multiple access channel $(\mathcal X\times \mathcal Y,  q_{Z|XY}, \mathcal Z)$ if there exists  a sequence of $(2^{NR_1}, 2^{NR_2}, N)$  codes such that 
$$\displaystyle\lim_{N \to +\infty } \mathbb{V} (\widetilde{p}_{Z^{1:N}},q_{Z^{1:N}}) = 0,
$$
where $q_{Z^{1:N}} \triangleq \prod_{i=1}^N q_Z$ with $q_Z$ defined in $(1)$ and $\forall z^{1:N} \in \mathcal{Z}^N, $
\begin{align*}
 \widetilde{p}_{Z^{1:N}} (z^{1:N})  \triangleq \!\!\!\!\!\!\!\!\!\!\!\! \sum_{(s_1,s_2) \in \mathcal{S}_1 \times \mathcal{S}_2} \!\!\!\!\!\!\!\!\!\!\!\! \frac{ q_{Z^{1:N}|X^{1:N}Y^{1:N}}\!\left(z^{1:N}| f_{1, N}(s_1), 
f_{2, N}(s_2)  \right) } {|\mathcal{S}_1||\mathcal{S}_2|}.
\end{align*}
The multiple access channel resolvability region $\mathcal R_{q_Z}$ is defined as the closure of the set of all achievable rate pairs.
\end{definition}
 \begin{thm}[{\cite[Theorem 1]{frey2017mac}}]
 We have $\mathcal R_{q_Z} = \mathcal{R}'_{q_Z}$ with
 \begin{align*}
   \mathcal R'_{q_Z} \triangleq \smash{\bigcup_{p_T,q_{X|T},q_{Y|T}} }\!\!\!\!\!\!\!\!\!\{(R_1, R_2):  I(X Y;Z|T)&\leq R_1+R_2,\\
  I(X;Z|T)&\leq R_1, \\
   I(Y;Z|T)&\leq R_2\},
 \end{align*}
    where $p_T$ is defined  over  $\mathcal{T}\triangleq \llbracket1,|\mathcal{Z}|+3\rrbracket$ and  $q_{X|T}, q_{Y|T}$ are such that, for any $t\in \mathcal{T}$ and  $z\in \mathcal{Z}$,
    \begin{align*}
        q_Z (z)= \sum_{{x \in \mathcal{X}}}\sum_{y \in \mathcal{Y}} q_{X|T}(x|t)q_{Y|T}(y|t)q_{Z|XY}(z|x,y).
    \end{align*}\label{thm1}
        \end{thm}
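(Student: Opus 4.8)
The plan is to prove the two inclusions $\mathcal R'_{q_Z}\subseteq\mathcal R_{q_Z}$ (achievability) and $\mathcal R_{q_Z}\subseteq\mathcal R'_{q_Z}$ (converse) separately. Both sets are closed --- for $\mathcal R'_{q_Z}$, the set of triples $(p_T,q_{X|T},q_{Y|T})$ meeting the consistency constraint is compact once the cardinality bound on $\mathcal T$ is in force, and conditional mutual information is continuous on finite alphabets --- and $\mathcal R'_{q_Z}$ coincides with the closure of its interior; hence for achievability it suffices to produce, for every rate pair with $I(X;Z|T)<R_1$, $I(Y;Z|T)<R_2$ and $I(XY;Z|T)<R_1+R_2$ for some admissible triple, a sequence of codes with $\mathbb V\to0$.

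For achievability, fix such a rate pair and triple $(p_T,q_{X|T},q_{Y|T})$, fix a $p_T$-letter-typical sequence $t^{1:N}$, and draw the two codebooks at random and independently: the $|\mathcal S_1|$ codewords $X^{1:N}(s_1)$ i.i.d. $\sim\prod_i q_{X|T}(\cdot\mid t^i)$, the $|\mathcal S_2|$ codewords $Y^{1:N}(s_2)$ i.i.d. $\sim\prod_i q_{Y|T}(\cdot\mid t^i)$, with $f_{1,N},f_{2,N}$ the resulting encoders. I would show $\mathbb E_{\mathcal C_1,\mathcal C_2}[\mathbb V(\widetilde p_{Z^{1:N}},q_{Z^{1:N}})]\to0$, which yields a good deterministic pair of codebooks (the boundary points following by closedness of $\mathcal R_{q_Z}$). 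This is the classical soft-covering estimate: truncate each channel kernel $q^{\otimes N}_{Z|XY}(z^{1:N}\mid X^{1:N}(s_1),Y^{1:N}(s_2))$ to the event that $(t^{1:N},X^{1:N}(s_1),Y^{1:N}(s_2),z^{1:N})$ is jointly typical (the truncation error has vanishing expectation by the typical-set lemma), and bound the fluctuation of the truncated induced distribution about its mean by $\sum_{z^{1:N}}\sqrt{\mathrm{Var}(\cdot)}$ via Jensen. Since the codewords across $\mathcal C_1\times\mathcal C_2$ are independent, the covariance of the $(s_1,s_2)$ and $(s_1',s_2')$ contributions vanishes unless $s_1=s_1'$ or $s_2=s_2'$, which splits the variance into exactly three terms; bounding the truncated kernel by $2^{-N(H(Z|XYT)-\delta)}$ on the typical set, using $\sum_{x^{1:N},y^{1:N}}(\cdot)\le q^{\otimes N}_{Z|T}(z^{1:N}\mid t^{1:N})$ together with the two analogues obtained by first marginalizing one of $X,Y$ (where the conditional typicality lemma supplies the factors $2^{-N(H(Z|XT)-\delta)}$ and $2^{-N(H(Z|YT)-\delta)}$), and the typical-set size and mass estimates for $z^{1:N}$ given $t^{1:N}$, these three terms acquire exponents $-\tfrac N2[(R_1+R_2)-I(XY;Z|T)]$, $-\tfrac N2[R_1-I(X;Z|T)]$ and $-\tfrac N2[R_2-I(Y;Z|T)]$, up to $O(\delta)$, all negative under the hypothesized strict inequalities. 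Finally the bound $|\mathcal T|\le|\mathcal Z|+3$ follows from the support lemma, preserving the induced $q_Z$ and the three conditional mutual informations.

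For the converse, let $(R_1,R_2)\in\mathcal R_{q_Z}$ with a sequence of $(2^{NR_1},2^{NR_2},N)$ codes such that $\delta_N\triangleq\mathbb V(\widetilde p_{Z^{1:N}},q_{Z^{1:N}})\to0$. Since $X^{1:N}=f_{1,N}(S_1)$ and $Y^{1:N}=f_{2,N}(S_2)$ are functions of the independent $S_1,S_2$, we have $X^{1:N}\perp Y^{1:N}$ under $\widetilde p$, and since the channel is memoryless, introducing $T=J$ uniform over $\llbracket1,N\rrbracket$ independent of everything and $(X,Y,Z)\triangleq(X^J,Y^J,Z^J)$ produces a triple $(p_T^{(N)},q^{(N)}_{X|T},q^{(N)}_{Y|T})$ under which the law of $(T,X,Y,Z)$ factorizes as $p_T^{(N)}q^{(N)}_{X|T}q^{(N)}_{Y|T}q_{Z|XY}$. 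Then $NR_1=H(S_1)\ge H(X^{1:N})\ge I(X^{1:N};Z^{1:N})=H(Z^{1:N})-H(Z^{1:N}\mid X^{1:N})$, with $H(Z^{1:N}\mid X^{1:N})\le\sum_i H(Z^i\mid X^i)=N\,H_{\widetilde p}(Z\mid X,T)$; continuity of entropy on $\mathcal Z^N$ gives $H(Z^{1:N})\ge N H(q_Z)-O(N\delta_N\log|\mathcal Z|)-o(1)$, while continuity of entropy on $\mathcal Z$ applied per letter and averaged (using $\mathbb V(\widetilde p_{Z^i},q_Z)\le\delta_N$ and concavity) gives $H_{\widetilde p}(Z\mid T)\le H(q_Z)+O(\delta_N\log|\mathcal Z|)+o(1)$; combining and dividing by $N$, $R_1\ge I_{\widetilde p}(X;Z\mid T)-o(1)$, and identically from $H(S_2)$ and $H(S_1)+H(S_2)$, $R_2\ge I_{\widetilde p}(Y;Z\mid T)-o(1)$ and $R_1+R_2\ge I_{\widetilde p}(XY;Z\mid T)-o(1)$. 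For each $N$ the support lemma replaces $(p_T^{(N)},q^{(N)}_{X|T},q^{(N)}_{Y|T})$ by one supported on $|\mathcal Z|+3$ letters with the same induced output marginal and the same three conditional mutual informations; extracting a convergent subsequence of these finitely-supported triples, the limit induces $q_Z$ exactly (as $\delta_N\to0$) and, by continuity of mutual information, satisfies the three inequalities with no slack, so $(R_1,R_2)\in\mathcal R'_{q_Z}$.

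The step I expect to be the crux is the achievability variance estimate: one must retain the factor $q^{\otimes N}_{Z|T}(z^{1:N}\mid t^{1:N})$ coming out of the expectation of each truncated kernel (otherwise the exponents degrade to $H(Z|T)+I(\cdot\,;Z|T)$ rather than $I(\cdot\,;Z|T)$), and one must treat the three covariance cases --- one for each pattern of coincidences among $(s_1,s_1')$ and $(s_2,s_2')$ --- uniformly so that they reproduce precisely the three faces of $\mathcal R'_{q_Z}$. On the converse side the only delicate point is that the naive entropy-continuity bound loses an additive $O(N\delta_N)$, which is harmless here because it occurs in the lower bound on $H(Z^{1:N})$ and is killed by the $\tfrac1N$ normalization, whereas the comparison $H_{\widetilde p}(Z|T)\lesssim H(q_Z)$ is carried out letterwise and costs only $O(\delta_N)$; the growing alphabet of $T=J$ is then pruned by the support lemma before taking limits.
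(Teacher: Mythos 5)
This theorem is not proved in the paper at all: it is quoted verbatim from \cite[Theorem 1]{frey2017mac}, and the authors explicitly use it only as an external benchmark (their own contribution, Sections IV--V, is an \emph{explicit} block-Markov construction from source resolvability codes that re-establishes achievability constructively). So your proposal cannot coincide with an in-paper proof; it is an independent reconstruction of the cited existence result, and as a sketch it is essentially sound and follows the classical route: for achievability, random codebooks drawn conditionally i.i.d.\ given a fixed typical $t^{1:N}$, a truncated second-moment (soft-covering) bound in which the three coincidence patterns among $(s_1,s_2)$ and $(s_1',s_2')$ produce exactly the exponents $R_1-I(X;Z|T)$, $R_2-I(Y;Z|T)$, $R_1+R_2-I(XY;Z|T)$, followed by derandomization; for the converse, $NR_i\geq H(S_i)$, single-letterization with a uniform time index $T=J$ (independence of $X^J$ and $Y^J$ given $T$ holds because $S_1\perp S_2$), entropy-continuity in total variation, the support lemma to cap $|\mathcal{T}|$ at $|\mathcal{Z}|+3$, and a compactness/limit argument; the limit triple induces $q_Z$ exactly for each $t$ because marginalization contracts $\mathbb{V}$, so each per-letter output is within $\delta_N$ of $q_Z$. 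One implicit step worth making explicit is that the defining constraint $q_{Z|T}(\cdot|t)=q_Z$ for every $t$ is what makes the conditional mean of the induced output, $\prod_{i}q_{Z|T}(\cdot|t^i)$, equal to the target $q_{Z^{1:N}}$, so that the fluctuation bound alone closes the achievability argument. The comparison, then: your route buys a self-contained (but non-constructive) proof of the region, essentially in the spirit of \cite{frey2017mac} (which works with stronger concentration tools to get semantic-security-type statements), whereas the paper deliberately does not reprove this theorem and instead replaces the random-coding existence argument by an explicit scheme --- reduction to corner points of $\mathcal{R}_{X,Y}$, rate splitting via Lemma \ref{lem1}, and randomness recycling through two-universal hashing across blocks --- which is the actual novelty here and is not captured by your argument.
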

        \vspace*{-1em}
Note that reference \cite{frey2017mac} provides only  the existence of a coding scheme that achieves any rate pair in $\mathcal R_{q_Z}$. By contrast, \emph{our goal is to provide explicit coding schemes that can achieve the region $\mathcal R_{q_Z}$} by relying on source resolvability codes, which are used in a black box manner. The notion of source resolvability is reviewed next.
\subsection{Review of Source Resolvability}
\begin{definition}  \label{definition3}
A $(2^{NR},N)$ source resolvability code for $(\mathcal{X},q_{X})$ consists of
\begin{itemize}
\item A randomization sequence $S$ uniformly distributed over $\mathcal{S}  \triangleq \llbracket 1, 2^{NR} \rrbracket$;
\item An encoding function $e_N: \mathcal{S} \to \mathcal{X}^N$;
\end{itemize}
and operates as follows: The encoder forms $ \widetilde{X}^{1:N} \triangleq e_N(S)$ and the  distribution of $\widetilde{X}^{1:N}$ is denoted by $\widetilde{p}_{X^{1:N}}$.
\end{definition}
\begin{definition} \label{definition4}
$R$ is an achievable resolution rate for a discrete memoryless source $(\mathcal{X}, q_{X})$ if there exists a sequence of $(2^{NR},N)$ source resolvability codes such that
\begin{align}
    \lim_{N \to +\infty } \mathbb{V} (\widetilde{p}_{X^{1:N}},q_{X^{1:N}}) = 0,\label{eqndef4}
\end{align}
where $q_{X^{1:N}}\triangleq \prod_{i=1}^N q_X$. The infimum of such achievable rates is called  source resolvability.
 \end{definition}
\begin{thm}[\!\!\cite{han1993approximation}]
The source resolvability of a discrete memoryless source $(\mathcal{X}, q_X)$ is $H(X)$.\label{thm2}
\end{thm}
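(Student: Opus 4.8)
The plan is to prove the claim in two directions: \emph{achievability}, i.e., every rate $R>H(X)$ is an achievable resolution rate in the sense of Definition~\ref{definition4}, and a matching \emph{converse}, i.e., no rate $R<H(X)$ is achievable. Together these two facts show that the infimum of achievable rates equals $H(X)$.

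For achievability I would use a random coding (soft-covering) argument. Fix $R>H(X)$ and $\epsilon>0$ with $R>H(X)+\epsilon$. Draw the $2^{NR}$ codewords $\{\mathcal{C}(s)\}_{s\in\mathcal{S}}$ independently, each according to $q_{X^{1:N}}=\prod_{i=1}^N q_X$, which defines a random encoder $e_N$ via $e_N(s)=\mathcal{C}(s)$; let $\widetilde{p}_{X^{1:N}}$ be the resulting (random) induced distribution of Definition~\ref{definition3}, so that $\mathbb{E}_{\mathcal{C}}[\widetilde{p}_{X^{1:N}}(x^{1:N})]=q_{X^{1:N}}(x^{1:N})$ for every $x^{1:N}$. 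The heart of the argument is to bound $\mathbb{E}_{\mathcal{C}}[\mathbb{V}(\widetilde{p}_{X^{1:N}},q_{X^{1:N}})]$. Splitting the sum defining $\mathbb{V}$ over the $\epsilon$-typical set $\mathcal{T}^N_\epsilon$ and its complement, the atypical part has expectation at most $q_{X^{1:N}}((\mathcal{T}^N_\epsilon)^c)+\mathbb{E}_{\mathcal{C}}[\widetilde{p}_{X^{1:N}}((\mathcal{T}^N_\epsilon)^c)]=2\,q_{X^{1:N}}((\mathcal{T}^N_\epsilon)^c)\to 0$ by the AEP; on a typical sequence $x^{1:N}$, $\widetilde{p}_{X^{1:N}}(x^{1:N})$ is an average of $2^{NR}$ i.i.d.\ terms with common mean $q_{X^{1:N}}(x^{1:N})\leq 2^{-N(H(X)-\epsilon)}$, so a second-moment (or Chernoff) bound shows it concentrates around its mean as soon as $2^{NR}q_{X^{1:N}}(x^{1:N})$ is exponentially large, which holds because $R>H(X)+\epsilon$. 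Summing the per-sequence deviations over the at most $2^{N(H(X)+\epsilon)}$ typical sequences still yields a vanishing total. Hence $\mathbb{E}_{\mathcal{C}}[\mathbb{V}]\to 0$, and by the selection principle there is a deterministic sequence of $(2^{NR},N)$ codes with $\mathbb{V}(\widetilde{p}_{X^{1:N}},q_{X^{1:N}})\to 0$, proving $R$ achievable; letting $\epsilon\downarrow 0$ covers all $R>H(X)$.

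For the converse, suppose a sequence of $(2^{NR},N)$ codes achieves $\mathbb{V}(\widetilde{p}_{X^{1:N}},q_{X^{1:N}})\to 0$. By Definition~\ref{definition3} the support $\mathcal{A}_N\triangleq\{x^{1:N}:\widetilde{p}_{X^{1:N}}(x^{1:N})>0\}$ satisfies $|\mathcal{A}_N|\leq|\mathcal{S}|=2^{NR}$. If $R<H(X)$, fix $\epsilon>0$ with $R<H(X)-2\epsilon$; then by the AEP, $q_{X^{1:N}}(\mathcal{A}_N)\leq |\mathcal{A}_N|\,2^{-N(H(X)-\epsilon)}+q_{X^{1:N}}((\mathcal{T}^N_\epsilon)^c)\leq 2^{-N\epsilon}+o(1)\to 0$. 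Since $\widetilde{p}_{X^{1:N}}(\mathcal{X}^N\setminus\mathcal{A}_N)=0$, it follows that $\mathbb{V}(\widetilde{p}_{X^{1:N}},q_{X^{1:N}})\geq q_{X^{1:N}}(\mathcal{X}^N\setminus\mathcal{A}_N)=1-q_{X^{1:N}}(\mathcal{A}_N)\to 1$, contradicting $\mathbb{V}\to 0$. Hence every achievable rate satisfies $R\geq H(X)$.

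The main obstacle is the concentration step in the achievability proof: one must control $\widetilde{p}_{X^{1:N}}(x^{1:N})$ simultaneously for \emph{all} sequences, not just a single fixed one, which is exactly why the typical/atypical truncation is needed — on atypical sequences the per-sequence fluctuations are not small relative to their tiny means, but their aggregate probability is negligible, whereas on typical sequences the exponential gap $R-H(X)>\epsilon$ forces concentration. Packaging this uniformly is the content of the soft-covering/resolvability lemma of~\cite{han1993approximation}; the remaining AEP estimates and the derandomization are routine.
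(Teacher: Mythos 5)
Your two-part argument is correct: the soft-covering achievability (i.i.d.\ codebook drawn from $q_{X^{1:N}}$, typical/atypical split, second-moment concentration on typical sequences, then derandomization by selecting a codebook achieving the expected variational distance) together with the support-counting converse ($|\mathcal{A}_N|\leq 2^{NR}$ forces $q_{X^{1:N}}(\mathcal{A}_N)\to 0$ when $R<H(X)$, so $\mathbb{V}\to 1$, using the paper's unnormalized definition of $\mathbb{V}$) does establish that the infimum of achievable resolution rates is $H(X)$; the only cosmetic slip is that the concentration step needs the typical-set \emph{lower} bound $q_{X^{1:N}}(x^{1:N})\geq 2^{-N(H(X)+\epsilon)}$ to make $2^{NR}q_{X^{1:N}}(x^{1:N})$ exponentially large, not the upper bound you quote, and your crude summation costs a harmless constant factor in $\epsilon$. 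Be aware, however, that the paper does not prove this theorem at all: it is imported as a known result from \cite{han1993approximation}, and your proof is essentially the classical Han--Verd\'u argument from that reference. Where the paper does touch the achievability direction is Appendix \ref{Appsc}, which replaces the random-coding existence argument by an \emph{explicit} polar-code construction (Algorithm \ref{alg:sr}): uniform bits are placed on the high-entropy indices $\mathcal{V}_X$, the remaining positions are generated with asymptotically zero-rate randomness, and known polarization results give $\mathbb{V}(\widetilde{p}_{X^{1:N}},q_{X^{1:N}})\to 0$ at rate $\frac{|\mathcal{V}_X|}{N}\to H(X)$. Your route buys a self-contained proof of the full theorem (including the converse, which the appendix does not address), while the paper's route buys explicitness and low complexity, which is exactly what its black-box reduction from MAC resolvability to source resolvability needs.
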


Note that explicit low-complexity source resolvability codes can, for instance, be obtained with polar codes as reviewed in Appendix \ref{Appsc}.

\section{Main result} \label{sec:main}
Our main result is summarized as follows.
\begin{thm} \label{tmain}
The coding scheme presented in Section \ref{sec:coding scheme}, which solely relies on source resolvability codes, used as black boxes, and two-universal hash functions \cite{carter1979universal}, achieves the entire  multiple access channel resolvability region $\mathcal R_{q_Z}$ for any discrete memoryless multiple access channel with binary input alphabets. 
\textcolor{black}{Moreover, time-sharing
is avoided whenever it is known to be unnecessary.}
\end{thm}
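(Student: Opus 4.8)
The plan is to construct, for any rate pair in the interior of $\mathcal{R}'_{q_Z}$, an explicit coding scheme from source resolvability codes and two-universal hash functions, and then invoke Theorem~\ref{thm1} together with a standard density/closure argument to conclude. First I would fix a joint distribution $p_T q_{X|T} q_{Y|T}$ achieving a target point, and reduce to the corner points of the associated pentagon (the rate region for a fixed $p_{XYZ}$), since other points follow by rate splitting (for the two-transmitter case, avoiding time-sharing where possible as claimed) and time-sharing over $T$ where needed. At a corner point, say $R_1 = I(X;Z|T)$ and $R_2 = I(XY;Z|T) - I(X;Z|T) = I(Y;Z|XT)$, the idea is to have transmitter~1 run a source resolvability code for $(\mathcal{X}, q_{X|T})$ — which by Theorem~\ref{thm2} needs rate $\approx H(X|T)$, more than $R_1$ — and transmitter~2 run a source resolvability code for $(\mathcal{Y}, q_{Y|X,T})$-type statistics, conditioned on the first user's output. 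The gap between the source-resolvability rate $H(X|T)$ and the target channel-resolvability rate $I(X;Z|T) = H(X|T) - H(X|Z,T)$ is exactly the amount of randomness that must be \emph{recycled}: the randomization sequence $S_1$ is split into a "message-like" part of rate $R_1$ and a "reused" part of rate $\approx H(X|Z,T)$, and the reused part is regenerated in the next block via a two-universal hash of the current block's codeword, so that only $R_1$ fresh bits per block are consumed asymptotically. This is implemented with a block-Markov scheme over $k$ blocks, with the hash outputs of block $i$ feeding block $i+1$, and the standard argument that the per-block overhead of the first and last blocks is negligible as $k \to \infty$.

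The analysis then proceeds in the usual two-part fashion. The first part is to control the output statistics: one shows by the triangle inequality that $\mathbb{V}(\widetilde{p}_{Z^{1:N}}, q_{Z^{1:N}})$ is bounded by the sum of (i) the resolvability errors of the constituent source codes, which vanish by hypothesis, and (ii) terms measuring how close the hash-regenerated randomness is to uniform and independent of what it must be independent of. The second part — controlling the recycled randomness — is where the leftover-hash-type lemma enters: conditioned on the channel output $Z^{1:N}$ (or on the other user's input, in the distributed setting), the hash of the codeword is close to uniform provided the hash output length is below the relevant conditional min-entropy, which after smoothing is essentially $H(X|Z,T)$ per symbol; this is what forces the choice of hash length and, ultimately, the rate $R_1$. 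The "distributed hashing" refinement is needed because the two users cannot coordinate: each hashes its own codeword separately, and one must argue that the \emph{pair} of hash outputs is jointly near-uniform, which is where the sum-rate and individual-rate constraints $I(XY;Z|T) \le R_1 + R_2$, $I(X;Z|T)\le R_1$, $I(Y;Z|T) \le R_2$ all appear simultaneously.

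I would expect the main obstacle to be the joint analysis of the distributed hash functions across the block-Markov chain: unlike a single-block, single-hash leftover hash lemma, here the randomness recycled into block $i+1$ is a deterministic function of block $i$'s codeword, which is itself partly determined by block $i-1$, so the "fresh randomness" and "recycled randomness" are entangled across blocks, and one must set up an induction (or a telescoping bound over blocks) showing that the near-uniformity established in block $i$ is good enough to seed block $i+1$ without error accumulation that grows with $k$. A secondary technical point is that the source resolvability codes are only guaranteed to produce $\widetilde{p}_{X^{1:N}}$ close to $q_{X^{1:N}}$ in variational distance, not, say, in a stronger divergence, so the conditional min-entropy / smooth-min-entropy estimates needed for the hash lemma must be derived from this variational guarantee alone — this is delicate but can be handled by a continuity argument, possibly at the cost of working with a slightly perturbed target distribution. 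Once these pieces are in place, restricting to binary input alphabets lets us invoke known explicit (polar-code-based) source resolvability codes, as reviewed in Appendix~\ref{Appsc}, to make the whole construction explicit, and Theorem~\ref{thm1} guarantees that sweeping over all $p_T, q_{X|T}, q_{Y|T}$ covers $\mathcal{R}_{q_Z}$.
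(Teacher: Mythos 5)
Your proposal follows essentially the same route as the paper: reduction to the region for a fixed product input distribution, rate splitting of the second transmitter to reach the dominant face without time-sharing, block-Markov randomness recycling in which each block's fresh randomness has rate $\approx I(\cdot\,;Z)$ and the remainder (rate $\approx H(\cdot\,|Z,\ldots)$) is regenerated by two-universal hashes of the previous block's codewords, a distributed leftover hash lemma for the joint near-uniformity of the separate hash outputs, and an induction over blocks to keep the accumulated variational-distance error under control. The only notable difference is the technical point you flag at the end: the paper does not need a continuity argument for (smooth) min-entropy under variational perturbations, because the smoothing and min-entropy bounds are applied to the \emph{ideal} i.i.d. distribution $q$, and the conclusion is transferred to the actual induced distribution $\widetilde p$ purely by the data-processing inequality for the variational distance (hashing is deterministic) combined with the per-block closeness established by induction.
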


As a corollary, we obtain the first explicit construction of multiple access channel resolvability codes that achieves the entire  multiple access channel resolvability region $\mathcal R_{q_Z}$ for any discrete memoryless multiple access channel with binary input alphabets. 

\begin{corollary}
Since explicit constructions for source resolvability codes and two-universal hash functions are known, e.g., \cite{carter1979universal,chou2015polar}, Theorem \ref{tmain} yields an explicit coding scheme that achieves $\mathcal R_{q_Z}$ for any discrete memoryless multiple access channel with binary input alphabets.  
\end{corollary}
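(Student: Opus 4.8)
The plan is to argue by direct instantiation of the black boxes in the coding scheme of Section~\ref{sec:coding scheme}. Theorem~\ref{tmain} already establishes that this scheme achieves $\mathcal{R}_{q_Z}$ for \emph{any} choice of source resolvability codes (in the sense of Definition~\ref{definition4}) and two-universal hash functions plugged into it, using those objects only through their input--output interface. Hence it suffices to supply concrete, finitely describable families of such codes and hash functions; the resulting end-to-end scheme is then explicit by composition, and it inherits the conclusion $\lim_{N\to\infty}\mathbb{V}(\widetilde{p}_{Z^{1:N}},q_{Z^{1:N}})=0$ from Theorem~\ref{tmain} verbatim.

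First I would pin down which source resolvability codes the scheme calls for. Since $\mathcal{X}=\mathcal{Y}=\{0,1\}$, the conditional input distributions $q_{X|T}$ and $q_{Y|T}$ entering $\mathcal{R}'_{q_Z}$ --- and, after rate splitting, the distributions of the rate-split components of $X$ --- are all Bernoulli sources over a binary alphabet. Thus I only need an explicit family of source resolvability codes for an \emph{arbitrary binary discrete memoryless source} attaining resolution rate $H(X)$, as guaranteed abstractly by Theorem~\ref{thm2}. Such a family is furnished by the polar-code construction recalled in Appendix~\ref{Appsc} (see also \cite{chou2015polar,chou2018empirical}): the encoder is a fixed polarization transform together with a frozen-set rule driven by Bhattacharyya-parameter thresholds, it runs in $O(N\log N)$ time, and it meets~\eqref{eqndef4} at every rate $R>H(X)$. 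Installing this family in each resolvability-code slot of the scheme is legitimate precisely because Theorem~\ref{tmain} uses those codes as black boxes.

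For the two-universal hash functions I would invoke the Carter--Wegman paradigm \cite{carter1979universal}, which gives explicit, low-complexity families on binary strings --- for instance multiplication by a uniformly random matrix over $\mathbb{F}_2$, or Toeplitz hashing --- exactly as needed by the distributed-hashing component of the scheme. With both primitives fixed, every step of the scheme (block-Markov encoding, distributed hashing, rate splitting, and the decoder-free approximation of $q_{Z^{1:N}}$) is specified by a finite description plus these explicit objects, so the composite construction is explicit in the standard sense.

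The one point that genuinely needs checking --- and the step I expect to be the mild obstacle --- is to confirm that no resolvability-code slot in the scheme sits on a non-binary alphabet, since explicit polar source resolvability codes are presently available only for prime-cardinality alphabets. In particular I would verify that the time-sharing variable $T$, whose alphabet $\mathcal{T}=\llbracket 1,|\mathcal{Z}|+3\rrbracket$ is not binary in general, is handled by hard-wiring a fixed sequence $t^{1:N}$ of the prescribed type --- which can itself be chosen explicitly, e.g., as the lexicographically smallest sequence of that type --- rather than by resolving it with a code, and that all remaining resolvability codes act on the binary sources $q_{X|T=t}$, $q_{Y|T=t}$ and on the binary rate-split components. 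Granting this, the corollary follows by substituting the explicit binary polar source resolvability codes and the explicit Carter--Wegman hash families into the scheme validated by Theorem~\ref{tmain}.
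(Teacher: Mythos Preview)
Your proposal is correct and matches the paper's treatment, which presents the corollary as immediate from Theorem~\ref{tmain} without a separate proof: plug explicit binary source resolvability codes (the polar construction of Appendix~\ref{Appsc}) and explicit Carter--Wegman hash families into the black boxes of the scheme in Section~\ref{sec:coding scheme}. Your extra diligence in checking that every resolvability slot is binary is well-founded and correct---the rate-splitting of Lemma~\ref{lem1} maps $\mathcal{Y}\times\mathcal{Y}\to\mathcal{Y}$, so $U,V$ live in $\{0,1\}$---though note the split is applied to $Y$ (Transmitter~2), not to $X$ as you wrote; this slip is harmless since both alphabets are binary.
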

\section{Coding Scheme}\label{sec:coding scheme}
We explain in Section~\ref{sec:cases} that the general construction of  MAC resolvability codes can be reduced to two special cases. Then, we provide a coding scheme for these two special cases in Sections~\ref{sec:sub1},~\ref{sec:sub2}.
\subsection{Reduction of the general construction of MAC resolvability codes to two special cases}  \label{sec:cases}
\begin{definition}\label{definition5}
For the memoryless  multiple access channel $(\mathcal X\times \mathcal Y,  q_{Z|XY}, \mathcal Z)$ we define
 \begin{align*}
\mathcal R_{X,Y}\triangleq  \{(R_1,R_2):  I(XY;Z)& \leq R_1+R_2,\\
   I(X;Z)& \leq R_1, \\
   I(Y;Z)& \leq R_2
    \},
\end{align*}
for some product distribution $p_Xp_Y$ on $\mathcal X\times \mathcal Y$.
 \end{definition} 
 To show the achievability of $\mathcal R'_{q_{Z}}$, it is sufficient to show the achievability of $\mathcal R_{X,Y}$. Indeed,  note that if $\mathcal R_{X,Y}$ is achievable, then  $\textup{Conv}(\bigcup_{p_Xp_Y}\mathcal R_{X,Y})$ is also achievable, where $\textup{Conv}$ denotes the convex hull. Hence, $\mathcal R'_{q_{Z}}$ is achievable because   $\textup{Conv}(\bigcup_{p_Xp_Y}\mathcal R_{X,Y}) \supset \mathcal{R}'_{q_{Z}}$ by remarking that the corner points of $\mathcal{R}'_{q_{Z}}$  are in $\textup{Conv}(\bigcup_{p_Xp_Y}\mathcal R_{X,Y})$. For instance, the point $(I(X;Z|T), I(Y;Z|XT)) \in \mathcal{R}'_{q_{Z}}$ belongs to $\textup{Conv}(\bigcup_{p_Xp_Y}\mathcal R_{X,Y})$ since 
 \begin{align*}
    &(I(X;Z|T), I(Y;Z|XT)) \\&=  \sum_{t\in \mathcal{T}}p_T(t)(I(X;Z|T=t), I(Y;Z|X,T =t)).
 \end{align*} Similarly, all the corner points of $\mathcal{R}'_{q_{Z}}$ also belong to $\textup{Conv}(\bigcup_{p_Xp_Y}\mathcal R_{X,Y})$.
 Next, we consider two cases to achieve the region~$\mathcal R_{X,Y}$ for some fixed distribution $p_{X}p_Y$.
\begin{itemize}
\item Case 1 (depicted in Figure \ref{fig:model}): $I(XY;Z)>I(X;Z)+ I(Y;Z)$. In this case, it is sufficient to achieve the dominant face $\mathcal{D}$ of $\mathcal{R}_{X,Y}$, where
  \begin{align*}\mathcal D\triangleq\{(R_1,R_2): R_1  \in & [ I(X;Z), I(X;Z|Y)], \\ R_2  = & I(XY;Z)-R_1\}.\end{align*}
\item Case 2 (depicted in Figure \ref{fig:model2}): $I(XY;Z) = I(X;Z)+ I(Y;Z)$. In this case, only the corner point $C$ needs to be achieved. Note that it is impossible to have $I(XY;Z) < I(X;Z)+ I(Y;Z)$ by independence of $X$ and $Y$.
\end{itemize}
\begin{figure}
\begin{minipage}{.5\textwidth}
       \captionsetup{width=6 cm}
  \includegraphics[width=7.5 cm]{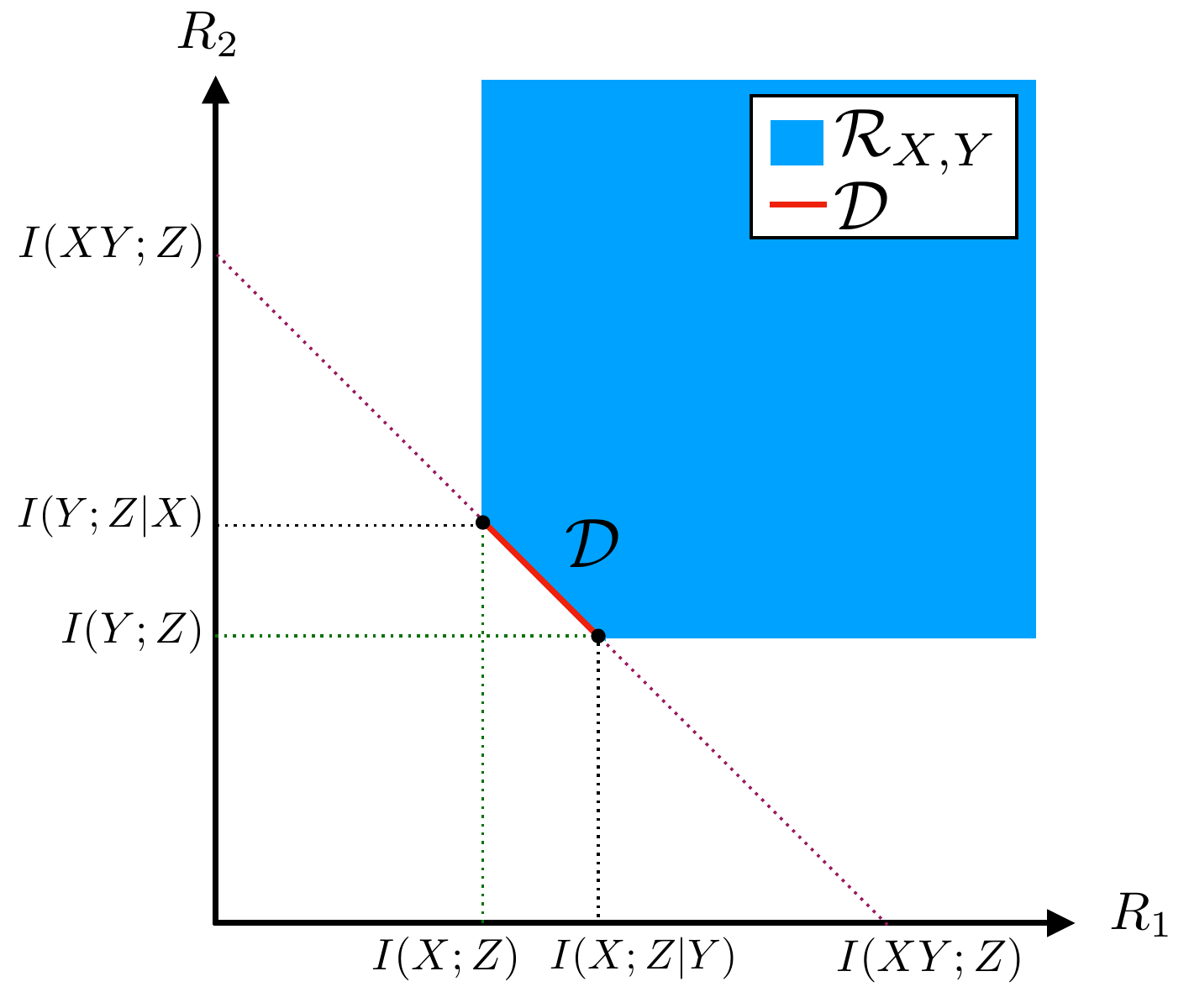}
  \caption{Region $\mathcal{R}_{X,Y}$ in Case 1: $I(XY;Z)>~I(X;Z)+ I(Y;Z)$.}
  \label{fig:model}
\end{minipage}
\begin{minipage}{.5\textwidth}
      \captionsetup{width=6 cm}
  \includegraphics[width=7.5 cm]{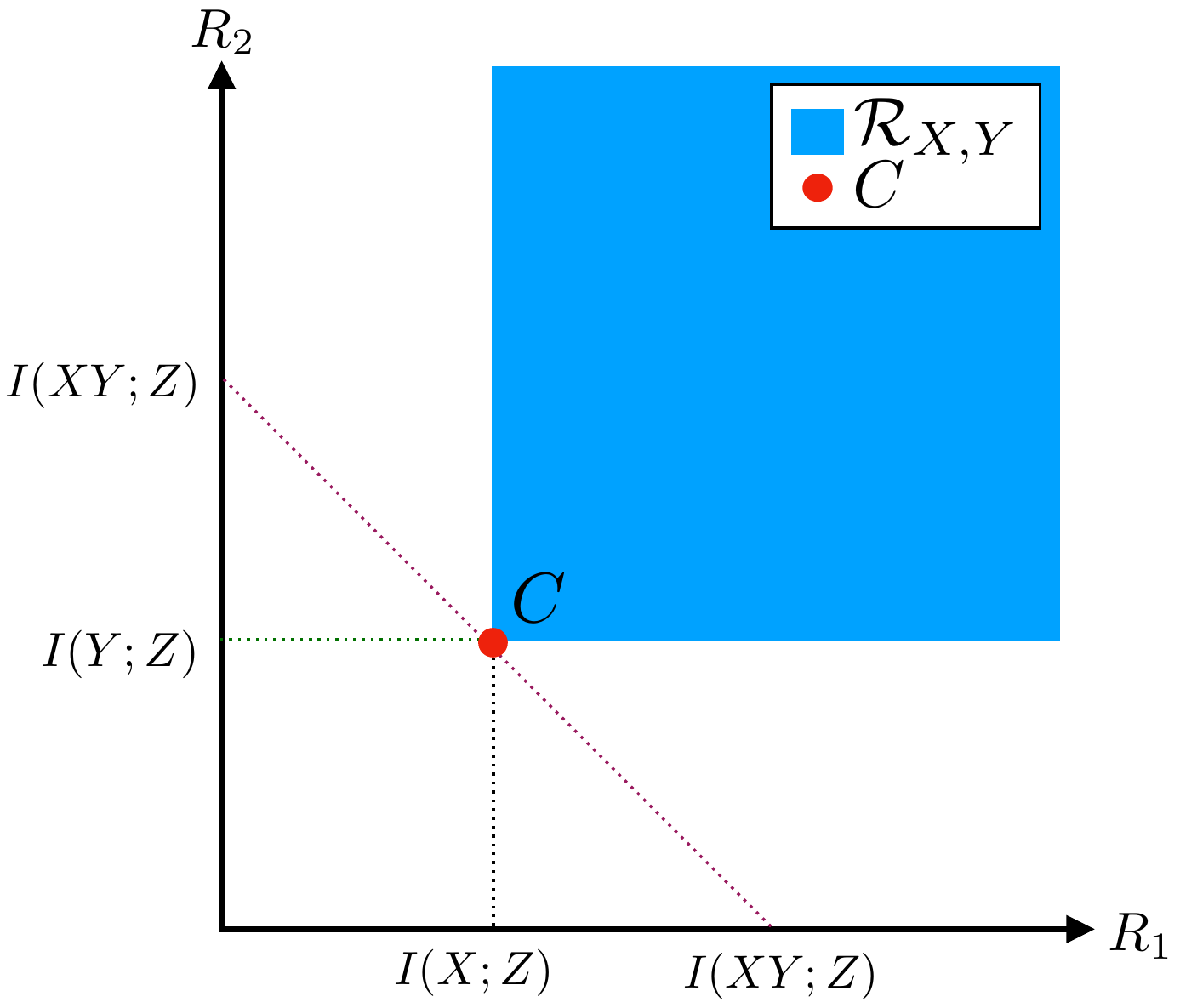}
  \caption{Region $\mathcal{R}_{X,Y}$ in  Case 2: $I(XY;Z)=~ I(X;Z)+ I(Y;Z)$.}
  \label{fig:model2}
  \end{minipage}
\end{figure}
  \subsection{Encoding Scheme for Case 1} \label{sec:sub1}
Consider the region  $\mathcal{R}_{X,Y}$ for some product distribution $p_Xp_Y$ on $\mathcal{X} \times \mathcal{Y}$ such that $I(XY;Z)>I(X;Z)+I(Y;Z)$. Since $\mathcal{R}_{X,Y}$ is a contrapolymatroid \cite{edmonds2003submodular}, to achieve the region  $\mathcal{R}_{X,Y}$, it is sufficient to achieve any rate pair $(R_1,R_2)$ of the dominant face $\mathcal{D}$ of $\mathcal{R}_{X,Y}$. 
        We next show that $\mathcal{D}$ can be achieved through rate-splitting using the following lemma proved in Appendix \ref{AppendixA}.
\begin{lem} \label{lem1}
Consider $f:~ \mathcal{Y} \times~ \mathcal{Y} \rightarrow \mathcal{Y}$, $(u,v)\mapsto \max (u,v)$, and form  $(\mathcal{Y}\times\mathcal{Y},p_{U_\epsilon}p_{V_\epsilon}), \epsilon \in [0,1]$, such that $p_{U_\epsilon V_\epsilon}=p_{U_\epsilon}p_{V_\epsilon}$, $p_{f(U_\epsilon,V_\epsilon)}=p_Y$, for fixed $(y, u), p_{f(U_{\epsilon},V_{\epsilon})|U_\epsilon}(y| u) $ is a continuous function of $\epsilon$, and
\begin{align}
& U_{\epsilon=0}=0=V_{\epsilon=1},\label{lem1eqn1}\\
& U_{\epsilon=1}=f(U_{\epsilon=1},V_{\epsilon=1}),\label{lem1eqn2}\\
& V_{\epsilon=0}=f(U_{\epsilon=0},V_{\epsilon=0}).\label{lem1eqn3}
\end{align}
The above construction is indeed possible as shown in {\cite [Example 3]{grant2001rate}}.
 Then, we have $I(XY;Z)=R_1+R_U+R_V$, where we have defined the functions
\begin{align*}
   & R_1: [0, 1]\to\ \mathbb{R}^+,\epsilon \mapsto I(X;Z|U_\epsilon),\\
   &  R_U:[0, 1]\to\ \mathbb{R}^+,\epsilon \mapsto I(U_\epsilon;Z),\\
   & R_V:[0, 1]\to\ \mathbb{R}^+,\epsilon \mapsto I(V_\epsilon;Z|U_\epsilon X).
  \end{align*}
Moreover, $R_1$ is continuous with respect to $\epsilon$ and $[ I(X;Z), I(X;Z|Y)]$ is contained in its image. 

When the context is clear, we do not explicitly write the dependence of $U$ and $V$ with respect to $\epsilon$ by dropping the subscript $\epsilon$.
\end{lem}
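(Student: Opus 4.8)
The plan is to realize all the relevant random variables on one probability space, reduce the identity to the chain rule for mutual information, and reduce the statement about $R_1$ to an intermediate-value argument. First, using the construction of $(p_{U_\epsilon},p_{V_\epsilon})$ from \cite[Example~3]{grant2001rate}, I would work with the joint distribution
\begin{align*}
p_{XUVYZ}(x,u,v,y,z)\triangleq p_X(x)\,p_{U_\epsilon}(u)\,p_{V_\epsilon}(v)\,\mathbbm{1}\{y=f(u,v)\}\,q_{Z|XY}(z|x,y).
\end{align*}
Under this law $X$ is independent of $(U,V)$, $U$ is independent of $V$, $Y=f(U,V)$ is a deterministic function of $(U,V)$ with marginal the prescribed $p_Y$ (here I use $p_{f(U_\epsilon,V_\epsilon)}=p_Y$), and $Z$ is produced from $(X,Y)$ through the channel, so that $Z-(X,Y)-(U,V)$ is a Markov chain and $p_{XYZ}=p_Xp_Yq_{Z|XY}$ does not depend on $\epsilon$.

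Second, for the identity, since $Y=f(U,V)$ is a deterministic function of $(U,V)$ we have $I(XUV;Z)=I(XUVY;Z)$, and by the Markov chain $Z-(X,Y)-(U,V)$ this equals $I(XY;Z)+I(UV;Z|XY)=I(XY;Z)$. Expanding $I(XUV;Z)$ with the chain rule in the order $U,X,V$ then gives
\begin{align*}
I(XY;Z)=I(U_\epsilon;Z)+I(X;Z|U_\epsilon)+I(V_\epsilon;Z|U_\epsilon X)=R_U(\epsilon)+R_1(\epsilon)+R_V(\epsilon)
\end{align*}
for every $\epsilon$; all three terms are nonnegative as (conditional) mutual informations, so the three maps indeed take values in $\mathbb{R}^+$, and since $I(XY;Z)$ is $\epsilon$-free this is a genuine constraint linking them.

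Third, for the continuity and image of $R_1$, I would observe that $p_{XZU_\epsilon}$ is assembled from the fixed objects $p_X$ and $q_{Z|XY}$ together with $p_{U_\epsilon}$ and $p_{f(U_\epsilon,V_\epsilon)|U_\epsilon}$. The latter is continuous in $\epsilon$ by hypothesis, and with $\mathcal Y=\{0,1\}$ and $f=\max$ the marginal $p_{U_\epsilon}$ is then continuous as well (the only parameters are $p_{U_\epsilon}(1)$ and $p_{V_\epsilon}(1)$, tied by $(1-p_{U_\epsilon}(1))(1-p_{V_\epsilon}(1))=1-p_Y(1)$, and $p_{V_\epsilon}(1)=p_{f(U_\epsilon,V_\epsilon)|U_\epsilon}(1\,|\,0)$ is the continuous quantity). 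Hence $\epsilon\mapsto p_{XZU_\epsilon}$ is continuous, and since conditional mutual information is a continuous function of the joint law on a fixed finite alphabet (with $0\log0=0$), $R_1$ is continuous on $[0,1]$. At the endpoints, \eqref{lem1eqn1} gives $U_{\epsilon=0}=0$, so $R_1(0)=I(X;Z|U_{\epsilon=0})=I(X;Z)$, while \eqref{lem1eqn2} gives $U_{\epsilon=1}=f(U_{\epsilon=1},V_{\epsilon=1})=Y$, so $R_1(1)=I(X;Z|Y)$. Since $X\perp Y$ yields $I(X;ZY)=I(X;Z|Y)\geq I(X;Z)$, we have $R_1(0)\leq R_1(1)$, and the intermediate value theorem gives $[I(X;Z),I(X;Z|Y)]\subseteq R_1([0,1])$.

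The step I expect to be the main obstacle is the continuity of $R_1$: the identity is a one-line chain-rule computation once the coupling above is fixed, but establishing continuity requires knowing that the entire joint distribution $p_{XZU_\epsilon}$ — and in particular the marginal $p_{U_\epsilon}$, not merely the conditional $p_{f(U_\epsilon,V_\epsilon)|U_\epsilon}$ — moves continuously with $\epsilon$, which is exactly where one must exploit the structure of the construction in \cite[Example~3]{grant2001rate} together with the continuity hypothesis built into the lemma, while also being mildly careful about the boundary behaviour of conditional mutual information as some probabilities vanish at $\epsilon\in\{0,1\}$.
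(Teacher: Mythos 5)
Your proof is correct and its overall architecture matches the paper's: the identity is obtained exactly as in the paper, by sandwiching $I(XY;Z)=I(XUV;Z)$ (using $Y=f(U,V)$ in one direction and the Markov chain $(X,U,V)-(X,Y)-Z$ in the other) and then applying the chain rule in the order $U,X,V$; and the image statement is obtained from the endpoint evaluations $R_1(0)=I(X;Z)$, $R_1(1)=I(X;Z|Y)$ via \eqref{lem1eqn1}--\eqref{lem1eqn2}, the inequality $I(X;Z)\leq I(X;YZ)=I(X;Z|Y)$ (independence of $X$ and $Y$), and the intermediate value theorem. The only place you diverge is the continuity of $R_1$: the paper writes $R_1=I(X;Z|U)=I(X;ZU)$ (independence of $X$ and $U$) and then simply invokes \cite[Lemma 6]{grant2001rate} for continuity of $I(X;ZU)$ in $\epsilon$, whereas you argue it from scratch, showing via the binary structure of $\mathcal{Y}$, $f=\max$, the constraint $p_{f(U_\epsilon,V_\epsilon)}=p_Y$, and the assumed continuity of $p_{f(U_\epsilon,V_\epsilon)|U_\epsilon}$ that the law $p_{XZU_\epsilon}$ varies continuously with $\epsilon$, and then using continuity of mutual information on a fixed finite alphabet. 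Your route has the merit of being self-contained for the binary case treated here (modulo the degenerate case $p_Y(1)=1$ and the definedness of the conditional when $p_{U_\epsilon}(0)=0$, both of which are trivial or handled by the construction in \cite[Example 3]{grant2001rate}), while the paper's citation-based step inherits the generality of the rate-splitting machinery of \cite{grant2001rate} without re-deriving it; either justification suffices for the lemma as stated.
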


Fix a point $(R_1,R_2)$ in $\mathcal{D}$. By Lemma \ref{lem1}, there exists a joint probability distribution $q_{UVXYZ}$ over $\mathcal{Y}\times \mathcal{Y}\times \mathcal{X}\times \mathcal{Y}\times \mathcal{Z}$ such that $R_1= I(X;Z|U)$, $R_2 = R_U +R_V$ with  $R_U =I(U;Z)$ and $R_V = I(V;Z|UX)$. We provide next a coding scheme that will be shown to achieve the point $(R_1,R_2)$.  \textcolor{black}{The encoding scheme operates over $k \in \mathbb{N}$ blocks of length $N$ and is described in Algorithms~$\ref{alg:encoding_11}$ and~$\ref{alg:encoding_21}$. A high level description of the encoding scheme is as follows. For the first transmitter, we perform source resolvability for the discrete memoryless source $(\mathcal{X}, q_X)$ using randomness with rate $H(X)$ in Block~$1$. Using Lemma $\ref{lem1}$, we perform rate splitting for the second transmitter to get two virtual users such that one virtual user is associated with 
the discrete memoryless source  $(\mathcal{Y}, q_U)$ and the other virtual user is associated with the discrete memoryless source $(\mathcal{Y}, q_V)$. Then, we perform source resolvability with rates $H(U)$ and $H(V)$ for the discrete memoryless sources$(\mathcal{Y}, q_U)$ and $(\mathcal{Y}, q_V)$, respectively.
For the next encoding blocks, we proceed as in Block  $1$ using source resolvability and rate splitting except that part of the randomness is now recycled from the previous block. More precisely, we recycle the bits of randomness used at the inputs of the channel in the previous block that are almost independent from the channel output. The rates of those bits will be shown to approach  $H(X|UZ)$, $H(U|Z)$, $H(V|UZX)$ for User $1$ and the two virtual users,~respectively. }
\begin{itemize}
    \item The encoding at Transmitter $1$ is described in Algorithm~\ref{alg:encoding_11} and uses
    \begin{itemize}
    \item A hash function $G_X:\{0,1\}^{N} \longrightarrow \{0,1\}^{r_X}$ chosen uniformly at random in a family of two-universal hash functions, where the output length of the hash function $G_X$ is defined as follows 
    \begin{align}
        {r_X}\triangleq N(H(X|UZ)- \epsilon_1 /2),\label{eqnrX}
    \end{align}
    where $\epsilon_1 \triangleq 2( \delta_{\mathcal{A}}(N)+\xi)$, $\delta_{\mathcal{A}}(N)\triangleq \log (|\mathcal{Y}|^2|\mathcal{X}|+3)\sqrt{\frac{2}{N}(3+\log N)}$, $\xi>0$.
\item A source resolvability code  for the discrete memoryless source $(\mathcal{X},q_{X})$ with encoder function $e_N^X$ and rate $H(X)+\frac{\epsilon_1 }{2}$, such that the distribution of the encoder output $\widetilde{p}_{X^{1:N}}$ satisfies $\mathbb{V}(\widetilde{p}_{X^{1:N}},q_{X^{1:N}})\leq \delta(N)$,  where $\delta(N)$ is such that $\lim_{N \to +\infty} \delta(N) =0$.
\end{itemize}
In Algorithm~\ref{alg:encoding_11}, the hash function output $\widetilde{E}_i$, $i\in \llbracket 2,k\rrbracket$, with length $r_X$ corresponds to recycled randomness from Block $i-1$. 

\newcommand{\floor}[1]{\left\lfloor #1 \right\rfloor}

\item The encoding at Transmitter $2$ is described in Algorithm~\ref{alg:encoding_21} and uses
\begin{itemize}
    \item Two hash functions $G_U:\{0,1\}^{N} \longrightarrow \{0,1\}^{r_U}$ and $G_V:\{0,1\}^{N} \longrightarrow \{0,1\}^{r_V}$ chosen uniformly at random in families of two-universal hash functions, where  the output lengths of the hash functions  $G_U$ and $G_V$ are defined as follows
\begin{align}
   \nonumber \quad r_U & \triangleq N(H(U|Z)- \epsilon_1 / 2),\\  \quad r_V & \triangleq N(H(V|UZX)- \epsilon_1 /2) \label{def1}.
\end{align}
    \item A source resolvability code for the discrete memoryless source  $(\mathcal{U},q_{U})$ with encoding function $e_N^U$ and rate $H(U)+\frac{\epsilon_1 }{2}$,  such that the distribution of the encoder output $\widetilde{p}_{U^{1:N}}$ satisfies $\mathbb{V}(\widetilde{p}_{U^{1:N}},q_{U^{1:N}})\leq \delta(N)$, where $\delta(N)$ is such that $\lim_{N \to +\infty} \delta(N) =0$. 
\item A source resolvability code for the discrete memoryless source  $(\mathcal{V},q_{V})$ with encoding function $e_N^V$ and rate $H(V)+\frac{\epsilon_1 }{2}$, such that  the distribution of the encoder output $\widetilde{p}_{V^{1:N}}$ satisfies $\mathbb{V}(\widetilde{p}_{V^{1:N}},q_{V^{1:N}})\leq \delta(N)$, where $\delta(N)$ is such that $\lim_{N \to +\infty} \delta(N) =0$. \end{itemize}
In Algorithm~\ref{alg:encoding_21}, the hash function outputs $\widetilde{D}_i$ and $\widetilde{F}_i$, $i\in \llbracket 2,k\rrbracket$, with lengths $r_U$ and $r_V$, respectively, correspond to recycled randomness from Block $i-1$. 

\end{itemize} 
The dependencies between the random variables involved in Algorithms \ref{alg:encoding_11} and \ref{alg:encoding_21} are represented in Figure~\ref{figFGD}.

 \begin{algorithm}[h]
  \caption{Encoding algorithm at Transmitter $1$ in Case 1}
  \label{alg:encoding_11}
  \begin{algorithmic}   [1] 
     \REQUIRE A vector $ E_1$ of $N (H(X)+\epsilon_1)$ uniformly distributed bits,   and for $i \in \llbracket 2,k \rrbracket$, a vector $E_{i}$ of $N (I(X; UZ)+\epsilon_{1})$ uniformly distributed bits. 
                                 \FOR{Block $i=1$ to $k$}
                         \IF{$i=1$}
                         \STATE Define $\widetilde X_1^{1:N}\triangleq e^X_N(E_1)$ 
                         \ELSIF{$i>1$}
      \STATE  Define $\widetilde E_i \triangleq G_X (\widetilde X_{i-1}^{1:N})$
       \STATE Define $\widetilde X_i^{1:N}\triangleq e^X_N(\widetilde E_i \lVert E_i)$, where $\lVert$ denotes concatenation
                                  \ENDIF
                                  \STATE Send $\widetilde X_i^{1:N}$  over the channel
                        \ENDFOR
       \end{algorithmic}
\end{algorithm}
\begin{algorithm}[h]
  \caption{Encoding algorithm at Transmitter  $2$ in Case 1}
  \label{alg:encoding_21}
  \begin{algorithmic}   [1] 
    \REQUIRE A vector $ D_1$ of $N (H(U)+\epsilon_1)$ uniformly distributed bits,   and for $i \in \llbracket 2,k \rrbracket$, a vector $D_{i}$ of $N (I(U; Z)+\epsilon_{1})$ uniformly distributed bits. A vector $ F_1$ of $N (H(V)+\epsilon_1)$ uniformly distributed bits, and for $i \in \llbracket 2,k \rrbracket$, a vector $F_{i}$ of $N (I(V; UZX)+\epsilon_{1})$ uniformly distributed bits.
                                 \FOR{Block $i=1$ to $k$}
                         \IF{$i=1$}
                         \STATE Define $\widetilde U_1^{1:N}\triangleq e^U_N(D_1)$
                        and $\widetilde V_1^{1:N}\triangleq e^V_N(F_1)$
                         \ELSIF{$i>1$}
      \STATE  Define $\widetilde D_i \triangleq G_U (\widetilde U_{i-1}^{1:N})$ and $\widetilde F_i \triangleq G_V (\widetilde V_{i-1}^{1:N})$
       \STATE Define $\widetilde U_i^{1:N}\triangleq e^U_N(\widetilde D_i \lVert D_i)$
         and $\widetilde V_i^{1:N}\triangleq e^V_N(\widetilde F_i \lVert F_i)$ \STATE \vspace*{-1em}Define $\widetilde Y_i^{1:N}\triangleq f(\widetilde U_i^{1:N},\widetilde V_i^{1:N})$, where $f$ is defined in Lemma \ref{lem1}
                                  \ENDIF
                                  \STATE Send $\widetilde Y_i^{1:N}$  over the channel
                        \ENDFOR
        \end{algorithmic}
\end{algorithm}
\begin{figure}
\centering
  \includegraphics[width=8.5 cm]{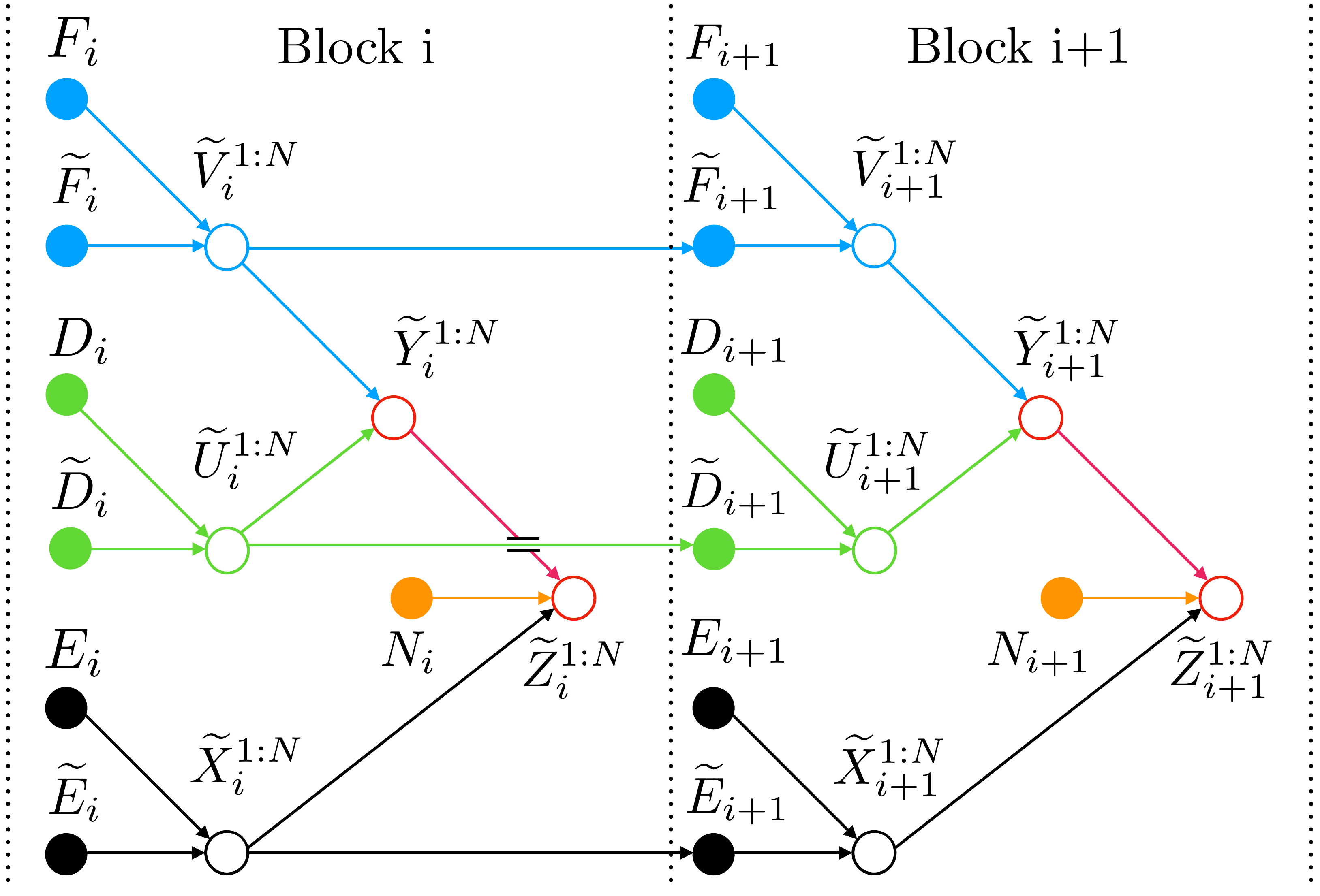}
  \caption{Dependence graph for the random variables involved in the encoding for Case 1. $N_i$, $i \in \llbracket 1 ,k \rrbracket$, is the channel noise corresponding to the transmission over Block $i$.  For Block $i\in \llbracket 2, k \rrbracket$, $(D_i, \widetilde{D}_{i}$), $(F_i, \widetilde{F}_{i}$), $(E_i, \widetilde{E}_{i}$) are the random sequences used at the encoders to form $\widetilde{U}^{1:N}_i$, $\widetilde{V}^{1:N}_i$, $\widetilde{X}^{1:N}_i$, respectively.
  }
  \label{figFGD}
\end{figure}
\subsection{Encoding Scheme for Case 2} \label{sec:sub2}\textcolor{black}{
The encoding scheme for Case $2$ is same as the encoding for Case $1$ with the substitutions $ U\leftarrow \emptyset$ and $V\leftarrow Y$. }
\section{Coding Scheme Analysis}\label{sec:coding scheme analysis}
\subsection{Coding Scheme Analysis for Case $1$}

 \textcolor{black}{First, we show that in each encoding Block $i\in \llbracket 1, k \rrbracket$, the random variables $\widetilde U_i^{1:N}, \widetilde V_i^{1:N}, \widetilde X_i^{1:N}, \widetilde Y_i^{1:N}, \widetilde Z_i^{1:N}$ 
 induced by the coding scheme approximate well the target distribution $q_{U^{1:N}V^{1:N}X^{1:N}Y^{1:N}Z^{1:N}}$. Then, we  show  that  the  target  output  distribution $q_{Z^{1:kN}}$  is  well approximated jointly over all blocks. To do so, we show that the recycled randomness $\widetilde E_i, \widetilde D_i, \widetilde F_i$ in Block $i\in \llbracket 2, k \rrbracket$ that appears in Line $5$ of Algorithms~\ref{alg:encoding_11} and \ref{alg:encoding_21} is almost independent of the channel output in Block $i-1$. Note that randomness recycling is studied via a distributed version of the leftover hash lemma stated in Lemma~$\ref{lem11}$.
 Finally, we prove that the encoding scheme of Section \ref{sec:sub1} achieves the desired rate-tuple.}
 
       For convenience, define $\widetilde E_{1} \triangleq \emptyset$, $\widetilde D_{1}\triangleq \emptyset$, and $\widetilde F_{1}\triangleq \emptyset$. Let 
       \begin{align}
           \widetilde p_{ E_{i} D_{i} F_{i} X^{1:N}_{i} U^{1:N}_{i} V^{1:N}_{i} Y^{1:N}_{i} Z^{1:N}_{i}} \label{eqnP}
       \end{align}
       denote the joint probability distribution of the random variables $\widetilde E_{i}, \widetilde D_{i}, \widetilde F_{i}, \widetilde X^{1:N}_{i}, \widetilde U^{1:N}_{i}, \widetilde V^{1:N}_{i}, \widetilde Y^{1:N}_{i}$, and $\widetilde Z^{1:N}_{i}$ created in Block $i \in \llbracket 1,k\rrbracket$ of the coding scheme of Section \ref{sec:sub1}.

 We first prove in the following lemma that in Block $i\in \llbracket 2,k \rrbracket$, if the inputs $\widetilde X_{i-1}^{1:N}$, $\widetilde U_{i-1}^{1:N}$, $\widetilde V_{i-1}^{1:N}$ of the hash functions $G_X$, $G_U$, $G_V$, respectively, are replaced by $X^{1:N}$, $U^{1:N}$, $ V^{1:N}$ distributed according to $q_{X^{1:N}U^{1:N}V^{1:N}} \triangleq \prod_{i=1}^N q_{X U V}$, then the output of these hash functions are almost jointly uniformly~distributed. 
    \begin{lem} \label{lem3} 
  Let $ p^{unif}_{\bar E}, p^{unif}_{\bar D}, p^{unif}_{\bar F}$ denote the uniform distributions over $\{0,1\}^{r_X}$, $\{0,1\}^{r_U}$, $\{0,1\}^{r_V}$, respectively. 
  Then, 
       \begin{align*}
           &\mathbb{V} \left(q_{G_{X}({X}^{1:N})G_{U}({U}^{1:N})G_{V}({V}^{1:N})Z^{1:N}}, p^{unif}_{\bar E}p^{unif}_{\bar D}p^{unif}_{\bar F}q_{Z^{1:N}} \right)\\ & \phantom{--}   \leq     \textcolor{black}{\delta^{(0)}(N)},
     \end{align*}
where 
$ \textcolor{black}{\delta^{(0)}(N)}
 \triangleq 2/N + \sqrt{7} \cdot 2^{-\frac{N\xi}{2}}.$
\end{lem}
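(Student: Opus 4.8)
The plan is to derive Lemma~\ref{lem3} from a distributed version of the leftover hash lemma (the ``Lemma~\ref{lem11}'' referenced in the text). The statement asserts joint near-uniformity of the three hash outputs $G_X(X^{1:N})$, $G_U(U^{1:N})$, $G_V(V^{1:N})$ conditioned on the channel output $Z^{1:N}$, where the sources are drawn i.i.d.\ from $q_{XUV}$. The natural tool is a multi-source/distributed leftover hash lemma whose bound is governed by the conditional smooth min-entropies of the blocks $(X^{1:N},U^{1:N},V^{1:N})$ given $Z^{1:N}$, taken in a nested fashion that mirrors the chain $H(X|UZ)$, $H(U|Z)$, $H(V|UZX)$ appearing in \eqref{eqnrX} and \eqref{def1}. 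So the first step is to invoke that distributed leftover hash lemma with the three two-universal families $G_X,G_U,G_V$ and the three output lengths $r_X,r_U,r_V$.

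The second step is to verify the entropy conditions. Because the source is memoryless, the per-block min-entropies concentrate: for any $\xi>0$ and $N$ large, the relevant $\xi$-smooth conditional min-entropies satisfy
\begin{align*}
 H_{\min}^{\xi}(X^{1:N}|U^{1:N}Z^{1:N}) &\geq N(H(X|UZ)-\delta'(N)),\\
 H_{\min}^{\xi}(U^{1:N}|Z^{1:N}) &\geq N(H(U|Z)-\delta'(N)),\\
 H_{\min}^{\xi}(V^{1:N}|U^{1:N}X^{1:N}Z^{1:N}) &\geq N(H(V|UZX)-\delta'(N)),
\end{align*}
for a suitable $\delta'(N)\to 0$; this is the standard asymptotic-equipartition-type estimate, and I would cite the same concentration bound used elsewhere in the paper (the quantity $\delta_{\mathcal A}(N)$ in \eqref{eqnrX} is exactly of this flavor). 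Since $r_X,r_U,r_V$ are each chosen with a $-N\epsilon_1/2$ slack and $\epsilon_1$ dominates $2\delta'(N)$, the gap between each output length and the corresponding min-entropy is at least of order $N\xi$, which makes each leftover-hash term at most $2^{-N\xi/2}$, up to constants, yielding the $\sqrt{7}\cdot 2^{-N\xi/2}$ contribution; the additive $2/N$ term comes from the smoothing parameter / the $\log N$-type correction built into $\delta_{\mathcal A}(N)$.

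The third step is bookkeeping: combine the three conditional-uniformity bounds via the triangle inequality for variational distance, peeling off one hash output at a time and using that conditioning the ``outer'' variables on the hash output of the ``inner'' source changes nothing to first order (this is exactly why the nested conditioning $UZ$, then $Z$, then $UZX$ is the right order, and why $q_{Z^{1:N}}$ survives untouched on the right-hand side). Assembling the terms gives the claimed $\delta^{(0)}(N)=2/N+\sqrt{7}\cdot 2^{-N\xi/2}$.

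The main obstacle I anticipate is the \emph{distributed} aspect: an ordinary leftover hash lemma handles one hash function, but here three hash functions act on three correlated sources and we need all three outputs jointly uniform given $Z^{1:N}$. The correct formulation must allow the ``auxiliary'' side information for $G_V$ to include both $X^{1:N}$ and $U^{1:N}$ (not just $Z^{1:N}$), and for $G_X$ to include $U^{1:N}$; getting the conditioning structure to match the contrapolymatroid rate constraints, and ensuring the error terms add rather than compound multiplicatively, is the delicate part. I would isolate this as the separate Lemma~\ref{lem11} (a distributed leftover hash lemma) and prove Lemma~\ref{lem3} as a direct corollary by plugging in the min-entropy estimates above.
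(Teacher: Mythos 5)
Your overall architecture (smooth the i.i.d.\ distribution, apply a leftover-hash-type lemma, pay an order-$1/N$ smoothing price via the triangle inequality) matches the paper, but the key step is not the one you describe, and the central verification is left undone. The distributed hash lemma actually available, Lemma \ref{lem11}, conditions every min-entropy on the \emph{common} side information $Z^{1:N}$ only, and joint near-uniformity of the three outputs requires the condition $r_{\mathcal S}\le H_{\infty}(w_{T_{\mathcal S}^{1:N}Z^{1:N}}|q_{Z^{1:N}})$ (up to the slack) for \emph{all seven} nonempty subsets $\mathcal S\subseteq\{X,U,V\}$ — not the three nested conditions you list. The paper's proof is precisely the verification of these seven constraints: with $r_X,r_U,r_V$ from \eqref{eqnrX} and \eqref{def1} and the bound $H_{\infty}\ge NH(T_{\mathcal S}|Z)-N\delta_{\mathcal A}(N)$ from Lemma \ref{lems1}, the chain rule $H(U|Z)+H(X|UZ)+H(V|UXZ)=H(XUV|Z)$ and nonnegativity of conditional mutual information (e.g.\ $r_X-NH(X|Z)=-NI(X;U|Z)-N\epsilon_1/2$) make every exponent at most $-N\xi$ because $\epsilon_1=2(\delta_{\mathcal A}(N)+\xi)$; the seven terms under a single square root are exactly where $\sqrt{7}\cdot2^{-N\xi/2}$ comes from, and the $2/N$ is two uses of the $1/N$-close subnormalized function (i.e.\ $\epsilon=1/N$ in Lemma \ref{lems1}), not a $\log N$-type correction. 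None of this subset bookkeeping appears in your proposal; you defer it to a hash lemma with per-user nested side information ($UZ$ for $G_X$, $UXZ$ for $G_V$), but that is not Lemma \ref{lem11} and is not an off-the-shelf statement — proving it is essentially the content you skipped.

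Your alternative of peeling off one hash output at a time with ordinary conditional leftover hashing can be made rigorous: hash $U^{1:N}$ against side information $Z^{1:N}$, then $X^{1:N}$ against $(U^{1:N},Z^{1:N})$ using $H_{\infty}(X^{1:N}\mid G_U(U^{1:N}),Z^{1:N})\ge H_{\infty}(X^{1:N}\mid U^{1:N},Z^{1:N})$, then $V^{1:N}$ against $(U^{1:N},X^{1:N},Z^{1:N})$, combining by the triangle inequality and data processing. But the step you describe as ``changes nothing to first order'' is exactly this monotonicity/data-processing argument and must be carried out, and the resulting bound is a sum of three hash terms plus several smoothing terms — of order $3\cdot2^{-N\xi/2}$ plus a multiple of $1/N$ — which has the right asymptotics but does not establish the stated $\delta^{(0)}(N)=2/N+\sqrt{7}\cdot2^{-N\xi/2}$. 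So as written the proposal neither matches the paper's argument nor proves the lemma with its claimed constant; to fix it, either adopt the subset-sum formulation of Lemma \ref{lem11} and do the seven-exponent computation, or fully execute the chained argument and restate the bound accordingly.
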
                 \begin{proof}      
Define $\mathcal{A} \triangleq \{U,V,X\}$ and, for any $ \mathcal{S}\subseteq \mathcal{A}$, define 
 $T^{}_{\mathcal{S}}\triangleq ( W^{})_{W \in \mathcal{S}}.$ Hence,      we have 
  \begin{align*}
     T_{\mathcal{A}}^{1:N}&= ({{X}}^{1:N},{{U}}^{1:N},{{V}}^{1:N}),\\
     q_{T_{\mathcal{A}}^{1:N}Z^{1:N}}&=q_{ X^{1:N}  U^{1:N} V^{1:N}  Z^{1:N}}. 
 \end{align*}
Then, by Lemma \ref{lems1} in Appendix~\ref{appA}, applied to the product distribution $q_{T_{\mathcal{A}}^{1:N}Z^{1:N}}$, there exists a subnormalized non-negative function $w_{{T_{\mathcal{A}}^{1:N}}Z^{1:N}}$ such that, for any $\mathcal{S}\subseteq \mathcal{A}$,
\begin{align}
         \mathbb{V}({w}_{ X^{1:N}  U^{1:N}  V^{1:N}  Z^{1:N}}, q_{ X^{1:N}  U^{1:N} V^{1:N}  Z^{1:N}}) \leq   1/N ,\label{lem3eqn3}\\ 
           H_{\infty}({w}_{{{T}_{\mathcal{S}}^{1:N}} Z^{1:N}}|q_{ Z^{1:N}})\geq\!\! N H({{ T}_{\mathcal{S}}}| Z) -N \delta_{\mathcal{S}}(N),\label{lem3eqn4}
       \end{align} \textcolor{black}{where the min-entropy $H_{\infty}({w}_{{{T}_{\mathcal{S}}^{1:N}} Z^{1:N}}|q_{ Z^{1:N}})$ is defined in Lemma \ref{lems1} in Appendix~\ref{appA}, and $\delta_{\mathcal{S}}(N)\triangleq \log (\lvert\mathcal{T}_{\mathcal{S}}\rvert+3) \sqrt{\frac{2}{N}(3+\log N)}$  with $ \mathcal{T}^{}_{\mathcal{S}}$ is the domain over which ${T}_{\mathcal{S}}$ is defined.} Next, let $q_{EDF}$ define the joint distribution of 
          \begin{align}
              E \triangleq G_X ( X^{1:N}), D \triangleq G_U ( U^{1:N}), F \triangleq G_V ( V^{1:N}),\label{eqnedf}
          \end{align}
          where $U^{1:N}$, $V^{1:N}$, and $X^{1:N}$ are distributed according to $q_{U^{1:N}V^{1:N}X^{1:N}}$. Then, we have
               \begin{align*}\nonumber
      &  \mathbb{V}(q_{EDFZ^{1:N}},  p^{unif}_{\bar E}p^{unif}_{\bar D}p^{unif}_{\bar F}q_{Z^{1:N}})  \\ \nonumber 
      &\stackrel{(a)}\leq \mathbb{V}(q_{EDFZ^{1:N}}, w_{EDF Z^{1:N}})\\\nonumber & \phantom{--}+   \mathbb{V}(w_{EDFZ^{1:N}},  p^{unif}_{\bar E}p^{unif}_{\bar D}p^{unif}_{\bar F}q_{Z^{1:N}})\\ \nonumber
       &\stackrel{(b)}= \mathbb{V}(q_{G_{X}({X}^{1:N})G_{U}({U}^{1:N})G_{V}({V}^{1:N})Z^{1:N}},\\ \nonumber & \phantom{----} w_{G_{X}({X}^{1:N})G_{U}({U}^{1:N})G_{V}({V}^{1:N}) Z^{1:N}})  \\\nonumber & \phantom{--} +\mathbb{V}(w_{EDFZ^{1:N}},  p^{unif}_{\bar E}p^{unif}_{\bar D}p^{unif}_{\bar F}q_{Z^{1:N}})  \\ \nonumber
       & \stackrel{(c)} \leq \mathbb{V}( q_{ X^{1:N}  U^{1:N} V^{1:N}Z^{1:N}},{w}_{ X^{1:N}  U^{1:N}  V^{1:N} Z^{1:N} })\\\nonumber & \phantom{--}+\mathbb{V}(w_{EDFZ^{1:N}},  p^{unif}_{\bar E}p^{unif}_{\bar D}p^{unif}_{\bar F}q_{Z^{1:N}}) \\ \nonumber
                 & \stackrel{(d)} \leq 1/N +   \mathbb{V}(w_{EDF Z^{1:N}},  p^{unif}_{\bar E}p^{unif}_{\bar D}p^{unif}_{\bar F} w_{Z^{1:N}})\\\nonumber & \phantom{--}+ \mathbb{V}( p^{unif}_{\bar E}p^{unif}_{\bar D}p^{unif}_{\bar F} w_{Z^{1:N}},  p^{unif}_{\bar E}p^{unif}_{\bar D}p^{unif}_{\bar F} q_{Z^{1:N}}) \\ \nonumber
           & \stackrel{(e)} \leq 2/N +   \mathbb{V}(w_{EDF Z^{1:N}},  p^{unif}_{\bar E}p^{unif}_{\bar D}p^{unif}_{\bar F} w_{Z^{1:N}}) \\ \nonumber
      & \stackrel{(f)}  \leq 2/N +  \sqrt{{ \displaystyle\sum_{\substack{{\mathcal S\subseteq\mathcal A},  {\mathcal S \neq \emptyset}}}}2^{r_{\mathcal S}-H_{\infty}({w}_{{{T}_{\mathcal{S}}^{1:N}} Z^{1:N}}|q_{ Z^{1:N}})}}\\\nonumber
            &\stackrel{(g)}\leq 2/N + {{{\sqrt{{ \displaystyle\sum_{\substack{{\mathcal S\subseteq\mathcal A}, {\mathcal S \neq \emptyset}}}}2^{r_{\mathcal S}-NH( T_{\mathcal S}|Z)+N \delta_{\mathcal{S}}(N)}}}}}\\ 
     &\stackrel{(h)}\leq 2/N + {{{\sqrt{{ \displaystyle\sum_{\substack{{\mathcal S\subseteq\mathcal A}, {\mathcal S \neq \emptyset}}}}2^{r_{\mathcal S}-NH( T_{\mathcal S}|Z)+N \delta_{\mathcal{A}}(N)}}}}} 
         \end{align*}
    where $(a)$ holds by the triangle inequality, $(b)$ holds by \eqref{eqnedf}, $(c)$ holds by the data processing inequality, $(d)$  holds by (\ref{lem3eqn3}) and the triangle inequality,   $(e)$ holds by (\ref{lem3eqn3}), $(f)$ holds by Lemma \ref{lem11} in Appendix \ref{appA} \textcolor{black}{and $r_{\mathcal{S}}\triangleq \sum_{i \in \mathcal{S} }r_i$ similar to the notation of Lemma \ref{lem11}}, $(g)$ holds by  (\ref{lem3eqn4}), $(h)$ holds because for any $\mathcal{S}\subseteq \mathcal{A}$, $\delta_{\mathcal{S}}(N) \leq \delta_{\mathcal{A}}(N)$. Next, we have
           \begin{align}
                &{{{\sqrt{{ \displaystyle\sum_{\substack{{\mathcal S\subseteq\mathcal A}, {\mathcal S \neq \emptyset}}}}2^{r_{\mathcal S}-NH( T_{\mathcal S}|Z)+N \delta_{\mathcal{A}}(N)}}}}}   \nonumber \\ \nonumber
      &\stackrel{(a)} = \left({{{{2^{N(H(X|UZ)-\frac{\epsilon_1}{2})-NH( X|Z)}}}}}\right.\\\nonumber & \phantom{--}+ 2^{N(H(U|Z)-\frac{\epsilon_1}{2})-N H( U|Z)}  \\ \nonumber  
      & \phantom{--}+2^{N(H(V|UZX)-\frac{\epsilon_1}{2})-N H(V|Z)}  \\ \nonumber  
      & \phantom{--}+ 2^{N(H(X|UZ)-\frac{\epsilon_1}{2})+N(H(U|Z)-\frac{\epsilon_1}{2})-N H(XU|Z)} \\\nonumber
      & \phantom{--} + 2^{N(H(U|Z)-\frac{\epsilon_1}{2})+N(H(V|UZX)-\frac{\epsilon_1}{2})-NH(UV|Z)}\\\nonumber 
     & \phantom{--} +   2^{N(H(V|UZX)-\frac{\epsilon_1}{2})+N(H(X|UZ)-\frac{\epsilon_1}{2})-N H(VX|Z)}  \\ \nonumber  
   &\phantom{--} +2^{N(H(X|UZ)-\frac{\epsilon_1}{2})+N(H(U|Z)-\frac{\epsilon_1}{2})+N(H(V|UZX)-\frac{\epsilon_1}{2})}\  \\ \nonumber  
      & \phantom{---} \left. \times 2^{-N H(XUV|Z)}\right)^{\frac{1}{2}}   \times 2^{\frac{1}{2}N \delta_{\mathcal{A}}(N)}\\\nonumber
   & \stackrel {(b)}= \left({{{{2^{-NI(X;U|Z)-N\frac{\epsilon_1}{2}}}}}}+ 2^{-N\frac{\epsilon_1}{2}}+  2^{-NI(V;UX|Z)-N\frac{\epsilon_1}{2}}\right.\\\nonumber
     & \phantom{--} +2^{-N{\epsilon_1}} +2^{-N{\epsilon_1}-NI(V;X|UZ)} +2^{-N\frac{3\epsilon_1}{2}} \\ \nonumber  
      & \phantom{--} \left. +2^{-NI(V;U|ZX)-NI(X;U|Z)-N{\epsilon_1}} \right)^{\frac{1}{2}} \times 2^{\frac{1}{2}N \delta_{\mathcal{A}}(N)}\\\nonumber
    & \stackrel {(c)}\leq \delta^{(0)}(N) - 2/N \xrightarrow{N \to {+}\infty}0,
           \end{align}
           where $(a)$ holds by $(\ref{eqnrX})$ and $(\ref{def1})$, $(b)$ holds by the definition of mutual information and the chain rule for entropy, $(c)$ holds by the definition of $\delta^{(0)}(N)$ and because $\epsilon_1= 2( \delta_{\mathcal{A}}(N)+\xi)$. 
    \end{proof}
        We now show that in each encoding block, the random variables induced by the coding scheme approximate well the target distribution.
          \begin{lem} \label{lemblock}
           For Block $i\in \llbracket 1, k \rrbracket$, we have
 \begin{align*}
&\mathbb{V}(\widetilde p_{U_i^{1:N}V_i^{1:N}X_i^{1:N}Y_i^{1:N}Z_i^{1:N}}, q_{U^{1:N}V^{1:N}X^{1:N}Y^{1:N}Z^{1:N}})\leq \!\delta_i(N),
 \end{align*}
where $\delta_i(N) \triangleq \frac{3}{2}(\delta(N) + \textcolor{black}{\delta^{(0)}(N)}) ({3^{i}-1}{}) + 3^{i+1} \delta(N)$.
 \end{lem}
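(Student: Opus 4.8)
The plan is to argue by induction on the block index $i$, tracking how errors accumulate across blocks. The base case $i=1$ is the simplest: in Block $1$, $\widetilde X_1^{1:N}=e_N^X(E_1)$, $\widetilde U_1^{1:N}=e_N^U(D_1)$, $\widetilde V_1^{1:N}=e_N^V(F_1)$ are produced by the three source resolvability codes fed with independent uniform randomness, so by construction each marginal is within $\delta(N)$ of the corresponding target, and since $E_1,D_1,F_1$ are independent, the joint distribution $\widetilde p_{U_1^{1:N}V_1^{1:N}X_1^{1:N}}$ is within roughly $3\delta(N)$ of $q_{U^{1:N}}q_{V^{1:N}}q_{X^{1:N}}=q_{U^{1:N}V^{1:N}X^{1:N}}$ by the triangle inequality and a union-type bound over the three independent components. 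Appending the deterministic map $\widetilde Y_1^{1:N}=f(\widetilde U_1^{1:N},\widetilde V_1^{1:N})$ and then passing through the memoryless channel $q_{Z|XY}$ are both data-processing steps that do not increase variational distance, so the claim for $i=1$ follows with $\delta_1(N)$ as specified (the constant $3^{2}\delta(N)$ absorbs the base accumulation).

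For the inductive step, assume the bound holds for Block $i-1$. The key point is that in Block $i$ the encoders are fed $(\widetilde E_i\Vert E_i)$, $(\widetilde D_i\Vert D_i)$, $(\widetilde F_i\Vert F_i)$, where the recycled parts $\widetilde E_i=G_X(\widetilde X_{i-1}^{1:N})$, $\widetilde D_i=G_U(\widetilde U_{i-1}^{1:N})$, $\widetilde F_i=G_V(\widetilde V_{i-1}^{1:N})$ are \emph{not} exactly uniform, but are close to it. I would first replace $\widetilde p_{X_{i-1}^{1:N}U_{i-1}^{1:N}V_{i-1}^{1:N}}$ by the target product distribution $q_{X^{1:N}U^{1:N}V^{1:N}}$, incurring a cost of $\delta_{i-1}(N)$ by the induction hypothesis (this is a data-processing/coupling step, since everything downstream — hashing, re-encoding, channel — is a fixed stochastic map). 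Under the target input distribution, Lemma \ref{lem3} gives that $(G_X(X^{1:N}),G_U(U^{1:N}),G_V(V^{1:N}),Z^{1:N})$ is within $\delta^{(0)}(N)$ of $p^{unif}_{\bar E}p^{unif}_{\bar D}p^{unif}_{\bar F}q_{Z^{1:N}}$; in particular the recycled bits are jointly nearly uniform and nearly independent of the fresh bits $E_i,D_i,F_i$ (which are exactly uniform and independent of everything). So with total cost $\delta_{i-1}(N)+\delta^{(0)}(N)$ I may replace the actual inputs to $e_N^X,e_N^U,e_N^V$ by genuinely uniform independent sequences of the right lengths, which is exactly the Block-$1$ situation. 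Applying the Block-$1$ argument then adds a further $O(\delta(N))$, and passing through $f$ and the channel is again free. The recursion $\delta_i(N)=\tfrac{3}{2}(\delta_{i-1}(N)) + (\text{terms of order }\delta(N)+\delta^{(0)}(N))$ — more precisely the factor-$3$ blow-up comes from handling the three hash outputs / three encoder inputs separately via triangle inequalities — unrolls to the closed form $\delta_i(N)=\tfrac{3}{2}(\delta(N)+\delta^{(0)}(N))(3^{i}-1)+3^{i+1}\delta(N)$; verifying this is the bookkeeping, not the substance.

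The main obstacle is the coupling/data-processing step that lets one swap the \emph{joint} Block-$(i-1)$ distribution of $(\widetilde X_{i-1}^{1:N},\widetilde U_{i-1}^{1:N},\widetilde V_{i-1}^{1:N})$ for the idealized product target, and then simultaneously invoke Lemma \ref{lem3} on that idealized distribution to conclude near-uniformity of the three hash outputs \emph{together with} their near-independence from the fresh randomness and from $Z_{i-1}^{1:N}$: one has to be careful that Lemma \ref{lem3} is stated for the target distribution $q_{X^{1:N}U^{1:N}V^{1:N}}$, so the induction hypothesis is precisely what is needed to bridge the gap, and that the hash functions $G_X,G_U,G_V$ act on \emph{distinct} coordinates so the distributed leftover-hash argument baked into Lemma \ref{lem3} applies verbatim. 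Once that substitution is justified, the remainder is a routine triangle-inequality accounting exercise that reproduces the stated $\delta_i(N)$.
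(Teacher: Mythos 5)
Your plan follows essentially the same route as the paper: induction over blocks, reducing the per-block joint distance to the distance of the encoder inputs from uniform (after discharging $Y$ and $Z$ by data processing through the fixed maps $f$ and $q_{Z|XY}$), and bounding the recycled bits' deviation from uniform via the induction hypothesis, data processing through $G_X,G_U,G_V$, and Lemma~\ref{lem3}, with the source-resolvability guarantee covering the uniform-input surrogate. One bookkeeping caveat: the recursion actually realized is $\delta_{i+1}(N)=3\left(\delta_i(N)+\delta(N)+\delta^{(0)}(N)\right)$ (the paper incurs the induction-plus-Lemma~\ref{lem3} cost once for each of the three marginals), not $\tfrac{3}{2}\delta_{i-1}(N)+\cdots$ as you wrote; your joint-replacement variant, if carried out, would in fact give an even smaller bound and hence still implies the stated $\delta_i(N)$.
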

  \begin{proof} 
  We prove the result by induction. We first prove that the lemma holds for $i=1$. Remark that
         \begin{align}
 \widetilde p_{Y^{1:N}_1 \vert U_1^{1:N}V_1^{1:N}X_1^{1:N}}\nonumber
 &\stackrel{(a)}=\widetilde p_{Y^{1:N}_1 \vert U_1^{1:N}V_1^{1:N}}\\\nonumber
 &\stackrel{(b)}=q_{Y^{1:N} \vert U^{1:N}V^{1:N}}\\\
 &\stackrel{(c)}=q_{Y^{1:N} \vert U^{1:N}V^{1:N}X^{1:N}}\label{lem4eqn3},
 \end{align}
  where (a) holds because $\widetilde X_1^{1:N}$  is independent from $(\widetilde U^{1:N}_1,\widetilde V^{1:N}_1,\widetilde Y^{1:N}_1)$, (b) holds by the construction of $Y^{1:N}$ and  $\widetilde Y_1^{1:N}$, (c) holds because  $X^{1:N}$ is independent from $(U^{1:N},V^{1:N},Y^{1:N})$. Next, we have
\begin{align}\label{mideqn}
& \mathbb{V}(\widetilde p_{U_1^{1:N}V_1^{1:N}X_1^{1:N}Y_1^{1:N}Z_1^{1:N}}, q_{U^{1:N}V^{1:N}X^{1:N}Y^{1:N}Z^{1:N}}) \nonumber \\\nonumber
&\stackrel{(a)}=\mathbb {V}( \widetilde p_{Z_1^{1:N}|X_1^{1:N}Y_1^{1:N}}\widetilde p_{U_1^{1:N}V_1^{1:N}X_1^{1:N}Y_1^{1:N}},\\ \nonumber  
      & \phantom{--} q_{Z^{1:N}|X^{1:N}Y^{1:N}} q_{U^{1:N}V^{1:N}X^{1:N}Y^{1:N}})\\\nonumber
&\stackrel{(b)}=\mathbb {V}(\widetilde p_{U_1^{1:N}V_1^{1:N}X_1^{1:N}Y_1^{1:N}}, q_{U^{1:N}V^{1:N}X^{1:N}Y^{1:N}}) \nonumber \\\nonumber
&\stackrel{(c)}=\mathbb{V}(\widetilde p_{U_1^{1:N}V_1^{1:N}X_1^{1:N}}, q_{U^{1:N}V^{1:N}X^{1:N}})
\\\nonumber
&\stackrel{(d)}=\mathbb{V}(\widetilde p_{X_1^{1:N}}\widetilde p_{U_1^{1:N}V_1^{1:N}}, q_{X^{1:N}}q_{U^{1:N}V^{1:N}})\\\nonumber
&\stackrel{(e)}\leq \mathbb{V}(\widetilde p_{X_1^{1:N}}\widetilde p_{U_1^{1:N}V_1^{1:N}}, q_{X^{1:N}}\widetilde p_{U_1^{1:N}V_1^{1:N}})\\ \nonumber  
      & \phantom{--}+ \mathbb{V}(q_{X^{1:N}}\widetilde p_{U_1^{1:N}V_1^{1:N}}, q_{X^{1:N}}q_{U^{1:N}V^{1:N}})\\\nonumber
&\stackrel{(f)}=\mathbb{V}(\widetilde p_{X_1^{1:N}}, q_{X^{1:N}})+\mathbb{V}(\widetilde p_{U_1^{1:N}}\widetilde p_{V_1^{1:N}}, q_{U^{1:N}}q_{V^{1:N}})\\\nonumber
&\stackrel{(g)}\leq \mathbb{V}(\widetilde p_{X_1^{1:N}}, q_{X^{1:N}})+\mathbb{V}(\widetilde p_{U_1^{1:N}}\widetilde p_{V_1^{1:N}}, q_{U^{1:N}}\widetilde p_{V_1^{1:N}})\\ \nonumber  
      & \phantom{--}+\mathbb{V}(q_{U^{1:N}}\widetilde p_{V_1^{1:N}}, q_{U^{1:N}}q_{V^{1:N}})\\
&=\mathbb{V}(\widetilde p_{X_1^{1:N}}, q_{X^{1:N}})+\mathbb{V}(\widetilde p_{U_1^{1:N}}, q_{U^{1:N}})+\mathbb{V}(\widetilde p_{V_1^{1:N}}, q_{V^{1:N}})\\\nonumber
&\stackrel{(h)}\leq 3 \delta(N), 
\end{align}
where $(a)$ holds by  the two Markov chains $(U^{1:N},V^{1:N})-(X^{1:N},Y^{1:N})- Z^{1:N}$ and $(\widetilde U_1^{1:N},\widetilde V_1^{1:N})-(\widetilde X^{1:N}_1,\widetilde Y_1^{1:N}) -\widetilde Z_1^{1:N}$, $(b)$ holds because $q_{Z^{1:N}\vert X^{1:N}Y^{1:N} } = \widetilde p_{Z_1^{1:N}\lvert X_1^{1:N}Y_1^{1:N}}$, $(c)$ holds by (\ref{lem4eqn3}), $(d)$ holds because $ X^{1:N}$ is independent from $(U^{1:N}, V^{1:N})$ and $\widetilde X_1^{1:N}$ is independent from $(\widetilde U_1^{1:N}, \widetilde V_1^{1:N})$, $(e)$  holds by the  triangle inequality,  $(f)$ holds because $ U^{1:N}$ is independent from  $V^{1:N}$ and $\widetilde U_1^{1:N}$ is independent from $\widetilde V_1^{1:N}$, $(g)$  holds by the  triangle inequality, $(h)$ holds by the source resolvability codes used at the transmitters because $\frac{|E_1|}{N}>H(X) + \epsilon_1/2, \frac{ | D_1|}{N}>H(U)+ \epsilon_1/2, \frac{|F_1|}{N}>H(V)+ \epsilon_1/2$.

Assume now that, for $i \in \llbracket 2,  k-1 \rrbracket$,
 the lemma holds. For $i\in \llbracket 2,k \rrbracket$, consider $\bar E_{i}, \bar D_{i}, \bar F_{i}$ distributed according to $ {p^{unif}_{\bar E}}, {p^{unif}_{\bar D}}, {p^{unif}_{\bar F}}$, respectively.  Let $p_{\bar X_{i}^{1:N}}, p_{\bar U_{i}^{1:N}}, p_{\bar V_{i}^{1:N}}$ denote the distribution of $\bar X_{i}^{1:N}\triangleq e^X_N(\bar E_{i}, E_{i}), \bar U_{i}^{1:N}\triangleq e^U_N(\bar D_{i}, D_{i}), \bar V_{i}^{1:N}\triangleq e^V_N(\bar F_{i}, F_{i})$, respectively. Then, for $i \in~\llbracket 1, k-1 \rrbracket$, we have
       \begin{align}\nonumber
&\mathbb{V}(\widetilde p_{U_{i+1}^{1:N}V_{i+1}^{1:N}X_{i+1}^{1:N}Y_{i+1}^{1:N}Z_{i+1}^{1:N}}, q_{U^{1:N}V^{1:N}X^{1:N}Y^{1:N}Z^{1:N}})\\\nonumber
&\stackrel{(a)} \leq  \mathbb{V}(\widetilde p_{X_{i+1}^{1:N}}, q_{X^{1:N}})+\mathbb{V}(\widetilde p_{U_{i+1}^{1:N}}, q_{U^{1:N}})+\mathbb{V}(\widetilde p_{V_{i+1}^{1:N}}, q_{V^{1:N}})\\\nonumber
&\stackrel{(b)} \leq \mathbb{V}(\widetilde p_{X_{i+1}^{1:N}},  p_{\bar X_{i+1}^{1:N}})+\mathbb{V}( p_{\bar X_{i+1}^{1:N}}, q_{X^{1:N}})\\ \nonumber  
      & \phantom{--}+\mathbb{V}(\widetilde p_{U_{i+1}^{1:N}},  p_{\bar U_{i+1}^{1:N}}) +\mathbb{V}(p_{\bar U_{i+1}^{1:N}}, q_{U^{1:N}})\\ \nonumber  
      & \phantom{--}+\mathbb{V}(\widetilde p_{V_{i+1}^{1:N}},  p_{\bar V_{i+1}^{1:N}})+ \mathbb{V}( p_{\bar V_{i+1}^{1:N}}, q_{V^{1:N}})\\\nonumber
&\stackrel{(c)}\leq  3 \delta(N)+\mathbb{V}(\widetilde p_{X_{i+1}^{1:N}},  p_{\bar X_{i+1}^{1:N}})+\mathbb{V}(\widetilde p_{U_{i+1}^{1:N}},  p_{\bar U_{i+1}^{1:N}})\\ \nonumber  
      & \phantom{--}+\mathbb{V}(\widetilde p_{V_{i+1}^{1:N}},  p_{\bar V_{i+1}^{1:N}})\\\nonumber 
&\stackrel{(d)}\leq  3 \delta(N)+\mathbb{V}(\widetilde p_{E_{i+1}},  p^{unif}_{\bar E})+\mathbb{V}(\widetilde p_{D_{i+1}},  p^{unif}_{\bar D})\\  
      & \phantom{--}+\mathbb{V}(\widetilde p_{F_{i+1}},  p^{unif}_{\bar F})\label{lem4eqn7},
\end{align}
where $(a)$ holds similar to (\ref{mideqn}), $(b)$ holds by the triangle inequality, $(c)$ holds by the source resolvability codes used at the transmitters because $\frac{|\bar E_i|+|E_i|}{N} = H(X) + \epsilon_1/2, \frac{ |\bar F_i|+|F_i|}{N}=H(V)+ \epsilon_1/2, \frac{|\bar D_i|+|D_i|}{N} = H(U)+ \epsilon_1/2$, $(d)$ holds by the data processing inequality. Next, we have 
\begin{align}\nonumber
    & \max \left(\! \mathbb{V}(\widetilde p_{E_{i+1}},  p^{unif}_{\bar E}),\mathbb{V}(\widetilde p_{D_{i+1}},  p^{unif}_{\bar D}),\mathbb{V}(\widetilde p_{F_{i+1}},  p^{unif}_{\bar F}) \!\right) \\ \nonumber
     & \leq \mathbb{V}(\widetilde p_{E_{i+1}D_{i+1}F_{i+1}},  p^{unif}_{\bar E} p^{unif}_{\bar D} p^{unif}_{\bar F}
     )\\\nonumber
    & \stackrel{(a)}\leq \mathbb{V}(\widetilde p_{E_{i+1}D_{i+1}F_{i+1}},  q_{G_{X}({X}^{1:N})G_{U}({U}^{1:N})G_{V}({V}^{1:N})})\\\nonumber
    & \phantom{--}+\mathbb{V}(q_{G_{X}({X}^{1:N})G_{U}({U}^{1:N})G_{V}({V}^{1:N})},  p^{unif}_{\bar E} p^{unif}_{\bar D} p^{unif}_{\bar F} 
    )\\ 
   \nonumber  & \stackrel{(b)}= \mathbb{V}(\widetilde p_{G_{X}({X}_i^{1:N})G_{U}({U}_i^{1:N})G_{V}({V}_i^{1:N})
   },\\\nonumber
    &\phantom{----}  q_{G_{X}({X}^{1:N})G_{U}({U}^{1:N})G_{V}({V}^{1:N})
   })\\\nonumber
    &\phantom{--}+ \mathbb{V}(q_{G_{X}({X}^{1:N})G_{U}({U}^{1:N})G_{V}({V}_i^{1:N})
    },  p^{unif}_{\bar E}p^{unif}_{\bar D}p^{unif}_{\bar F})
    \\\nonumber
     & \stackrel{(c)}\leq \mathbb{V}(\widetilde p_{{X}_i^{1:N}{U}_i^{1:N}{V}_i^{1:N}
     },  q_{{X}^{1:N}{U}^{1:N}{V}^{1:N}
     })+ \textcolor{black}{\delta^{(0)}(N)}
     \\ 
          & \stackrel{(d)}\leq  \delta_i(N)+ \textcolor{black}{\delta^{(0)}(N)} \label{lem4eqn9},
   \end{align}
where $(a)$ holds by the triangle inequality, \textcolor{black}{$(b)$ holds because $\widetilde E_{i+1}\triangleq G_X(\widetilde X_i^{1:N}), \widetilde D_{i+1}\triangleq G_U(\widetilde U_i^{1:N}), \widetilde F_{i+1}\triangleq G_V(\widetilde V_i^{1:N})$ by Line $5$ of Algorithm \ref{alg:encoding_11} and Algorithm \ref{alg:encoding_21}
,} $(c)$ holds by the data processing inequality and Lemma \ref{lem3}, $(d)$ holds by the induction hypothesis.
By combining (\ref{lem4eqn7}) and \eqref{lem4eqn9}, we~have
\begin{align}\nonumber
   & \mathbb{V}(\widetilde p_{U_{i+1}^{1:N}V_{i+1}^{1:N}X_{i+1}^{1:N}Y_{i+1}^{1:N}Z_{i+1}^{1:N}}, q_{U^{1:N}V^{1:N}X^{1:N}Y^{1:N}Z^{1:N}}) \\\nonumber & \leq 3 ( \delta(N)+\delta_i(N)+ \textcolor{black}{\delta^{(0)}(N)})\\
  \nonumber &=\textcolor{black}{\delta_{i+1}(N)}. \hfill 
     \end{align}\end{proof}
  The next lemma shows that the recycled randomness in Block $i\in \llbracket 2, k \rrbracket$ is almost independent of the channel output in Block $i-1$.
      \begin{lem}\label{lem5}
           For $i\in \llbracket 2, k \rrbracket$, we have
           \begin{align*}
           \mathbb{V}(\widetilde{p}_{{Z}_{i-1}^{1:N} E_i D_i F_i}, \widetilde{p}_{{Z}_{i-1}^{1:N}}\widetilde{p}_{ E_i  D_i F_i}) \leq \delta^{(1)}_i(N),
 \end{align*} 
where $\delta^{(1)}_i(N) \triangleq 4 \delta_{i-1}(N) + 2 \textcolor{black}{\delta^{(0)}(N)}$.
\end{lem}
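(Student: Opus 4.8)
The plan is to reduce the claimed near-independence of the recycled bits in Block~$i$ from the Block-$(i-1)$ channel output to the two facts already established for Block~$i-1$: the blockwise approximation of Lemma~\ref{lemblock}, namely $\mathbb{V}(\widetilde p_{U_{i-1}^{1:N}V_{i-1}^{1:N}X_{i-1}^{1:N}Y_{i-1}^{1:N}Z_{i-1}^{1:N}},q_{U^{1:N}V^{1:N}X^{1:N}Y^{1:N}Z^{1:N}})\leq\delta_{i-1}(N)$, and the distributed leftover hash estimate of Lemma~\ref{lem3}. The key structural observation is that, by Line~$5$ of Algorithms~\ref{alg:encoding_11} and~\ref{alg:encoding_21}, the recycled randomness $(\widetilde E_i,\widetilde D_i,\widetilde F_i)$ equals $(G_X(\widetilde X_{i-1}^{1:N}),G_U(\widetilde U_{i-1}^{1:N}),G_V(\widetilde V_{i-1}^{1:N}))$, so the tuple $(\widetilde Z_{i-1}^{1:N},\widetilde E_i,\widetilde D_i,\widetilde F_i)$ is a deterministic image of $(\widetilde U_{i-1}^{1:N},\widetilde V_{i-1}^{1:N},\widetilde X_{i-1}^{1:N},\widetilde Y_{i-1}^{1:N},\widetilde Z_{i-1}^{1:N})$ under the map $(u,v,x,y,z)\mapsto(z,G_X(x),G_U(u),G_V(v))$. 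All variational distances below are understood in expectation over the random draw of $G_X,G_U,G_V$, for which the triangle and data-processing inequalities used here remain valid by linearity.

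First I would push the bound of Lemma~\ref{lemblock} through the above map with the data-processing inequality, obtaining
\[
\mathbb{V}\big(\widetilde p_{Z_{i-1}^{1:N}E_iD_iF_i},\,q_{Z^{1:N}G_X(X^{1:N})G_U(U^{1:N})G_V(V^{1:N})}\big)\leq\delta_{i-1}(N),
\]
where $(X^{1:N},U^{1:N},V^{1:N})\sim q_{X^{1:N}U^{1:N}V^{1:N}}$, matching the setting of Lemma~\ref{lem3}. Marginalizing the same inequality over the appropriate coordinates also gives $\mathbb{V}(\widetilde p_{Z_{i-1}^{1:N}},q_{Z^{1:N}})\leq\delta_{i-1}(N)$ and $\mathbb{V}(\widetilde p_{E_iD_iF_i},q_{G_X(X^{1:N})G_U(U^{1:N})G_V(V^{1:N})})\leq\delta_{i-1}(N)$.

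Next I would combine these with Lemma~\ref{lem3}, which controls $\mathbb{V}(q_{G_X(X^{1:N})G_U(U^{1:N})G_V(V^{1:N})Z^{1:N}},p^{unif}_{\bar E}p^{unif}_{\bar D}p^{unif}_{\bar F}q_{Z^{1:N}})$ by $\delta^{(0)}(N)$ and, by marginalization, $\mathbb{V}(q_{G_X(X^{1:N})G_U(U^{1:N})G_V(V^{1:N})},p^{unif}_{\bar E}p^{unif}_{\bar D}p^{unif}_{\bar F})$ by $\delta^{(0)}(N)$ as well. Taking $\widetilde p_{Z_{i-1}^{1:N}}p^{unif}_{\bar E}p^{unif}_{\bar D}p^{unif}_{\bar F}$ as an intermediate point, the triangle inequality bounds the target quantity by the sum of (i) $\mathbb{V}(\widetilde p_{Z_{i-1}^{1:N}E_iD_iF_i},\widetilde p_{Z_{i-1}^{1:N}}p^{unif}_{\bar E}p^{unif}_{\bar D}p^{unif}_{\bar F})$, which a further triangle inequality through $q_{Z^{1:N}}p^{unif}_{\bar E}p^{unif}_{\bar D}p^{unif}_{\bar F}$ bounds using the displayed inequality, Lemma~\ref{lem3}, and $\mathbb{V}(\widetilde p_{Z_{i-1}^{1:N}},q_{Z^{1:N}})\leq\delta_{i-1}(N)$; and (ii) $\mathbb{V}(p^{unif}_{\bar E}p^{unif}_{\bar D}p^{unif}_{\bar F},\widetilde p_{E_iD_iF_i})$, which through $q_{G_X(X^{1:N})G_U(U^{1:N})G_V(V^{1:N})}$ is at most $\delta_{i-1}(N)+\delta^{(0)}(N)$. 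Collecting the terms yields a bound of the form $c\,\delta_{i-1}(N)+2\delta^{(0)}(N)$ with $c\le 4$, hence $\delta_i^{(1)}(N)=4\delta_{i-1}(N)+2\delta^{(0)}(N)$ as claimed.

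The computation is essentially bookkeeping; the one point requiring care is that Lemma~\ref{lem3} is phrased for the idealized i.i.d.\ source feeding the hash functions and for the idealized target output $q_{Z^{1:N}}$, whereas the quantity of interest involves the \emph{induced} sequences $\widetilde X_{i-1}^{1:N},\widetilde U_{i-1}^{1:N},\widetilde V_{i-1}^{1:N}$ and the \emph{induced} output $\widetilde Z_{i-1}^{1:N}$ of Block~$i-1$. Lemma~\ref{lemblock} is precisely the bridge between the two, and it must be invoked for the \emph{full} joint tuple before any marginalization, so that the correlations between $\widetilde Z_{i-1}^{1:N}$ and the recycled bits are preserved at each replacement. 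I expect this consistency bookkeeping, rather than any genuinely difficult estimate, to be the main obstacle.
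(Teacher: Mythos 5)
Your proposal is correct and follows essentially the same route as the paper: triangle inequality through the intermediate product $\widetilde p_{Z_{i-1}^{1:N}}p^{unif}_{\bar E}p^{unif}_{\bar D}p^{unif}_{\bar F}$, then data processing through the hash maps to invoke Lemma~\ref{lemblock} for Block $i-1$ together with Lemma~\ref{lem3}. The only (harmless) difference is bookkeeping: the paper bounds the second triangle term by twice the first, giving $4\delta_{i-1}(N)+2\delta^{(0)}(N)$ exactly, whereas your separate marginalization gives the slightly tighter $3\delta_{i-1}(N)+2\delta^{(0)}(N)\leq\delta^{(1)}_i(N)$.
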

 \begin{proof}
We have 
\begin{align}\nonumber
&\mathbb{V}(\widetilde{p}_{{Z}_{i-1}^{1:N} E_i D_i F_i}, \widetilde{p}_{{Z}_{i-1}^{1:N}}\widetilde{p}_{ E_i  D_i F_i})\\\nonumber
& \stackrel{(a)}\leq \mathbb{V}(\widetilde p_{Z_{i-1}^{1:N}E_{i}D_{i}F_{i}},  \widetilde p_{Z_{i-1}^{1:N}} p^{unif}_{\bar E}p^{unif}_{\bar D}p^{unif}_{\bar F})\\\nonumber
    &\phantom{--}+\mathbb{V}(\widetilde p_{Z_{i-1}^{1:N}} p^{unif}_{\bar E}p^{unif}_{\bar D}p^{unif}_{\bar F}, \widetilde{p}_{{Z}_{i-1}^{1:N}}\widetilde{p}_{ E_i  D_i F_i})\\\nonumber
& \leq 2 \mathbb{V}(\widetilde p_{Z_{i-1}^{1:N}E_{i}D_{i}F_{i}},  \widetilde p_{Z_{i-1}^{1:N}} p^{unif}_{\bar E}p^{unif}_{\bar D}p^{unif}_{\bar F})\\\nonumber
 & \stackrel{(b)}\leq 2\left(\mathbb{V}(\widetilde p_{E_{i}D_{i}F_{i}Z^{1:N}_{i-1}},  q_{EDFZ^{1:N}})\right.\\\nonumber
    & \phantom{--}\left.+\mathbb{V}(q_{EDFZ^{1:N}}, p^{unif}_{\bar E} p^{unif}_{\bar D} p^{unif}_{\bar F} q_{Z^{1:N}}\right.
    )\\ \nonumber& \phantom{--}\left.+\mathbb{V}(p^{unif}_{\bar E} p^{unif}_{\bar D} p^{unif}_{\bar F} q_{Z^{1:N}}
    ,  p^{unif}_{\bar E} p^{unif}_{\bar D} p^{unif}_{\bar F} \widetilde p_{Z_{i-1}^{1:N}}
    )\right)\\\nonumber
   & \stackrel{(c)}\leq 2 \left(\mathbb{V}(\widetilde p_{{X}_{i-1}^{1:N}{U}_{i-1}^{1:N}{V}_{i-1}^{1:N}Z_{i-1}^{1:N}},  q_{{X}^{1:N}{U}^{1:N}{V}^{1:N}Z^{1:N}})\right.\\\nonumber
    &\phantom{--}\left.+ \mathbb{V}(q_{EDFZ^{1:N}
    },  p^{unif}_{\bar E}p^{unif}_{\bar D}p^{unif}_{\bar F}q_{Z^{1:N}}
    )\right.\\\nonumber
    &\phantom{--}\left.+ \mathbb{V}(q_{Z^{1:N}}, \widetilde p_{Z_{i-1}^{1:N}})\right)
    \\
\nonumber& \stackrel{(d)}\leq 2 (2\mathbb{V}(\widetilde p_{{X}_{i-1}^{1:N}{U}_{i-1}^{1:N}{V}_{i-1}^{1:N}Z_{i-1}^{1:N}},  q_{{X}^{1:N}{U}^{1:N}{V}^{1:N}Z^{1:N}})\! +\! \textcolor{black}{\delta^{(0)}(N)}) \\
& \stackrel{(e)}\leq 4 \delta_{i-1}(N) + 2 \textcolor{black}{\delta^{(0)}(N)},\nonumber
 \end{align}
 where $(a)$ and $(b)$ hold by the triangle inequality, $(c)$ holds by the data processing inequality using \eqref{eqnedf} and $\widetilde E_{i}\triangleq G_X(\widetilde X_{i-1}^{1:N}), \widetilde D_{i}\triangleq G_U(\widetilde U_{i-1}^{1:N}), \widetilde F_{i}\triangleq G_V(\widetilde V_{i-1}^{1:N})$ from Line $5$ of Algorithm \ref{alg:encoding_11} and Algorithm \ref{alg:encoding_21},  $(d)$~holds by \eqref{eqnedf} and Lemma~\ref{lem3}, $(e)$~holds by Lemma~\ref{lemblock}.
  \end{proof}  
    The next lemma shows that the recycled randomness in Block $i\in \llbracket 2, k \rrbracket$ is almost independent of the channel outputs in Blocks $1$ to $i-1$ considered jointly.
\begin{lem} \label{lemnew}
   For $i\in \llbracket 2, k \rrbracket$, we have$$
\mathbb{V} \left( \widetilde{p}_{Z_{1:i-1}^{1:N}D_i E_iF_i},   \widetilde{p}_{Z_{1:i-1}^{1:N}} \widetilde{p}_{ D_i E_iF_i } \right) \leq \delta_i^{(2)}(N),
$$
where $\delta_i^{(2)}(N)\triangleq (2^{i-1}-1) (4 \delta_{i-1}(N)+ 2 \textcolor{black}{\delta^{(0)}(N)})$.
\end{lem}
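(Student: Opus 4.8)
The plan is to prove the statement by induction on $i \in \llbracket 2, k \rrbracket$. The base case $i = 2$ is exactly Lemma~\ref{lem5}, since then $Z_{1:i-1}^{1:N}$ reduces to $Z_1^{1:N}$ and $\delta_2^{(2)}(N) = 4\delta_1(N) + 2\delta^{(0)}(N) = \delta_2^{(1)}(N)$. The engine of the inductive step is a ``screening'' property of the block-Markov construction of Algorithms~\ref{alg:encoding_11} and~\ref{alg:encoding_21} (see Figure~\ref{figFGD}): conditioned on the recycled randomness $\widetilde D_i, \widetilde E_i, \widetilde F_i$ fed into Block~$i$, the output $\widetilde Z_i^{1:N}$ of Block~$i$ together with the recycled randomness $\widetilde D_{i+1}, \widetilde E_{i+1}, \widetilde F_{i+1}$ of Block~$i+1$ is independent of $\widetilde Z_{1:i-1}^{1:N}$ under $\widetilde p$. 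Indeed, by Lines~5--6 of Algorithm~\ref{alg:encoding_11} and Lines~5--7 of Algorithm~\ref{alg:encoding_21}, the triple $(\widetilde X_i^{1:N}, \widetilde U_i^{1:N}, \widetilde V_i^{1:N})$ is a deterministic function of $(\widetilde D_i, \widetilde E_i, \widetilde F_i)$ and of the fresh uniform bits drawn at the encoders in Block~$i$; hence $\widetilde Z_i^{1:N}$ and the triple $(\widetilde D_{i+1}, \widetilde E_{i+1}, \widetilde F_{i+1}) = (G_U(\widetilde U_i^{1:N}), G_X(\widetilde X_i^{1:N}), G_V(\widetilde V_i^{1:N}))$ are deterministic functions of $(\widetilde D_i, \widetilde E_i, \widetilde F_i)$, those fresh bits, and the Block-$i$ channel noise; and the latter two are independent of everything created in Blocks~$1, \dots, i-1$, in particular of $\widetilde Z_{1:i-1}^{1:N}$ and of $(\widetilde D_i, \widetilde E_i, \widetilde F_i)$, the latter being a function of $(\widetilde X_{i-1}^{1:N}, \widetilde U_{i-1}^{1:N}, \widetilde V_{i-1}^{1:N})$.

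Two consequences of this screening property would then be recorded. First, $\widetilde p_{Z_{1:i}^{1:N} D_{i+1} E_{i+1} F_{i+1}}$ is the image of $\widetilde p_{Z_{1:i-1}^{1:N} D_i E_i F_i}$ under the stochastic map that appends a sample of $\widetilde p_{Z_i^{1:N} D_{i+1} E_{i+1} F_{i+1} \mid D_i E_i F_i}$ and then discards the coordinates $D_i E_i F_i$; applying the same map to $\widetilde p_{Z_{1:i-1}^{1:N}} \widetilde p_{D_i E_i F_i}$ produces $\widetilde p_{Z_{1:i-1}^{1:N}} \widetilde p_{Z_i^{1:N} D_{i+1} E_{i+1} F_{i+1}}$. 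Second, discarding $D_{i+1} E_{i+1} F_{i+1}$ throughout, $\widetilde p_{Z_{1:i}^{1:N}}$ is the image of $\widetilde p_{Z_{1:i-1}^{1:N} D_i E_i F_i}$ under appending $\widetilde p_{Z_i^{1:N} \mid D_i E_i F_i}$ and discarding $D_i E_i F_i$, while the same map sends $\widetilde p_{Z_{1:i-1}^{1:N}} \widetilde p_{D_i E_i F_i}$ to $\widetilde p_{Z_{1:i-1}^{1:N}} \widetilde p_{Z_i^{1:N}}$.

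The inductive step then combines three estimates via the triangle inequality. (i)~By the induction hypothesis, $\mathbb{V}(\widetilde p_{Z_{1:i-1}^{1:N} D_i E_i F_i}, \widetilde p_{Z_{1:i-1}^{1:N}} \widetilde p_{D_i E_i F_i}) \leq \delta_i^{(2)}(N)$, and by the data processing inequality applied to the first map above, $\mathbb{V}(\widetilde p_{Z_{1:i}^{1:N} D_{i+1} E_{i+1} F_{i+1}}, \widetilde p_{Z_{1:i-1}^{1:N}} \widetilde p_{Z_i^{1:N} D_{i+1} E_{i+1} F_{i+1}}) \leq \delta_i^{(2)}(N)$. (ii)~By Lemma~\ref{lem5} used with index $i+1$, $\mathbb{V}(\widetilde p_{Z_i^{1:N} D_{i+1} E_{i+1} F_{i+1}}, \widetilde p_{Z_i^{1:N}} \widetilde p_{D_{i+1} E_{i+1} F_{i+1}}) \leq \delta_{i+1}^{(1)}(N)$, hence $\mathbb{V}(\widetilde p_{Z_{1:i-1}^{1:N}} \widetilde p_{Z_i^{1:N} D_{i+1} E_{i+1} F_{i+1}}, \widetilde p_{Z_{1:i-1}^{1:N}} \widetilde p_{Z_i^{1:N}} \widetilde p_{D_{i+1} E_{i+1} F_{i+1}}) \leq \delta_{i+1}^{(1)}(N)$. (iii)~By the induction hypothesis and the data processing inequality applied to the second map above, $\mathbb{V}(\widetilde p_{Z_{1:i}^{1:N}}, \widetilde p_{Z_{1:i-1}^{1:N}} \widetilde p_{Z_i^{1:N}}) \leq \delta_i^{(2)}(N)$, hence $\mathbb{V}(\widetilde p_{Z_{1:i-1}^{1:N}} \widetilde p_{Z_i^{1:N}} \widetilde p_{D_{i+1} E_{i+1} F_{i+1}}, \widetilde p_{Z_{1:i}^{1:N}} \widetilde p_{D_{i+1} E_{i+1} F_{i+1}}) \leq \delta_i^{(2)}(N)$. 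Adding the three bounds, $\mathbb{V}(\widetilde p_{Z_{1:i}^{1:N} D_{i+1} E_{i+1} F_{i+1}}, \widetilde p_{Z_{1:i}^{1:N}} \widetilde p_{D_{i+1} E_{i+1} F_{i+1}}) \leq 2\delta_i^{(2)}(N) + \delta_{i+1}^{(1)}(N)$. Finally, since the closed form of $\delta_j(N)$ shows that $j \mapsto \delta_j(N)$ is nondecreasing, one has $4\delta_{i-1}(N) + 2\delta^{(0)}(N) \leq 4\delta_i(N) + 2\delta^{(0)}(N)$, so with $\delta_i^{(2)}(N) = (2^{i-1}-1)(4\delta_{i-1}(N) + 2\delta^{(0)}(N))$ and $\delta_{i+1}^{(1)}(N) = 4\delta_i(N) + 2\delta^{(0)}(N)$ we get $2\delta_i^{(2)}(N) + \delta_{i+1}^{(1)}(N) \leq (2^i - 1)(4\delta_i(N) + 2\delta^{(0)}(N)) = \delta_{i+1}^{(2)}(N)$, which closes the induction.

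The step I expect to be the main obstacle is the rigorous justification of the screening/conditional-independence property: one must verify, directly from Algorithms~\ref{alg:encoding_11} and~\ref{alg:encoding_21} (equivalently, from the dependency graph of Figure~\ref{figFGD}), that all the fresh uniform randomness and all the channel noise injected in Block~$i$ are jointly independent of the $\sigma$-algebra generated by Blocks~$1, \dots, i-1$, so that conditioning on $(\widetilde D_i, \widetilde E_i, \widetilde F_i)$ indeed decouples $(\widetilde Z_i^{1:N}, \widetilde D_{i+1}, \widetilde E_{i+1}, \widetilde F_{i+1})$ from $\widetilde Z_{1:i-1}^{1:N}$. Once this is settled, the remainder is routine bookkeeping with the triangle inequality, the data processing inequality, Lemma~\ref{lem5}, and the monotonicity of $\delta_j(N)$.
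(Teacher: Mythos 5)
Your proposal is correct and follows essentially the same route as the paper: induction with Lemma~\ref{lem5} as the base case, a three-term triangle-inequality decomposition, Lemma~\ref{lem5} for the middle term, and the Markov chain $(\widetilde{D}_{i+1},\widetilde{E}_{i+1},\widetilde{F}_{i+1},\widetilde{Z}_{i}^{1:N})-(\widetilde{D}_{i},\widetilde{E}_{i},\widetilde{F}_{i})-\widetilde{Z}_{1:i-1}^{1:N}$ together with the data processing inequality and the induction hypothesis for the other two, yielding the same bound $2\delta^{(2)}_{\text{prev}}(N)+\delta^{(1)}(N)\leq\delta^{(2)}(N)$. The only cosmetic difference is that you bound the first and third terms separately via the data processing inequality, whereas the paper first combines them into twice an augmented term before factoring through the same Markov chain.
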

\begin{proof}
We prove the result by induction. The lemma is true for $i=2$ by Lemma \ref{lem5}. Assume now that the lemma holds for $i \in \llbracket 2,k-1 \rrbracket$. Then, for $i \in \llbracket 3,k \rrbracket$, we have
\begin{align*}
   \mathbb{V} \left( \widetilde{p}_{Z_{1:i-2}^{1:N}D_{i-1} E_{i-1}F_{i-1}},   \widetilde{p}_{Z_{1:i-2}^{1:N}} \widetilde{p}_{ D_{i-1} E_{i-1} F_{i-1} } \right) \leq \delta_{i-1}^{(2)}(N). 
\end{align*}
We have 
   \begin{align*}
    &  \mathbb{V} \left( \widetilde{p}_{Z_{1:i-1}^{1:N}D_i E_iF_i},   \widetilde{p}_{Z_{1:i-1}^{1:N}} \widetilde{p}_{ D_i E_iF_i } \right)\\
    & \stackrel{(a)}{\leq}  
       \mathbb{V} \left( \widetilde{p}_{Z_{1:i-1}^{1:N}D_{i} E_{i}F_{i}} , \widetilde{p}_{Z_{1:i-2}^{1:N}} \widetilde{p}_{ Z_{i-1}^{1:N} D_{i} E_{i}F_{i} } \right) \\& \phantom{--}+ \mathbb{V} \left( \widetilde{p}_{Z_{1:i-2}^{1:N}} \widetilde{p}_{ Z_{i-1}^{1:N} D_{i} E_{i}F_{i} }, \widetilde{p}_{Z_{1:i-2}^{1:N}} \widetilde{p}_{Z_{i-1}^{1:N}} \widetilde{p}_{ D_i E_iF_i } \right) \\ & \phantom{--} + \mathbb{V} \left( \widetilde{p}_{Z_{1:i-2}^{1:N}} \widetilde{p}_{Z_{i-1}^{1:N}} \widetilde{p}_{ D_i E_iF_i }, \widetilde{p}_{Z_{1:i-1}^{1:N}} \widetilde{p}_{ D_i E_iF_i } \right)\\
            & =  \mathbb{V} \left( \widetilde{p}_{Z_{1:i-1}^{1:N}D_{i} E_{i}F_{i}} , \widetilde{p}_{Z_{1:i-2}^{1:N}} \widetilde{p}_{ Z_{i-1}^{1:N} D_{i} E_{i}F_{i} } \right)
        \\ & \phantom{--} +\mathbb{V} \left(  \widetilde{p}_{Z_{i-1}^{1:N}D_{i} E_{i}F_{i}}  , \widetilde{p}_{Z_{i-1}^{1:N}} \widetilde{p}_{ D_i E_iF_i }\right)\\ & \phantom{--}+ \mathbb{V} \left( \widetilde{p}_{Z_{1:i-2}^{1:N}} \widetilde{p}_{Z_{i-1}^{1:N}} , \widetilde{p}_{Z_{1:i-1}^{1:N}}  \right)\\
    & \stackrel{(b)}{\leq}    \mathbb{V} \left( \widetilde{p}_{Z_{1:i-1}^{1:N}D_{i} E_{i}F_{i}} , \widetilde{p}_{Z_{1:i-2}^{1:N}} \widetilde{p}_{ Z_{i-1}^{1:N} D_{i} E_{i}F_{i} } \right)
        \\ & \phantom{--} + \mathbb{V} \left( \widetilde{p}_{Z_{1:i-2}^{1:N}} \widetilde{p}_{Z_{i-1}^{1:N}} , \widetilde{p}_{Z_{1:i-1}^{1:N}}  \right) + \delta^{(1)}_i(N) \\
         & \stackrel{(c)} \leq    2\mathbb{V} \left( \widetilde{p}_{Z_{1:i-1}^{1:N}D_{i-1:i} E_{i-1:i}F_{i-1:i}} ,\right.\\ & \phantom{----}\left. \widetilde{p}_{Z_{1:i-2}^{1:N}} \widetilde{p}_{ Z_{i-1}^{1:N} D_{i-1:i} E_{i-1:i}F_{i-1:i} } \right) + \delta^{(1)}_i(N) 
          \\
            & \stackrel{(d)}{=}  2 \mathbb{V} \left( \widetilde{p}_{Z_{1:i-2}^{1:N}D_{i-1} E_{i-1}F_{i-1}} \widetilde{p}_{Z_{i-1}^{1:N}D_{i} E_{i}F_{i} |D_{i-1} E_{i-1}F_{i-1}}  ,\right.\\ & \phantom{----}\left. \widetilde{p}_{Z_{1:i-2}^{1:N}} \widetilde{p}_{ Z_{i-1}^{1:N} D_{i-1:i} E_{i-1:i}F_{i-1:i} } \right) + \delta^{(1)}_i(N)
         \\
            & =  2 \mathbb{V} \left( \widetilde{p}_{Z_{1:i-2}^{1:N}D_{i-1} E_{i-1}F_{i-1}}   , \widetilde{p}_{Z_{1:i-2}^{1:N}} \widetilde{p}_{  D_{i-1} E_{i-1}F_{i-1} } \right) \\ & \phantom{--}+ \delta^{(1)}_i(N)\\
                          & \stackrel{(e)}{\leq}  \delta^{(1)}_i(N)  + 2 \delta_{i-1}^{(2)}(N)\\
                          & \leq \delta_{i}^{(2)}(N),
  \end{align*}
 where $(a)$ holds by the triangle inequality, $(b)$  holds  by Lemma~\ref{lem5}, \textcolor{black}{$(c)$ follows from the data processing inequality}, $(d)$ holds by the  Markov chain $(\widetilde{D}_{i},\widetilde{E}_{i},\widetilde{F}_{i}, \widetilde{Z}_{i-1}^{1:N}) - (\widetilde{D}_{i-1},\widetilde{E}_{i-1},\widetilde{F}_{i-1}) - \widetilde{Z}_{1:i-2}^{1:N}$, $(e)$ holds by the induction hypothesis. 
\end{proof}
     The next lemma shows that the channel outputs of all the blocks are asymptotically independent.
    \begin{lem}
    We have
   $$
\mathbb{V} \left( \widetilde{p}_{Z_{1:k}^{1:N}},\prod_{i =1}^k  \widetilde{p}_{Z_{i}^{1:N}} \right)\leq (k-1)  \delta_k^{(2)} (N),
$$ where $\delta_k^{(2)} (N)$ is defined in Lemma \ref{lemnew}. \label{lem8}
  \end{lem}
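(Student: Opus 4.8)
The plan is to prove the claim by induction on the number of blocks $k$, peeling off the last block at each step and using Lemma~\ref{lemnew} to decouple its channel output from the outputs of all previous blocks. The base case $k=1$ is trivial since the left-hand side is then zero; for $k=2$ the bound is exactly Lemma~\ref{lemnew} with $i=2$ combined with the data processing argument described below. For the inductive step, assume the claim holds with $k-1$ blocks in place of $k$. By the triangle inequality,
$$
\mathbb{V}\!\left(\widetilde{p}_{Z_{1:k}^{1:N}},\prod_{i=1}^k \widetilde{p}_{Z_i^{1:N}}\right)
\leq
\mathbb{V}\!\left(\widetilde{p}_{Z_{1:k}^{1:N}},\widetilde{p}_{Z_{1:k-1}^{1:N}}\widetilde{p}_{Z_k^{1:N}}\right)
+
\mathbb{V}\!\left(\widetilde{p}_{Z_{1:k-1}^{1:N}}\widetilde{p}_{Z_k^{1:N}},\prod_{i=1}^k \widetilde{p}_{Z_i^{1:N}}\right),
$$
and the second term equals $\mathbb{V}(\widetilde{p}_{Z_{1:k-1}^{1:N}},\prod_{i=1}^{k-1}\widetilde{p}_{Z_i^{1:N}})$, which is at most $(k-2)\delta_{k-1}^{(2)}(N)$ by the induction hypothesis.

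For the first term I would exploit the Markov chain $\widetilde{Z}_k^{1:N}-(\widetilde{D}_k,\widetilde{E}_k,\widetilde{F}_k)-\widetilde{Z}_{1:k-1}^{1:N}$, which holds because, conditioned on $(\widetilde{D}_k,\widetilde{E}_k,\widetilde{F}_k)$, the output $\widetilde{Z}_k^{1:N}$ is a function only of the fresh randomness $(D_k,E_k,F_k)$ and the channel noise $N_k$ of Block~$k$, all independent of everything generated in Blocks $1$ through $k-1$ (this is read off from Lines~$3$--$8$ of Algorithms~\ref{alg:encoding_11} and~\ref{alg:encoding_21}). Hence both $\widetilde{p}_{Z_{1:k}^{1:N}}$ and $\widetilde{p}_{Z_{1:k-1}^{1:N}}\widetilde{p}_{Z_k^{1:N}}$ are images of $\widetilde{p}_{Z_{1:k-1}^{1:N}D_k E_k F_k}$ and $\widetilde{p}_{Z_{1:k-1}^{1:N}}\widetilde{p}_{D_k E_k F_k}$, respectively, under the \emph{same} stochastic map, namely the one that applies $\widetilde{p}_{Z_k^{1:N}|D_k E_k F_k}$ to the $(D_k,E_k,F_k)$ coordinates while leaving the $Z_{1:k-1}^{1:N}$ coordinates untouched. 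The data processing inequality then gives
$$
\mathbb{V}\!\left(\widetilde{p}_{Z_{1:k}^{1:N}},\widetilde{p}_{Z_{1:k-1}^{1:N}}\widetilde{p}_{Z_k^{1:N}}\right)
\leq
\mathbb{V}\!\left(\widetilde{p}_{Z_{1:k-1}^{1:N}D_k E_k F_k},\widetilde{p}_{Z_{1:k-1}^{1:N}}\widetilde{p}_{D_k E_k F_k}\right)
\leq
\delta_k^{(2)}(N)
$$
by Lemma~\ref{lemnew}.

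Combining the two bounds yields $(k-2)\delta_{k-1}^{(2)}(N)+\delta_k^{(2)}(N)$, and since $\delta_i^{(2)}(N)$ is non-decreasing in $i$ — which follows from $\delta_i(N)$ being non-decreasing in $i$ and from the monotonicity of $2^{i-1}-1$ — this is at most $(k-1)\delta_k^{(2)}(N)$, completing the induction. I expect the only delicate point to be the justification that the stochastic map taking $(D_k,E_k,F_k)$ to $Z_k^{1:N}$ acts trivially on the $Z_{1:k-1}^{1:N}$ coordinates, i.e.\ verifying the Markov chain above directly from the structure of the encoding algorithms; the rest is a routine telescoping argument.
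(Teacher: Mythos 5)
Your proof is correct and follows essentially the same route as the paper: the paper writes the same decomposition as a direct telescoping sum over hybrids $\widetilde{p}_{Z_{1:i}^{1:N}}\prod_{j>i}\widetilde{p}_{Z_j^{1:N}}$, bounds each term by introducing $(D_i,E_i,F_i)$ and invoking the Markov chain $\widetilde{Z}_i^{1:N}-(\widetilde{D}_i,\widetilde{E}_i,\widetilde{F}_i)-\widetilde{Z}_{1:i-1}^{1:N}$ together with Lemma~\ref{lemnew}, exactly as you do for the last block in your inductive step. Your induction simply unrolls to that telescoping argument, and your explicit monotonicity check on $\delta_i^{(2)}(N)$ is the same fact the paper uses implicitly when replacing $\max_j \delta_j^{(2)}(N)$ by $\delta_k^{(2)}(N)$.
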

 \begin{proof}
  We have
 \begin{align*}
    & \mathbb{V} \left( \widetilde{p}_{Z_{1:k}^{1:N}},\prod_{i =1}^k  \widetilde{p}_{Z_{i}^{1:N}} \right)\\
     & \stackrel{(a)}{\leq} \sum_{i=2}^k \mathbb{V} \left( \widetilde{p}_{Z_{1:i}^{1:N}} \prod_{j=i+1}^k \widetilde{p}_{Z_{j}^{1:N}},   \widetilde{p}_{Z_{1:i-1}^{1:N}} \prod_{j=i}^k \widetilde{p}_{Z_{j}^{1:N}} \right)\\
          & = \sum_{i=2}^k \mathbb{V} \left( \widetilde{p}_{Z_{1:i}^{1:N}} ,   \widetilde{p}_{Z_{1:i-1}^{1:N}}  \widetilde{p}_{Z_{i}^{1:N}} \right)\\
    & \leq \sum_{i=2}^k \mathbb{V} \left( \widetilde{p}_{Z_{1:i}^{1:N}D_i E_iF_i},  \widetilde{p}_{Z_{i}^{1:N} D_i E_iF_i } \widetilde{p}_{Z_{1:i-1}^{1:N}} \right)\\
    & \stackrel{(b)}{=} \sum_{i=2}^k \mathbb{V} \left( \widetilde{p}_{Z_{1:i-1}^{1:N}|D_i E_iF_i}\widetilde{p}_{Z_{i}^{1:N}D_i E_iF_i}, \widetilde{p}_{Z_{i}^{1:N} D_i E_iF_i } \widetilde{p}_{Z_{1:i-1}^{1:N}} \right)\\
    & = \sum_{i=2}^k \mathbb{V} \left( \widetilde{p}_{Z_{1:i-1}^{1:N}D_i E_iF_i},   \widetilde{p}_{Z_{1:i-1}^{1:N}} \widetilde{p}_{ D_i E_iF_i } \right)\\
    & \stackrel{(c)}{\leq} \sum_{i=2}^k \delta_i^{(2)} (N) \\
        & \leq (k-1) \max_{j \in \llbracket 2,k \rrbracket }\delta_j^{(2)} (N), 
  \end{align*}
 where $(a)$ holds by the triangle inequality, $(b)$ holds by the Markov chain $\widetilde{Z}_i^{1:N}   - (\widetilde{D}_i,\widetilde{E}_i,\widetilde{F}_i)- \widetilde{Z}_{1:i-1}^{1:N} $, $(c)$ holds by Lemma \ref{lemnew}.
\end{proof}
                        We  now  show  that  the  target  output  distribution  is  well approximated jointly over all blocks.
             \begin{lem} \label{lemindep}  For Block $i\in \llbracket 1, k \rrbracket$,
         we have
$$\mathbb{V} \left( \widetilde{p}_{Z_{1:k}^{1:N}},   q_{Z^{1:kN}}  \right) \leq    (k-1)  \delta_k^{(2)} (N)+ k \delta_{k}(N),
$$    where $\delta_k^{(2)} (N)$ is defined in Lemma \ref{lemnew} and $\delta_k(N)$ is defined in Lemma \ref{lemblock}.  \end{lem}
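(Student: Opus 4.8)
The plan is to interpose the product distribution $\prod_{i=1}^{k}\widetilde{p}_{Z_i^{1:N}}$ between $\widetilde{p}_{Z_{1:k}^{1:N}}$ and $q_{Z^{1:kN}}$, and control each piece by a lemma already established. First I would apply the triangle inequality to write
\begin{align*}
\mathbb{V}\!\left(\widetilde{p}_{Z_{1:k}^{1:N}},q_{Z^{1:kN}}\right)
\leq \mathbb{V}\!\left(\widetilde{p}_{Z_{1:k}^{1:N}},\prod_{i=1}^{k}\widetilde{p}_{Z_i^{1:N}}\right)
+\mathbb{V}\!\left(\prod_{i=1}^{k}\widetilde{p}_{Z_i^{1:N}},q_{Z^{1:kN}}\right).
\end{align*}
The first term is at most $(k-1)\,\delta_k^{(2)}(N)$ by Lemma~\ref{lem8}, so it remains to bound the second term by $k\,\delta_k(N)$.

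For the second term, observe that $q_{Z^{1:kN}}=\prod_{i=1}^{kN}q_Z=\prod_{i=1}^{k}q_{Z^{1:N}}$, so both arguments are $k$-fold products over the same block structure. A standard hybrid (telescoping) argument for variational distance between product distributions then gives
\begin{align*}
\mathbb{V}\!\left(\prod_{i=1}^{k}\widetilde{p}_{Z_i^{1:N}},\prod_{i=1}^{k}q_{Z^{1:N}}\right)
\leq \sum_{i=1}^{k}\mathbb{V}\!\left(\widetilde{p}_{Z_i^{1:N}},q_{Z^{1:N}}\right).
\end{align*}
Each summand is the variational distance between the $Z_i^{1:N}$-marginal of $\widetilde{p}_{U_i^{1:N}V_i^{1:N}X_i^{1:N}Y_i^{1:N}Z_i^{1:N}}$ and the corresponding marginal of $q_{U^{1:N}V^{1:N}X^{1:N}Y^{1:N}Z^{1:N}}$; since variational distance does not increase under marginalization, Lemma~\ref{lemblock} yields $\mathbb{V}(\widetilde{p}_{Z_i^{1:N}},q_{Z^{1:N}})\leq \delta_i(N)$. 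Finally, because $\delta_i(N)=\tfrac{3}{2}(\delta(N)+\delta^{(0)}(N))(3^{i}-1)+3^{i+1}\delta(N)$ is nondecreasing in $i$, we have $\delta_i(N)\leq \delta_k(N)$ for all $i\in\llbracket 1,k\rrbracket$, so the sum is at most $k\,\delta_k(N)$.

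Combining the two bounds gives $\mathbb{V}(\widetilde{p}_{Z_{1:k}^{1:N}},q_{Z^{1:kN}})\leq (k-1)\,\delta_k^{(2)}(N)+k\,\delta_k(N)$, as claimed. The only genuinely substantive inputs are Lemmas~\ref{lem8} and~\ref{lemblock}; the remaining work — the hybrid argument over products and the monotonicity of $\delta_i(N)$ — is routine, and I do not anticipate any real obstacle here beyond bookkeeping. If anything, the point to state carefully is that $q_{Z^{1:kN}}$ genuinely factors as $\prod_{i=1}^{k}q_{Z^{1:N}}$ with each factor matching the per-block marginal used in Lemma~\ref{lemblock}, which is immediate from memorylessness.
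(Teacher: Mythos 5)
Your proposal is correct and follows essentially the same route as the paper: triangle inequality with $\prod_{i=1}^{k}\widetilde{p}_{Z_i^{1:N}}$ interposed, Lemma~\ref{lem8} for the first term, and a telescoping/hybrid bound over blocks with the per-block marginal bound from Lemma~\ref{lemblock} (the paper carries this out by peeling one block at a time via induction, which is the same hybrid argument). The paper states the final bound as $k\max_{j}\delta_j(N)$, which coincides with your $k\,\delta_k(N)$ by the monotonicity of $\delta_i(N)$ that you verify explicitly.
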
\label{lem9}
         \begin{proof} We have
      \begin{align}\nonumber
  & \mathbb{V} ( \widetilde{p}_{Z_{1:k}^{1:N}},   q_{Z^{1:kN}} )  \\\nonumber
  & \stackrel{(a)}\leq (k-1)   \delta_k^{(2)} (N)+ \mathbb{V} (  \prod_{i =1}^k  \widetilde{p}_{Z_{i}^{1:N}}, q_{Z^{1:kN}} )\\\nonumber
   & \stackrel{(b)}\leq  (k-1) \delta_k^{(2)} (N) + \mathbb{V} (\widetilde{p}_{Z_{1}^{1:N}} \prod_{i =2}^k  \widetilde{p}_{Z_{i}^{1:N}}, q_{Z^{1:N}} \prod_{i =2}^k  \widetilde{p}_{Z_{i}^{1:N}} )\\\nonumber & \phantom{--}+ \mathbb{V} (  q_{Z^{1:N}} \prod_{i =2}^k  \widetilde{p}_{Z_{i}^{1:N}}, q_{Z^{1:kN}} )\\\nonumber
   & \stackrel{(c)}\leq  (k-1)\delta_k^{(2)} (N)+ \delta_{1}(N)+  \mathbb{V} ( \prod_{i =2}^k  \widetilde{p}_{Z_{i}^{1:N}}, q_{Z^{1:(k-1)N}} )\\ \nonumber 
   & \stackrel{(d)}\leq (k-1)\delta_k^{(2)} (N)+  \sum_{i =1}^k \delta_{i}(N)\\ \nonumber
 &  \leq    (k-1) \delta_k^{(2)} (N)+ k \max_{j \in \llbracket 1,k \rrbracket} \delta_{j}(N),
\end{align}
where $(a)$ holds by the triangle inequality and Lemma \ref{lem8}, $(b)$ holds by the triangle inequality, $(c)$ holds by Lemma \ref{lemblock}, $(d)$ holds by induction.
\end{proof}
Finally, the next lemma shows that the encoding scheme of Section \ref{sec:sub1} achieves the desired rate-tuple.
\begin{lem}
  Let $\epsilon_0>0$.  For $k$ large enough and $\xi>0$, we~have
    \begin{align*}
      \displaystyle \lim_{{{N \rightarrow +\infty}}}R_1 &= I(X;ZU)+\epsilon_0+2\xi ,\\
      \displaystyle  \lim_{{{N \rightarrow +\infty}}}R_U &= I(U;Z)+\epsilon_0+2\xi ,\\
      \displaystyle  \lim_{{{N \rightarrow +\infty}}}R_V &= I(V;ZUX)+\epsilon_0+2\xi.
    \end{align*}\label{lem10}
   \end{lem}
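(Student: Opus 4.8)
The plan is to treat this as a pure randomness-accounting statement. The scheme runs over $k$ blocks of length $N$, so the total blocklength is $kN$; for each of the three virtual sources --- the $X$-component at Transmitter~$1$ and the $U$- and $V$-components at Transmitter~$2$ --- I would sum the number of \emph{fresh} uniform bits actually fed to the corresponding source-resolvability encoder across the $k$ blocks, and divide by $kN$. The bits that are reinjected through the hash functions ($\widetilde E_i,\widetilde D_i,\widetilde F_i$) are deterministic functions of the previous block, hence are recycled and must not be counted. That this is the correct way to measure the rate --- i.e.\ that the scheme already approximates $q_{Z^{1:kN}}$ while consuming only the fresh bits $E_1,\dots,E_k$, $D_1,\dots,D_k$, $F_1,\dots,F_k$ --- is exactly what the preceding analysis establishes, culminating in Lemma~\ref{lemindep}, so it may be taken for granted here.

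For Transmitter~$1$, the input specification of Algorithm~\ref{alg:encoding_11} supplies $N(H(X)+\epsilon_1)$ fresh bits in Block~$1$ and $N(I(X;UZ)+\epsilon_1)$ fresh bits in each Block $i\in\llbracket 2,k\rrbracket$, the rest of the encoder input being the recycled $\widetilde E_i=G_X(\widetilde X_{i-1}^{1:N})$. Hence
\begin{align*}
R_1 &= \frac{N(H(X)+\epsilon_1)+(k-1)N(I(X;UZ)+\epsilon_1)}{kN}\\
    &= I(X;UZ)+\epsilon_1+\frac{H(X)-I(X;UZ)}{k}\\
    &= I(X;UZ)+\epsilon_1+\frac{H(X|UZ)}{k}.
\end{align*}
Next I would let $N\to\infty$: since $\epsilon_1=2(\delta_{\mathcal A}(N)+\xi)$ and $\delta_{\mathcal A}(N)=\log(|\mathcal Y|^2|\mathcal X|+3)\sqrt{\tfrac{2}{N}(3+\log N)}\to 0$, we have $\epsilon_1\to 2\xi$, so $\lim_{N\to\infty}R_1=I(X;UZ)+2\xi+\tfrac{H(X|UZ)}{k}$; for $k$ large enough the last term is at most the prescribed $\epsilon_0>0$ (equivalently, one may over-provision Block~$1$ by $kN\epsilon_0$ extra bits to turn this into an exact equality).

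The computations for $R_U$ and $R_V$ are identical, now reading off Algorithm~\ref{alg:encoding_21}: Transmitter~$2$ uses $N(H(U)+\epsilon_1)$ resp.\ $N(H(V)+\epsilon_1)$ fresh bits in Block~$1$, and $N(I(U;Z)+\epsilon_1)$ resp.\ $N(I(V;UZX)+\epsilon_1)$ fresh bits in each subsequent block. Using the identities $H(U)-I(U;Z)=H(U|Z)$ and $H(V)-I(V;UZX)=H(V|UZX)$, the same two-line manipulation gives $\lim_{N}R_U=I(U;Z)+2\xi+\tfrac{H(U|Z)}{k}$ and $\lim_{N}R_V=I(V;UZX)+2\xi+\tfrac{H(V|UZX)}{k}$, which is the claimed form with common overhead $\epsilon_0+2\xi$ once $k$ is taken large.

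I do not expect a genuine obstacle in this lemma --- it is a counting statement --- so the only points requiring any care are (i) the justification, already provided by Lemmas~\ref{lem3}, \ref{lemblock}, \ref{lem5}, \ref{lemnew}, \ref{lem8} and \ref{lemindep}, that the recycled bits $\widetilde E_i,\widetilde D_i,\widetilde F_i$ genuinely cost nothing in rate, and (ii) the purely cosmetic reconciliation of the three distinct $1/k$ corrections $H(X|UZ)/k$, $H(U|Z)/k$, $H(V|UZX)/k$ with the single symbol $\epsilon_0$ in the statement, which is handled either by reading ``$=\epsilon_0$'' as ``$\le\epsilon_0$ for $k$ large'' or by the trivial Block~$1$ padding mentioned above.
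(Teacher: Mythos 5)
Your proposal is correct and follows essentially the same argument as the paper: counting the fresh uniform bits $E_i$, $D_i$, $F_i$ per block, dividing by $kN$, letting $\epsilon_1\to 2\xi$ as $N\to\infty$, and absorbing the Block-$1$ surplus (bounded by $\max(H(X),H(U),H(V))/k$ in the paper, and by the slightly sharper $H(\cdot|\cdot)/k$ in your version) into $\epsilon_0$ for $k$ large. Your remark that the stated equalities should really be read as upper bounds up to $\epsilon_0$ matches how the paper itself derives them.
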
 
   \begin{proof}
    Let $k$ be such that  $
       \frac{ 1}{k}\max (H(X) , H(U),H(V)) <\epsilon_0. $ Then, by the definition of~$\epsilon_1$, we have
      \begin{align*}
          R_1 &= \frac{ \sum_{i=1}^k| E_i|}{kN}\\ 
          &=  \frac{N(H(X)+\epsilon_1)  + (k-1)N(I(X;ZU)+\epsilon_1)}{kN}\\
   & \leq    \frac{H(X)}{k}+ I(X;ZU)+\epsilon_1\\
 &  \leq   \epsilon_0+ I(X;ZU)+\epsilon_1\\
   & \xrightarrow{N \to +\infty}I(X;ZU)+ \epsilon_0+2\xi,
      \end{align*} 
       \begin{align*}
          R_U &= \frac{\sum_{i=1}^k | D_i|}{kN}\\ &=  \frac{N(H(U)+\epsilon_1   + (k-1) N(I(U;Z)+\epsilon_1)}{kN}\\
  & \leq    \frac{H(U)}{k}+ I(U;Z)+\epsilon_1\\
   &\leq    \epsilon_0+I(U;Z)+\epsilon_1\\
   & \xrightarrow{N \to +\infty}I(U;Z)+ \epsilon_0+2\xi,
      \end{align*}
       \begin{align*}
          R_V &= \frac{\sum_{i=1}^k | F_i|}{kN}\\ &=  \frac{N(H(V)+\epsilon_1) + (k-1)N(I(V;ZUX)+\epsilon_1)}{kN}\\
  & \leq    \frac{H(V)}{k}+ I(V;ZUX)+\epsilon_1\\
   & \leq    \epsilon_0+ I(V;ZUX)+\epsilon_1\\
   & \xrightarrow{N \to +\infty}I(V;ZUX)+\epsilon_0+2\xi.
      \end{align*} 
      
   \end{proof}
    \subsection{Coding scheme analysis for Case 2}
    
 For Case $2$, $U=\emptyset$ and $V=Y$, so that by  Lemma $\ref{lem10}$, the achieved rate pair is such that
 \begin{align*}
       \displaystyle \lim_{{{N \rightarrow +\infty}}}R_1 &= I(X;Z)+\epsilon_0+2\xi,\\
                    \displaystyle \lim_{{{N \rightarrow +\infty}}}R_2 &= \lim_{{{N \rightarrow +\infty}}} (R_V +R_U) \\&= I(Y;ZX)+\epsilon_0+2\xi\\
         &\stackrel{(a)}=I(Y; Z|X)+\epsilon_0+2\xi\\
         &\stackrel{(b)}=I(Y;Z)+\epsilon_0+2\xi,
 \end{align*}
where $(a)$ holds  by independence between $X$ and $Y$, and $(b)$ holds  because  $I(XY;Z) = I(X;Z)+ I(Y;Z)$ in Case $2$.  
     \section{Extension to more than two transmitters}\label{sec:extension}
   Consider a discrete memoryless multiple access  channel  $(\mathcal X_{\mathcal{L}},  q_{Z|X_{\mathcal{L}}}, \mathcal Z)$, where $\mathcal X_l=\{ 0,1\}$, $l\in\mathcal{L}\triangleq \llbracket 1,L\rrbracket$,  $\mathcal{Z}$ is a finite alphabet, and $X_{\mathcal{L}} \triangleq (X_l)_{l \in \mathcal{L}}$. The definitions in Section~\ref{sec:ps} immediately extend to this multiple access channel with $L$ transmitters and we have the following counterpart of Theorem \ref{thm1}.
   \begin{thm} \label{thresol}
 We have $\mathcal R_{q_Z} = \mathcal{R}'_{q_Z}$ with
 \begin{align*}
   \mathcal R'_{q_Z} \triangleq {\bigcup_{p_T,(q_{X_l|T})_{l\in\mathcal{L}}} }\!\!\!\!\!\!\!\!\!\{(R_l)_{l\in\mathcal{L}}:  I(X_{\mathcal{S}};Z|T)&\leq R_{\mathcal{S}}, \forall \mathcal{S} \subseteq \mathcal{L} \},
 \end{align*}
    where $p_T$ is defined  over  $\mathcal{T}\triangleq \llbracket1,|\mathcal{Z}|+2^L-1\rrbracket$ and  $(q_{X_l|T})_{l\in \mathcal{L}}$ are such that, for any $t\in \mathcal{T}$ and  $z\in \mathcal{Z}$,
    \begin{align*}
        q_Z (z)= \sum_{{x_{\mathcal{L}} \in \mathcal{X}_{\mathcal{L}}}}   q_{Z|X_{\mathcal{L}}}(z|x_{\mathcal{L}}) \prod_{l\in\mathcal{L}}q_{X_l|T}(x_l|t).
    \end{align*} 
        \end{thm}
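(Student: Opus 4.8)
The plan is to reproduce, for $L$ transmitters, the two-step strategy used for $L=2$. The region identity $\mathcal R_{q_Z}=\mathcal R'_{q_Z}$ itself — the converse inclusion and the existence of a (not necessarily explicit) optimal scheme — is obtained by transcribing the single-letterization argument of \cite{frey2017mac} from two to $L$ users: one identifies the auxiliary $T$ with a time/block index, combines memorylessness of the channel with $\mathbb V(\widetilde p_{Z^{1:N}},q_{Z^{1:N}})\to 0$ to obtain $I(X_{\mathcal S};Z|T)\le R_{\mathcal S}$ for every $\mathcal S\subseteq\mathcal L$, and applies the support lemma to reduce the alphabet of $T$ to cardinality $|\mathcal Z|+2^L-1$ (one must preserve $q_Z$ and the $2^L-1$ functionals $T\mapsto I(X_{\mathcal S};Z|T)$). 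No new idea is needed here. The substance is to show that $\mathcal R'_{q_Z}$ is achievable by an \emph{explicit} scheme built solely from source resolvability codes used as black boxes.

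\textbf{Reduction and rate-splitting.} Exactly as in Section \ref{sec:cases}, it suffices to achieve, for each product distribution $\prod_{l\in\mathcal L}p_{X_l}$, the contra-polymatroid $\mathcal R_{X_{\mathcal L}}\triangleq\{(R_l)_{l\in\mathcal L}:I(X_{\mathcal S};Z)\le R_{\mathcal S}\ \forall\,\mathcal S\subseteq\mathcal L\}$: indeed $\textup{Conv}\bigl(\bigcup_{\prod_l p_{X_l}}\mathcal R_{X_{\mathcal L}}\bigr)\supseteq\mathcal R'_{q_Z}$ because every corner point of $\mathcal R'_{q_Z}$ — i.e. every point obtained from a successive-decoding order, a permutation $\pi$ of $\mathcal L$ — is a $p_T$-average of corner points of the regions $\mathcal R_{X_{\mathcal L}}\vert_{T=t}$, and $\textup{Conv}(\bigcup\mathcal R_{X_{\mathcal L}})$ is closed under adding rate. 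Being a contra-polymatroid, $\mathcal R_{X_{\mathcal L}}$ is achieved once its dominant face is; to cover the whole dominant face without time-sharing I would use the rate-splitting construction of \cite{grant2001rate}: split each transmitter $l$ into two independent virtual users $U_l,V_l$ via the $\max$-map of Lemma \ref{lem1} with $X_l=f(U_l,V_l)$, discard one virtual user at an extreme splitting value so as to obtain $M\le 2L-1$ virtual users, and relabel them $W_1,\dots,W_M$ in the order dictated by the splitting so that $I(X_{\mathcal L};Z)=\sum_{m=1}^M I(W_m;Z|W_{<m})$ and, by continuity in the splitting parameters (cf. Lemma \ref{lem1}), $\bigl(\sum_{m:\,W_m\in\{U_l,V_l\}}I(W_m;Z|W_{<m})\bigr)_{l\in\mathcal L}$ sweeps the dominant face of $\mathcal R_{X_{\mathcal L}}$.

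\textbf{Encoding and analysis.} The encoding is then a direct extension of Algorithms \ref{alg:encoding_11}--\ref{alg:encoding_21} to $M$ virtual users over $k$ blocks of length $N$: virtual user $m$ runs a source resolvability code for $(\mathcal Y,q_{W_m})$ at rate $H(W_m)+\epsilon_1/2$; it is fed only fresh randomness in block $1$, and in block $i\ge 2$ it is fed the concatenation of $N(H(W_m|W_{<m}Z)-\epsilon_1/2)$ recycled bits produced by applying a two-universal hash function to its block-$(i-1)$ codeword and $N(I(W_m;Z|W_{<m})+\epsilon_1)$ fresh bits; the channel inputs $\widetilde X_l^{1:N}$ are recovered through the maps $f$. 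The analysis is a verbatim generalization of Lemmas \ref{lem3}--\ref{lem10} with the index set $\{U,V,X\}$ replaced by $\mathcal A\triangleq\{W_1,\dots,W_M\}$: Lemma \ref{lems1} and the distributed leftover hash lemma (Lemma \ref{lem11}) already hold for an arbitrary finite number of sources, so the counterpart of Lemma \ref{lem3} bounds the key variational distance by $2/N+\sqrt{\sum_{\emptyset\ne\mathcal S\subseteq\mathcal A}2^{\,r_{\mathcal S}-NH(T_{\mathcal S}|Z)+N\delta_{\mathcal A}(N)}}$ with $r_{\mathcal S}\triangleq\sum_{m\in\mathcal S}r_{W_m}$, and the choice $r_{W_m}=N(H(W_m|W_{<m}Z)-\epsilon_1/2)$ makes each of the $2^M-1$ exponents equal to $-N\epsilon_1|\mathcal S|/2$ minus a nonnegative mutual-information term, hence $o(1)$ for $\epsilon_1=2(\delta_{\mathcal A}(N)+\xi)$; the per-block approximation (counterpart of Lemma \ref{lemblock}), the near-independence of the recycled randomness from the previous block's output (Lemma \ref{lem5}) and from all previous blocks jointly (Lemma \ref{lemnew}), the asymptotic independence of the blocks (Lemma \ref{lem8}), and the telescoping bound on $\mathbb V(\widetilde p_{Z_{1:k}^{1:N}},q_{Z^{1:kN}})$ (Lemma \ref{lemindep}) all carry over, with $M$-dependent but still vanishing error constants. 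The rate computation of Lemma \ref{lem10} then yields $R_{W_m}\to I(W_m;Z|W_{<m})+\epsilon_0+2\xi$, hence $R_l\to\sum_{m:\,W_m\in\{U_l,V_l\}}I(W_m;Z|W_{<m})+2(\epsilon_0+2\xi)$; letting $\epsilon_0,\xi\to 0$ and then taking the convex hull over product distributions gives $\mathcal R'_{q_Z}$.

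\textbf{Main obstacle.} The step I expect to be hardest is the rate-splitting bookkeeping for $L>2$: one must verify that iterating the $\max$-construction of Lemma \ref{lem1} across the $L$ transmitters yields a genuine successive-decoding order $W_1,\dots,W_M$ whose conditional-entropy profile $(H(W_m|W_{<m}Z))_m$ is both internally consistent — recycled plus fresh randomness must equal the source rate $H(W_m)$ for every $m$ — and, continuously in the splitting parameters, rich enough to sweep the entire dominant face of $\mathcal R_{X_{\mathcal L}}$. This is exactly the property that \cite{grant2001rate} establishes for channel \emph{coding}, and it must be re-derived in the resolvability setting, where the roles of transmission rate and injected randomness are interchanged. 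A secondary, purely technical nuisance is that the distributed leftover hash lemma now sums over $2^M-1=2^{O(L)}$ subsets, but since $L$ (hence $M$) is fixed while $N\to\infty$, these extra terms stay harmless.
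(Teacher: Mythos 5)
Your converse step and your block-Markov/recycling machinery match the paper: the paper proves this theorem by declaring the converse an immediate extension of the converse of \cite{frey2017mac}, and gets achievability from the scheme of Section~\ref{sec:mult}, in which each transmitter $l$ runs its own source resolvability code with recycled randomness of rate $H(X_l|ZX_{1:l-1})$ and fresh randomness of rate $I(X_l;ZX_{1:l-1})=I(X_l;Z|X_{1:l-1})$, and the analysis (Lemmas~\ref{c2lem3}--\ref{c2lem10}) is exactly the kind of subset-exponent bookkeeping you describe, including the conditioning-reduces-entropy step that makes every exponent at most $-N\xi$. The crucial difference is the reduction: the paper achieves only the \emph{corner points} $(I(X_{\sigma(l)};Z|X_{\{\sigma(i):i<l\}}))_{l}$ of the contrapolymatroid $\mathcal R(p_{X_{\mathcal L}})$ and then covers the dominant face by symmetry over permutations and \emph{time-sharing}, before taking the convex hull over $p_T$ and product distributions; note that the claim of avoiding time-sharing is made only for two transmitters (Theorem~\ref{tmain}), not here.

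This is where your proposal has a genuine gap. You choose to avoid time-sharing by extending the rate-splitting of Lemma~\ref{lem1} to $L$ transmitters ($M\le 2L-1$ virtual users sweeping the dominant face continuously), but you do not prove the $L$-user splitting lemma you need: mutual independence of the virtual users, the chain-rule identity $I(X_{\mathcal L};Z)=\sum_m I(W_m;Z|W_{<m})$ for a valid ordering, and above all the continuity/coverage statement that the induced rate vector sweeps the \emph{entire} dominant face of $\mathcal R(p_{X_{\mathcal L}})$ as the splitting parameters vary. You yourself flag this as the main obstacle and note it "must be re-derived in the resolvability setting"; as written, the achievability argument for every non-corner point of the dominant face is therefore not established. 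Since the theorem only asserts the region identity, the obligation you take on is unnecessary: either supply the $L$-user analogue of Lemma~\ref{lem1} (the resolvability-side analogue of \cite{grant2001rate}), or fall back to the paper's simpler route of corner points plus time-sharing, for which your analysis of the per-transmitter recycling scheme (with $r_{X_l}=N(H(X_l|ZX_{1:l-1})-\epsilon_2/2)$) already suffices.
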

   \begin{proof}
   The converse is an immediate extension of the converse of Theorem \ref{thm1} from \cite{frey2017mac}. The achievability follows from Theorem \ref{thm5}.
   \end{proof}
   
   \begin{thm} \label{thm5}
The coding scheme presented in Section \ref{sec:mult}, which solely relies on source resolvability codes, used as black boxes, and two-universal hash functions, achieves the entire  multiple access channel resolvability region $\mathcal R_{q_Z}$ of Theorem~\ref{thresol} for any discrete memoryless multiple access channel with binary input alphabets. 
\end{thm}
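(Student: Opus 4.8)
The plan is to lift the entire proof of Theorem~\ref{tmain} from two to $L$ transmitters, replacing the three source resolvability codes of Section~\ref{sec:sub1} by $L$ of them. First I would reduce, exactly as in Section~\ref{sec:cases}, the achievability of $\mathcal{R}'_{q_Z}$ to that of $\mathcal{R}_{X_{\mathcal{L}}}\triangleq\{(R_l)_{l\in\mathcal{L}}: I(X_{\mathcal{S}};Z)\leq R_{\mathcal{S}},\ \forall\mathcal{S}\subseteq\mathcal{L}\}$ for each fixed product input distribution $\prod_{l\in\mathcal{L}}p_{X_l}$: every corner point of $\mathcal{R}'_{q_Z}$ is a $p_T$-average of corner points of the regions $\mathcal{R}_{X_{\mathcal{L}}}$, hence lies in $\textup{Conv}(\bigcup_{\prod_l p_{X_l}}\mathcal{R}_{X_{\mathcal{L}}})$, and both sides are upper sets so the inclusion persists at the level of full regions. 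Since $\mathcal{R}_{X_{\mathcal{L}}}$ is a contrapolymatroid \cite{edmonds2003submodular}, it is enough to achieve each of its vertices, which are indexed by the permutations $\pi$ of $\mathcal{L}$ and given by $R_{\pi(j)}=I(X_{\pi(j)};Z|X_{\pi(1:j-1)})$. To additionally sweep the dominant face of a fixed $\prod_l p_{X_l}$ without time-sharing, I would first split each user into virtual users via the $\max$-construction of \cite{grant2001rate}, generalizing Lemma~\ref{lem1}, and apply the scheme below to the virtual sources.

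Fixing a permutation $\pi$, I would run the block-Markov scheme of Section~\ref{sec:sub1} over $k$ blocks of length $N$ with $L$ source resolvability codes: in Block~$1$, transmitter $\pi(j)$ performs source resolvability for $(\mathcal{X}_{\pi(j)},q_{X_{\pi(j)}})$ at rate $H(X_{\pi(j)})+\epsilon_1/2$, and in Block~$i\in\llbracket 2,k\rrbracket$ it recycles $G_{\pi(j)}(\widetilde X_{\pi(j),i-1}^{1:N})$, where $G_{\pi(j)}:\{0,1\}^N\to\{0,1\}^{r_{\pi(j)}}$ is two-universal with $r_{\pi(j)}\triangleq N(H(X_{\pi(j)}|Z\,X_{\pi(1:j-1)})-\epsilon_1/2)$, concatenated with fresh randomness of rate $I(X_{\pi(j)};Z|X_{\pi(1:j-1)})+\epsilon_1$. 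The key point to verify is the hypothesis of the distributed leftover hash lemma (Lemma~\ref{lem11}), applied with $\mathcal{A}=\{X_l\}_{l\in\mathcal{L}}$ and the product law $q_{X_{\mathcal{L}}^{1:N}Z^{1:N}}$: for every nonempty $\mathcal{S}\subseteq\mathcal{L}$ one needs $r_{\mathcal{S}}=\sum_{l\in\mathcal{S}}r_l< NH(X_{\mathcal{S}}|Z)$, which holds because, ordering $\mathcal{S}$ according to $\pi$, the $\pi$-predecessors of an element of $\mathcal{S}$ inside $\mathcal{S}$ are among its $\pi$-predecessors in $\mathcal{L}$, so the chain rule and ``conditioning reduces entropy'' give $\sum_{l\in\mathcal{S}}H(X_l|Z\,X_{\pi(1:\cdot)})\leq H(X_{\mathcal{S}}|Z)$, leaving exactly the $N\epsilon_1/2$ slack needed for the error term to vanish. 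This would yield the $L$-transmitter counterpart of Lemma~\ref{lem3}, the bound now being a union over the $2^L-1$ nonempty subsets in place of the seven explicit terms.

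With this counterpart of Lemma~\ref{lem3} established, I would re-run the chain of Lemmas~\ref{lemblock}--\ref{lem10} essentially verbatim: an induction on $i$ shows the Block-$i$ law of $((\widetilde X_{\pi(j),i}^{1:N})_{j},\widetilde Z_i^{1:N})$ is $\delta_i(N)$-close to $q_{X_{\mathcal{L}}^{1:N}Z^{1:N}}$ (with $L$ source-resolvability terms $\delta(N)$ and branching factor $L$ instead of $3$); the Block-$i$ recycled randomness is asymptotically independent first of $\widetilde Z_{i-1}^{1:N}$ and then, by a second induction, of $\widetilde Z_{1:i-1}^{1:N}$ jointly; hence the $k$ block outputs are asymptotically independent and $\widetilde p_{Z^{1:kN}}$ is $((k{-}1)\delta_k^{(2)}(N)+k\delta_k(N))$-close to $q_{Z^{1:kN}}$, which vanishes for fixed $k$ as $N\to\infty$; and the rate computation of Lemma~\ref{lem10} gives $R_{\pi(j)}\to I(X_{\pi(j)};Z|X_{\pi(1:j-1)})+\epsilon_0+2\xi$ once $k>\max_l H(X_l)/\epsilon_0$. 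Letting $\epsilon_0,\xi\to0$ achieves vertex $\pi$; taking the convex hull over $\pi$, then over $\prod_l p_{X_l}$ and over $p_T$, gives $\mathcal{R}'_{q_Z}=\mathcal{R}_{q_Z}$ by Theorem~\ref{thresol}.

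The main obstacle I anticipate is combinatorial bookkeeping rather than a new idea: controlling the $L$-user distributed leftover hash lemma, i.e.\ simultaneously checking the rate inequality over all $2^L-1$ subsets and bounding the resulting exponential sum, whose number of terms and whose constant $\delta_{\mathcal{A}}(N)$ now depend on $L$ (harmless for fixed $L$ as $N\to\infty$), while preserving the $\epsilon_1=2(\delta_{\mathcal{A}}(N)+\xi)$ slack through the two nested inductions over blocks; and, if one insists on attaining the whole dominant face of a fixed $\prod_l p_{X_l}$ without time-sharing, extending the continuity and rate-splitting argument of Lemma~\ref{lem1} from two to $L$ groups of virtual users.
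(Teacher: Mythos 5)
Your proposal matches the paper's own proof: achieve the vertex $(I(X_l;Z|X_{1:l-1}))_l$ of the contrapolymatroid $\mathcal{R}(p_{X_{\mathcal{L}}})$ with $L$ black-box source resolvability codes and block-Markov randomness recycling, verify the distributed leftover hash lemma rate condition for every nonempty $\mathcal{S}\subseteq\mathcal{L}$ exactly via the chain rule and conditioning-reduces-entropy, rerun the block-wise approximation/independence inductions and the rate computation, and then time-share over permutations, product input laws, and $p_T$. The only deviation is your optional suggestion to generalize the rate-splitting of Lemma \ref{lem1} to avoid time-sharing on the dominant face, which the paper does not attempt for $L>2$ and which is not needed for Theorem \ref{thm5}.
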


\subsection{Achievability Scheme} \label{sec:mult}
In the following, we use the notation $X_{\mathcal{S}} \triangleq (X_l)_{l\in \mathcal{S}}$ for $\mathcal{S} \subseteq \mathcal{L}$, and $X_{1:l} \triangleq X_{\llbracket 1,l \rrbracket}$ for $l \in \mathcal{L}$.
Let $p_{X_{\mathcal{L}}} \triangleq \prod_{l\in\mathcal{L}} p_{X_l}$.  We will show the achievability of the region  
 \begin{align*}
\mathcal{R} \left(p_{X_{\mathcal{L}}}\right) \triangleq \{(R_l)_{l\in\mathcal{L}}:  I(X_{\mathcal{S}};Z)& \leq R_{\mathcal{S}}, \forall \mathcal{S} \subseteq \mathcal{L}
    \},
\end{align*}
which reduces to showing the achievability of the rate-tuple $(I(X_l; Z|X_{1:l-1}))_{l\in\mathcal{L}}$. Indeed, the set function $\mathcal{S} \mapsto - I(X_{\mathcal{S}};Z)$ is submodular, e.g., \cite{chou2018polar}, and the region $\mathcal{R} \left(p_{X_{\mathcal{L}}}\right) $ thus forms a contrapolymatroid \cite{edmonds2003submodular} whose dominant face is the convex hull of its extreme points given by $\{ (I(X_{\sigma(l)}; Z|X_{\{ \sigma(i) : i \in \llbracket 1 , l-1 \rrbracket \}}))_{l\in\mathcal{L}} : \sigma \in \frak{S}(L)  \}$, where $\frak{S}(L)$ is the symmetric group over $\mathcal{L}$. By time-sharing and symmetry of the extreme points,  the achievability of the dominant face reduces to showing the achievability of one extreme point, which without loss of generality can be chosen as $(I(X_l; Z|X_{1:l-1}))_{l\in\mathcal{L}}$.

The encoding scheme to achieve $(I(X_l; Z|X_{1:l-1}))_{l\in\mathcal{L}}$ operates over $k \in \mathbb{N}$ blocks of length $N$. In this section, we use the double subscripts notation ${X_{l,i}}$, where the first subscript corresponds to Transmitter~$l \in \mathcal{L}$ and the second subscript corresponds to Block $i \in \llbracket 1,k \rrbracket$.  
The encoding at Transmitter~$l\in \mathcal{L}$ is described in Algorithm~$\ref{algext}$ and uses
    \begin{itemize}
    \item A hash function $G_{X_{l}}:\{0,1\}^{N} \longrightarrow \{0,1\}^{r_{{X_{l}}}}$ chosen uniformly at random in a family of two-universal hash functions, where the output length of the hash function $G_{X_{l}}$ is defined as follows  
   \begin{align}
    r_{X_{l}} \triangleq N(H(X_{l}|ZX_{1:l-1})- {\epsilon_2}/{2}).\label{c2def1}
\end{align}
\item A source resolvability code  for the discrete memoryless source $(\mathcal{X}_{l},q_{X_{l}})$ with encoder function $e_N^{X_{{l}}}$ and rate $H(X_{l})+\frac{\epsilon_2 }{2}$, where $\epsilon_2 \triangleq 2( \delta^*_{\mathcal{L}}(N)+\xi)$, $\delta^*_{\mathcal{L}}(N)\triangleq \log (|\mathcal{X}_{\mathcal{L}}|+3)\sqrt{\frac{2}{N}(L+\log N)}$, $\xi>0$, such that the distribution of the encoder output $\widetilde{p}_{X_{l}^{1:N}}$ satisfies $\mathbb{V}(\widetilde{p}_{X_{l}^{1:N}},q_{X_{l}^{1:N}})\leq \delta(N)$, where $\delta(N)$ is such that $\lim_{N \to +\infty} \delta(N) =0$.
\end{itemize}
In Algorithm~$\ref{algext}$ and for any $l \in \mathcal{L}$, the hash function output ${\widetilde E_{{l}, i}}$, $i\in \llbracket 2,k\rrbracket$, with length $r_{X_{l}}$ corresponds to recycled randomness from Block $i-1$. 
 \begin{algorithm}[h]
  \caption{Encoding algorithm at Transmitter $l\in \mathcal{L}$}
   \label{algext}
    \begin{algorithmic}   [1] 
     \REQUIRE A vector $ E_{l, 1}$ of $N (H(X_l)+\epsilon_2)$ uniformly distributed bits,   and for $i \in \llbracket 2,k \rrbracket$, a vector $E_{l, {i}}$ of $N (I(X_l; ZX_{1:l-1})+\epsilon_{2})$ uniformly distributed bits. 
                                 \FOR{Block $i=1$ to $k$}
                         \IF{$i=1$}
                         \STATE Define ${\widetilde X_{l,1}}^{1:N}\triangleq e^{X_l}_N(E_{l, 1})$ 
                         \ELSIF{$i>1$}
      \STATE  Define $\widetilde E_{l, {i}} \triangleq G_{X_l} (\widetilde X_{l, {i-1}}^{1:N})$
       \STATE Define $\widetilde X_{l, i}^{1:N}\triangleq e^{X_l}_N(\widetilde E_{l, {i}} \lVert {E_{l,i}})$
                                  \ENDIF
                                  \STATE Send ${\widetilde X_{l, i}}^{1:N}$  over the channel
                        \ENDFOR 
       \end{algorithmic}
\end{algorithm}
\subsection{Achievability Scheme Analysis}
 For convenience, define, for any $l \in \mathcal{L}$, $\widetilde E_{l,1} \triangleq \emptyset$. Let 
$ \widetilde p_{ E_{1:L, i}  X^{1:N}_{1:L, i} Z^{1:N}_{i}}$
       denote the joint probability distribution of the random variables $\widetilde E_{l, i}, \widetilde X^{1:N}_{l, i}$, and $\widetilde Z^{1:N}_{i}$, $l \in \mathcal{L}$, created in Block $i \in \llbracket 1,k\rrbracket$ of the coding scheme of Section {\ref{sec:mult}}.

 We prove in the following lemma that in Block $i\in \llbracket 2,k \rrbracket$, if the inputs $\widetilde X_{1:L, i-1}^{1:N}$ of the hash functions $(G_{X_l})_{l\in \mathcal{L}}$ are replaced by $X_{1:L}^{1:N}$ distributed according to $q_{X^{1:N}_{1:L}} \triangleq \prod_{i=1}^N q_{X_{1:L}}$, then the outputs of these hash functions are almost jointly uniformly distributed. Define
 \begin{align*}
     G_{X_{1:L}}(X^{1:N}_{1:L})\triangleq  \left(G_{X_{l}}(X^{1:N}_{l})\right)_{l \in \mathcal{L}}.
 \end{align*}
 \begin{lem} \label{c2lem3} 

 Let $ p^{unif}_{\bar E_{1:L}}$  denote the uniform distribution over $\{0,1\}^{ \sum_{l\in \mathcal{L}}r_{X_{l}}}$.  
  Then, we have
       \begin{align*}
           & \mathbb{V} \left(q_{G_{X_{1:L}}(X_{1:L}^{1:N})Z^{1:N}},  p^{unif}_{\bar E_{1:L}}q_{Z^{1:N}} \right)  \leq     \delta^{*(0)}(N),
     \end{align*}
                where $\delta^{*(0)}(N) \triangleq 2/N +  {2^{\frac{L}{2}} 2^{-\frac{N\xi}{2}}} $.       
                   \end{lem}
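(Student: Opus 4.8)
The plan is to follow the proof of Lemma~\ref{lem3} almost verbatim, replacing the three-element index set $\{U,V,X\}$ by the transmitter set $\mathcal{L}$, the chain order $U,X,V$ by the natural order $1,2,\dots,L$, and the constants $\delta_{\mathcal{A}}(N)$, $\delta^{(0)}(N)$ by $\delta^*_{\mathcal{L}}(N)$, $\delta^{*(0)}(N)$. First I would apply Lemma~\ref{lems1} of Appendix~\ref{appA} to the product distribution $q_{X_{1:L}^{1:N}Z^{1:N}}$, which produces a subnormalized non-negative function $w_{X_{1:L}^{1:N}Z^{1:N}}$ such that $\mathbb{V}(w_{X_{1:L}^{1:N}Z^{1:N}},q_{X_{1:L}^{1:N}Z^{1:N}})\leq 1/N$ and, for every $\mathcal{S}\subseteq\mathcal{L}$, $H_{\infty}(w_{X_{\mathcal{S}}^{1:N}Z^{1:N}}|q_{Z^{1:N}})\geq N H(X_{\mathcal{S}}|Z)-N\delta^*_{\mathcal{S}}(N)$, where $\delta^*_{\mathcal{S}}(N)\triangleq\log(|\mathcal{X}_{\mathcal{S}}|+3)\sqrt{\frac{2}{N}(L+\log N)}\leq\delta^*_{\mathcal{L}}(N)$.

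Next, writing $E_{1:L}\triangleq G_{X_{1:L}}(X_{1:L}^{1:N})$ for $X_{1:L}^{1:N}$ distributed according to $q_{X_{1:L}^{1:N}}$, I would reproduce the triangle-inequality / data-processing chain of the proof of Lemma~\ref{lem3}: one bounds $\mathbb{V}(q_{E_{1:L}Z^{1:N}},p^{unif}_{\bar E_{1:L}}q_{Z^{1:N}})$ by $1/N$ (from $\mathbb{V}(w_{X_{1:L}^{1:N}Z^{1:N}},q_{X_{1:L}^{1:N}Z^{1:N}})\leq 1/N$ pushed through $G_{X_{1:L}}$ by the data-processing inequality), plus $1/N$ (from replacing the marginal $w_{Z^{1:N}}$ by $q_{Z^{1:N}}$), plus the distributed leftover hash lemma, Lemma~\ref{lem11} of Appendix~\ref{appA}, which gives
\begin{align*}
\mathbb{V}\!\left(w_{E_{1:L}Z^{1:N}},p^{unif}_{\bar E_{1:L}}w_{Z^{1:N}}\right)\leq\sqrt{\sum_{\emptyset\neq\mathcal{S}\subseteq\mathcal{L}}2^{r_{\mathcal{S}}-H_{\infty}(w_{X_{\mathcal{S}}^{1:N}Z^{1:N}}|q_{Z^{1:N}})}},
\end{align*}
with $r_{\mathcal{S}}\triangleq\sum_{l\in\mathcal{S}}r_{X_l}$.

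The only estimate not already performed in Lemma~\ref{lem3} is to check that each exponent above is at most $-N\xi$. Using the min-entropy bound, $\delta^*_{\mathcal{S}}(N)\leq\delta^*_{\mathcal{L}}(N)$, and \eqref{c2def1}, the exponent indexed by $\mathcal{S}$ is at most $N\bigl(\sum_{l\in\mathcal{S}}H(X_l|ZX_{1:l-1})-H(X_{\mathcal{S}}|Z)\bigr)-N|\mathcal{S}|\epsilon_2/2+N\delta^*_{\mathcal{L}}(N)$. The first bracket is nonpositive: by the chain rule $H(X_{\mathcal{S}}|Z)=\sum_{l\in\mathcal{S}}H(X_l|ZX_{\mathcal{S}\cap\llbracket1,l-1\rrbracket})$, and since conditioning reduces entropy while $X_{\mathcal{S}\cap\llbracket1,l-1\rrbracket}$ is a sub-collection of $X_{1:l-1}$, we have $H(X_l|ZX_{1:l-1})\leq H(X_l|ZX_{\mathcal{S}\cap\llbracket1,l-1\rrbracket})$. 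Hence, with $\epsilon_2=2(\delta^*_{\mathcal{L}}(N)+\xi)$ and $|\mathcal{S}|\geq1$, the exponent is at most $-N|\mathcal{S}|(\delta^*_{\mathcal{L}}(N)+\xi)+N\delta^*_{\mathcal{L}}(N)\leq-N\xi$. Summing over the $2^L-1$ nonempty subsets and taking the square root gives $\sqrt{(2^L-1)2^{-N\xi}}\leq 2^{L/2}2^{-N\xi/2}$; adding the $2/N$ contributed by the two uses of $\mathbb{V}(w_{X_{1:L}^{1:N}Z^{1:N}},q_{X_{1:L}^{1:N}Z^{1:N}})\leq1/N$ yields the claimed bound $\delta^{*(0)}(N)=2/N+2^{L/2}2^{-N\xi/2}$.

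I do not expect a genuine obstacle: this is a mechanical generalization of Lemma~\ref{lem3}. The one point requiring care is the direction of the inequality $\sum_{l\in\mathcal{S}}H(X_l|ZX_{1:l-1})\leq H(X_{\mathcal{S}}|Z)$, which is the opposite of a naive chain-rule guess but is exactly what forces the exponents to be negative; it holds precisely because each rate $r_{X_l}$ in \eqref{c2def1} conditions on the full prefix $X_{1:l-1}$ rather than only on $X_{\mathcal{S}\cap\llbracket1,l-1\rrbracket}$. The remaining bookkeeping — tracking $2^L$ subsets in place of $7$, and $\delta^*_{\mathcal{L}}(N)$ in place of $\delta_{\mathcal{A}}(N)$ — is identical to the two-transmitter case.
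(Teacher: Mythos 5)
Your proposal is correct and follows essentially the same route as the paper's proof: Lemma~\ref{lems1} applied to $q_{X_{\mathcal{L}}^{1:N}Z^{1:N}}$, the same triangle-inequality/data-processing chain contributing the $2/N$ term, Lemma~\ref{lem11} for the hash outputs, and then the chain rule together with the fact that conditioning reduces entropy to show each exponent is at most $-N\xi$, giving $2^{L/2}2^{-N\xi/2}$ after summing over the nonempty subsets. The point you flag about $\sum_{l\in\mathcal{S}}H(X_l|ZX_{1:l-1})\leq H(X_{\mathcal{S}}|Z)$ is exactly the step the paper handles in its inequalities $(g)$--$(h)$.
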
                \begin{proof} 
                                      Using Lemma \ref{lems1} in Appendix~\ref{appA}, with the substitutions $\mathcal{A}\leftarrow \mathcal{L}, T_{\mathcal{A}}^{1:N} \leftarrow X_{\mathcal{L}}^{1:N}$, applied to the product distribution $q_{X_{\mathcal{L}}^{1:N}Z^{1:N}}$, there exists a subnormalized non-negative function $w_{{X_{\mathcal{L}}^{1:N}}Z^{1:N}}$ such that for any $\mathcal{S}\subseteq \mathcal{L}$  
 \begin{align}
           \mathbb{V}({w}_{ X_{1:L}^{1:N}Z^{1:N}}, q_{ X_{1:L}^{1:N} Z^{1:N}}) &\leq   1/N ,\label{c2lem3eqn3}\\
                     H_{\infty}({w}_{{{X}_{\mathcal{S}}^{1:N}} Z^{1:N}}|q_{ Z^{1:N}}) &\geq N H({{ X}_{\mathcal{S}}}| Z)-N \delta^{*}_{\mathcal{S}}(N) ,\label{c2lem3eqn4}
       \end{align} 
       where the min-entropy $H_{\infty}({w}_{{{X}_{\mathcal{S}}^{1:N}} Z^{1:N}}|q_{ Z^{1:N}})$ is defined in Lemma \ref{lems1} in Appendix~\ref{appA}, and  $\delta^*_{\mathcal{S}}(N)\triangleq \log (\lvert\mathcal{X}_{\mathcal{S}} \rvert+3) \sqrt{\frac{2}{N}(L+\log N)}$.
           Let $q_{E_{1:L}}$ define the  distribution of 
        \begin{align}
            E_{1:L} \triangleq G_{X_{1:L}} ( X_{1:L}^{1:N}),\label{eqndefel}
        \end{align}
         where  $X_{1:L}^{1:N}$ is distributed according to $q_{X_{1:L}^{1:N}}$. We have 
               \begin{align*}\nonumber
      &  \mathbb{V}(q_{E_{1:L}Z^{1:N}},  p^{unif}_{\bar E_{1:L}}q_{Z^{1:N}})  \\ \nonumber       &\stackrel{(a)}\leq \mathbb{V}(q_{E_{1:L}Z^{1:N}}, w_{E_{1:L} Z^{1:N}})+   \mathbb{V}(w_{E_{1:L}Z^{1:N}},  p^{unif}_{\bar E_{1:L}}q_{Z^{1:N}})   \\ \nonumber
             & \stackrel{(b)} \leq \mathbb{V}( q_{ X_{1:L}^{1:N}  Z^{1:N}},{w}_{ X_{1:L}^{1:N}Z^{1:N} })+\mathbb{V}(w_{E_{1:L}Z^{1:N}},  p^{unif}_{\bar E_{1:L}}q_{Z^{1:N}}) \\ \nonumber
                 & \stackrel{(c)} \leq 1/N +   \mathbb{V}(w_{E_{1:L} Z^{1:N}},  p^{unif}_{\bar E_{1:L}} w_{Z^{1:N}})\\ & \phantom{--}+ \mathbb{V}( p^{unif}_{\bar E_{1:L}} w_{Z^{1:N}},  p^{unif}_{\bar E_{1:L}} q_{Z^{1:N}}) \\ \nonumber
           & \stackrel{(d)} \leq 2/N +   \mathbb{V}(w_{E_{1:L}Z^{1:N}},  p^{unif}_{\bar E_{1:L}} w_{Z^{1:N}}) \\ \nonumber
      & \stackrel{(e)}  \leq 2/N +  \sqrt{{ \displaystyle\sum_{\substack{{\mathcal S\subseteq\mathcal L},  {\mathcal S \neq \emptyset}}}}2^{r_{X_{\mathcal S}}-H_{\infty}({w}_{{{X}_{\mathcal{S}}^{1:N}} Z^{1:N}}|q_{ Z^{1:N}})}}\\
          \nonumber
            &\stackrel{(f)}\leq 2/N + {{{\sqrt{{ \displaystyle\sum_{\substack{{\mathcal S\subseteq\mathcal L}, {\mathcal S \neq \emptyset}}}}2^{r_{X_{\mathcal S}}-NH( X_{\mathcal S}|Z)+N \delta^{*}_{\mathcal{L}}(N)}}}}} \\ 
     &\stackrel{(g)}=2/N +  \left( \displaystyle\sum_{\substack{{\mathcal S\subseteq\mathcal L}, {\mathcal S \neq \emptyset}}} 2^{\sum_{l \in \mathcal S}(N(H(X_l|ZX_{1:l-1})-\frac{\epsilon_2}{2}))}\right.    \\ \nonumber        & \phantom{--} \left.   \times   2^{-N \sum_{l \in \mathcal S} H( X_{l}|ZX_{\llbracket 1,l-1\rrbracket \cap \mathcal{S}})+N \delta^{*}_{\mathcal{L}}(N)} \right)^{1/2}\\
          &\stackrel{(h)} \leq 2/N + {{{\sqrt{{ \displaystyle\sum_{\substack{{\mathcal S\subseteq\mathcal L}, {\mathcal S \neq \emptyset}}}}2^{\sum _{l \in \mathcal{S}}N(-\frac{\epsilon_2}{2}+ \delta^{*}_{\mathcal{L}}(N))}}}}}\\
             & \stackrel {(i)} \leq  2/N + \sqrt{ {2^L2^{-N\xi}} }  \xrightarrow{N \to +\infty}0,
         \end{align*}
    where $(a)$ holds by the triangle inequality, $(b)$ holds by~\eqref{eqndefel} and the data processing inequality, $(c)$  holds by~(\ref{c2lem3eqn3}) and the triangle inequality,   $(d)$ holds by (\ref{c2lem3eqn3}), $(e)$ holds by Lemma~\ref{lem11} in Appendix~\ref{appA}, $(f)$ holds by  (\ref{c2lem3eqn4}) and because for any $\mathcal{S}\subseteq \mathcal{L}$, $\delta^{*}_{\mathcal{S}}(N) \leq \delta^{*}_{\mathcal{L}}(N)$, $(g)$ holds by (\ref{c2def1}) and the chain rule, $(h)$ holds because conditioning reduces entropy, $(i)$ holds because $|\mathcal{S}|\geq 1$ and $\epsilon_2= 2( \delta^{*}_{\mathcal{L}}(N)+\xi)$. 
    \end{proof}
    We now show that in each encoding block, the random variables induced by the coding scheme approximate well the target distribution.
          \begin{lem} \label{lemblock2}
           For 
           Block $i\in \llbracket 1, k \rrbracket$, 
 \begin{align}\label{lemblock2eqn4}
\mathbb{V}(\widetilde p_{X_{1:L,i}^{1:N}Z_i^{1:N}}, q_{X_{1:L}^{1:N}Z^{1:N}})\leq \delta^{*}_i(N),
 \end{align}
where $\delta^{*}_i(N) \triangleq L(\delta(N) + \delta^{*(0)}(N))(\frac{L^{i}-1}{L-1}) + L^{i+1} \delta(N)$.
 \end{lem}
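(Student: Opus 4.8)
The plan is to prove Lemma~\ref{lemblock2} by induction on the block index $i \in \llbracket 1, k \rrbracket$, mirroring the argument of Lemma~\ref{lemblock} in the two-transmitter case but tracking the $L$-fold branching that arises from having $L$ independent source resolvability codes. First I would establish the base case $i=1$: since in Block~$1$ each transmitter $l$ directly feeds the uniform randomness $E_{l,1}$ of rate $H(X_l)+\epsilon_2/2$ into the source resolvability code $e_N^{X_l}$, and the $X_l$ are independent across $l$, I would bound $\mathbb{V}(\widetilde p_{X_{1:L,1}^{1:N}}, q_{X_{1:L}^{1:N}})$ by a telescoping sum of $L$ single-source terms, each of which is at most $\delta(N)$ by the source resolvability guarantee; then I would append the channel $q_{Z|X_{1:L}}$, which is common to both distributions, so the variational distance is unchanged, yielding the bound $L\,\delta(N) \le \delta^*_1(N)$.

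For the induction step, assume the bound holds for some $i \in \llbracket 1, k-1 \rrbracket$ and consider Block $i+1$. I would introduce auxiliary variables $\bar E_{l,i+1}$ drawn uniformly on $\{0,1\}^{r_{X_l}}$, independent across $l$, and let $p_{\bar X_{l,i+1}^{1:N}}$ denote the distribution of $\bar X_{l,i+1}^{1:N} \triangleq e_N^{X_l}(\bar E_{l,i+1}\lVert E_{l,i+1})$. Exactly as in the two-transmitter analysis (steps $(a)$--$(d)$ of Lemma~\ref{lemblock}), I would first reduce $\mathbb{V}(\widetilde p_{X_{1:L,i+1}^{1:N}Z_{i+1}^{1:N}}, q_{X_{1:L}^{1:N}Z^{1:N}})$ to a sum over $l$ of $\mathbb{V}(\widetilde p_{X_{l,i+1}^{1:N}}, q_{X_l^{1:N}})$ via independence of the $X_l$ and the triangle inequality, then insert $p_{\bar X_{l,i+1}^{1:N}}$, use the source resolvability bound to control $\mathbb{V}(p_{\bar X_{l,i+1}^{1:N}}, q_{X_l^{1:N}}) \le \delta(N)$ since the total rate into $e_N^{X_l}$ equals $H(X_l)+\epsilon_2/2$, and use the data processing inequality to pass from $\mathbb{V}(\widetilde p_{X_{l,i+1}^{1:N}}, p_{\bar X_{l,i+1}^{1:N}})$ to $\mathbb{V}(\widetilde p_{E_{1:L,i+1}}, p^{unif}_{\bar E_{1:L}})$. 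This leaves me needing to bound the latter joint variational distance.

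To bound $\mathbb{V}(\widetilde p_{E_{1:L,i+1}}, p^{unif}_{\bar E_{1:L}})$, I would use the triangle inequality to split it into $\mathbb{V}(\widetilde p_{E_{1:L,i+1}}, q_{G_{X_{1:L}}(X_{1:L}^{1:N})})$ and $\mathbb{V}(q_{G_{X_{1:L}}(X_{1:L}^{1:N})}, p^{unif}_{\bar E_{1:L}})$; the second term is at most $\delta^{*(0)}(N)$ by Lemma~\ref{c2lem3} (after marginalizing out $Z^{1:N}$), and the first term, by the data processing inequality applied through the hash functions $G_{X_{1:L}}$ together with the identity $\widetilde E_{l,i+1} = G_{X_l}(\widetilde X_{l,i}^{1:N})$ from Line~5 of Algorithm~\ref{algext}, is at most $\mathbb{V}(\widetilde p_{X_{1:L,i}^{1:N}}, q_{X_{1:L}^{1:N}}) \le \delta^*_i(N)$ by the induction hypothesis. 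Assembling these pieces gives $\mathbb{V}(\widetilde p_{X_{1:L,i+1}^{1:N}Z_{i+1}^{1:N}}, q_{X_{1:L}^{1:N}Z^{1:N}}) \le L\,\delta(N) + L(\delta^*_i(N) + \delta^{*(0)}(N))$, and I would verify by direct substitution that this equals $\delta^*_{i+1}(N)$ with the stated closed form $\delta^*_i(N) = L(\delta(N)+\delta^{*(0)}(N))\frac{L^i-1}{L-1} + L^{i+1}\delta(N)$, completing the induction.

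The main obstacle is bookkeeping rather than conceptual: one must be careful that all $L$ hash outputs are handled \emph{jointly} (not one at a time) when invoking Lemma~\ref{c2lem3}, since the near-uniformity of $E_{1:L}$ as a block is what the distributed leftover hash lemma delivers, and that it is legitimate to drop $Z^{1:N}$ by marginalization at the right moment while still being able to reintroduce the channel $q_{Z|X_{1:L}}$ at the start. A secondary subtlety is confirming that the recursion $\delta^*_{i+1}(N) = L\delta(N) + L(\delta^*_i(N)+\delta^{*(0)}(N))$ with $\delta^*_1(N)=L\delta(N)$ (or the slightly looser $\le L\delta(N)$ one actually proves) indeed telescopes to the claimed geometric-series closed form; this is the routine computation I would not grind through here but would state cleanly.
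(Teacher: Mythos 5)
Your proposal is correct and follows essentially the same induction as the paper's proof: the same base case (peel off the common channel, split over the $L$ independent transmitters), the same auxiliary uniformly-seeded variables $\bar X_{l,i+1}^{1:N}$, and the same chain of triangle inequality, data processing, the source resolvability guarantee, Lemma~\ref{c2lem3}, and the induction hypothesis, yielding the recursion $\delta^*_{i+1}(N)=L\left(\delta(N)+\delta^*_i(N)+\delta^{*(0)}(N)\right)$, which indeed telescopes to the stated closed form. The only cosmetic difference is that you bound the deviation of the hash outputs from uniformity jointly over all $L$ users and then pass to each marginal, whereas the paper bounds each per-user term directly by marginalizing Lemma~\ref{c2lem3}; both routes give the identical final bound.
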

  \begin{proof} 
  We prove the result by induction. For $i=1$, 
we have
\begin{align}
& \mathbb{V}(\widetilde p_{X_{1:L, 1}^{1:N}Z_1^{1:N}}, q_{X_{1:L}^{1:N}Z^{1:N}}) \nonumber \\\nonumber
&=\mathbb {V}( \widetilde p_{Z_1^{1:N}|X_{1:L, 1}^{1:N}}\widetilde p_{X_{1:L, 1}^{1:N}}, q_{Z^{1:N}|X_{1:L}^{1:N}} q_{X_{1:L}^{1:N}})\\\nonumber
&\stackrel{(a)}=\mathbb {V}( \widetilde p_{X_{1:L, 1}^{1:N}},  q_{X_{1:L}^{1:N}})\\
&\stackrel{(b)}\leq \sum_{l \in \mathcal{L}}\mathbb {V}( \widetilde p_ {X_{l, 1}^{1:N}},  q_{X_{l}^{1:N}}) \label{mid2eqn} 
\\\nonumber
&\stackrel{(c)}\leq  L\delta(N),
\end{align}
where 
$(a)$ holds because $q_{Z^{1:N}\vert X_{1:L}^{1:N}} = \widetilde p_{Z_1^{1:N}\lvert X_{1:L, 1}^{1:N}}$, $(b)$ holds by the triangle inequality and because ${(\widetilde X_{l,1}^{1:N})}_{l \in \mathcal{L}}
$ are jointly independent, and $ ({X_{l}^{1:N}})_{l \in \mathcal{L}}
$ are jointly independent, $(c)$ holds by the source resolvability codes used at the transmitters because $\frac{|E_{l, 1}|}{N}>H(X_{l}) + \epsilon_2/2, l \in \mathcal{L}$.

Assume now that, for $i \in \llbracket 2,  k-1 \rrbracket$,
\eqref{lemblock2eqn4} holds. For any $l \in \mathcal{L}$ and $i\in \llbracket 2,k \rrbracket$, consider  $\bar E_{l, i}$ distributed according to $ {p^{unif}_{\bar E_{l}}}$, the uniform distribution over $\{0,1\}^{r_{X_{l}}}$, and let $p_{\bar X_{l, i}^{1:N}}$ denote the distribution of $\bar X_{l, i}^{1:N}\triangleq e^{X_{l}}_N(\bar E_{l, i}, E_{l, i})$.
For $i \in \llbracket 1,  k-1 \rrbracket$, we have
       \begin{align}\nonumber
&\mathbb{V}(\widetilde p_{X_{1:L, i+1}^{1:N}Z_{i+1}^{1:N}}, q_{X_{1:L}^{1:N}Z^{1:N}})\\\nonumber
&\stackrel{(a)} \leq \sum_{l \in \mathcal{L}} \mathbb{V}(\widetilde p_{X_{l, i+1}^{1:N}}, q_{X_{l}^{1:N}})
\\\nonumber
&\stackrel{(b)} \leq \sum_{l \in \mathcal{L}}\mathbb{V}(\widetilde p_{X_{l, i+1}^{1:N}}, p_{\bar X_{l, i+1}^{1:N}})
+\mathbb{V}( p_{\bar X_{l, i+1}^{1:N}}, q_{X_{l}^{1:N}})\\\nonumber
&\stackrel{(c)}\leq \sum_{l \in \mathcal{L}}\mathbb{V}(\widetilde p_{X_{l, i+1}^{1:N}}, p_{\bar X_{l,i+1}^{1:N}})+\delta(N) \\ \nonumber
&\stackrel{(d)}\leq \sum_{l \in \mathcal{L}}\mathbb{V}(\widetilde p_{E_{l, i+1}}, p^{unif}_{\bar E_{l}})+\delta(N) \\ \nonumber
        & \stackrel{(e)}\leq  \sum_{l \in \mathcal{L}}\left(\delta(N)+ \mathbb{V}(\widetilde p_{E_{l, i+1}},  q_{G_{X_{l}}({X^{1:N}_{l}})})\right. \\\nonumber & \phantom{--}\left.+\mathbb{V}(q_{G_{X_{l}}({X^{1:N}_{l}})},  p^{unif}_{\bar E_{l}}  )\right)  \\ \nonumber
        &\stackrel{(f)} =\sum_{l \in \mathcal{L}}\left(\delta(N)+ \mathbb{V}(\widetilde p_{G_{X_{l}}({X^{1:N}_{l, i}})},  q_{G_{X_{l}}({X^{1:N}_{l}})})\right.\\ \nonumber & \phantom{--}\left.+\mathbb{V}(q_{G_{X_{l}}({X^{1:N}_{l}})},  p^{unif}_{\bar E_{l}}  ) \right) \\ 
      \nonumber  & \stackrel{(g)}\leq  \sum_{l \in \mathcal{L}} \delta(N)+ \mathbb{V}(\widetilde p_{{X}^{1:N}_{l, i}},  q_{{X^{1:N}_{l}}})+ \delta^{*(0)}(N)
        \\ \nonumber
          & \stackrel{(h)}\leq \sum_{l \in \mathcal{L}}\delta(N)+\delta^{*}_i(N)+ \delta^{*(0)}(N) 
            \\ \nonumber
          & =L\left (\delta(N)+\delta^{*}_i(N)+ \delta^{*(0)}(N)\right) ,
\end{align}
where $(a)$ holds similar to (\ref{mid2eqn}), $(b)$ holds by the triangle inequality, $(c)$ holds by the source resolvability codes used at the transmitters because $\frac{|\bar E_{l, i}|+|E_{l, i}|}{N} = H(X_{l}) + \epsilon_2/2, l\in \mathcal{L}$, $(d)$ holds by the data processing inequality, $(e)$ holds by the triangle inequality, $(f)$ holds because for any $l\in \mathcal{L}$, $\widetilde E_{l, i+1}\triangleq G_{X_{l}}(X_{l, i}^{1:N})$ by Line $5$ of Algorithm \ref{algext}, $(g)$ holds by the data processing inequality and Lemma \ref{c2lem3}, $(h)$ holds by the induction hypothesis.
  \end{proof}

Next, we show that the recycled randomness in Block~$i \in \llbracket  2, k\rrbracket$ is almost independent from the channel outputs of Block~$i-1$.
  \begin{lem}
    For $i\in \llbracket 2, k \rrbracket$, we have
 \begin{align*}
\mathbb{V}(\widetilde{p}_{E_{1:L, i}{Z}_{i-1}^{1:N} }, \widetilde{p}_{ E_{1:L, i}} \widetilde{p}_{{Z}_{i-1}^{1:N}})\leq \delta^{*(1)}_i(N).
 \end{align*}where $\delta^{*(1)}_i(N) \triangleq 4 \delta_{i-1}^{*}(N) + 2 \delta^{*(0)}(N) $. 
 \label{c2lem6}
 \end{lem}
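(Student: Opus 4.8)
The plan is to transcribe, almost verbatim, the proof of Lemma~\ref{lem5} to the $L$-transmitter setting: the triple of hash outputs $(\widetilde E_i,\widetilde D_i,\widetilde F_i)$ is replaced by the single vector $\widetilde E_{1:L,i}\triangleq G_{X_{1:L}}(\widetilde X_{1:L,i-1}^{1:N})$, Lemma~\ref{c2lem3} is used in place of Lemma~\ref{lem3}, and Lemma~\ref{lemblock2} is used in place of Lemma~\ref{lemblock}.

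First I would reduce the quantity of interest to a single distance-to-uniform term. Using the triangle inequality with $p^{unif}_{\bar E_{1:L}}\widetilde p_{Z_{i-1}^{1:N}}$ as an intermediate distribution,
\begin{align*}
\mathbb{V}(\widetilde p_{E_{1:L,i}Z_{i-1}^{1:N}},\widetilde p_{E_{1:L,i}}\widetilde p_{Z_{i-1}^{1:N}})
&\leq \mathbb{V}(\widetilde p_{E_{1:L,i}Z_{i-1}^{1:N}},p^{unif}_{\bar E_{1:L}}\widetilde p_{Z_{i-1}^{1:N}})\\
&\quad+\mathbb{V}(p^{unif}_{\bar E_{1:L}}\widetilde p_{Z_{i-1}^{1:N}},\widetilde p_{E_{1:L,i}}\widetilde p_{Z_{i-1}^{1:N}})\\
&\leq 2\,\mathbb{V}(\widetilde p_{E_{1:L,i}Z_{i-1}^{1:N}},p^{unif}_{\bar E_{1:L}}\widetilde p_{Z_{i-1}^{1:N}}),
\end{align*}
where the last step holds because the second term equals $\mathbb{V}(p^{unif}_{\bar E_{1:L}},\widetilde p_{E_{1:L,i}})$, which is at most the first term by the data processing inequality (marginalizing out $Z_{i-1}^{1:N}$).

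Next I would control the remaining term by inserting $q_{E_{1:L}Z^{1:N}}$ and $p^{unif}_{\bar E_{1:L}}q_{Z^{1:N}}$ and applying the triangle inequality once more:
\begin{align*}
\mathbb{V}(\widetilde p_{E_{1:L,i}Z_{i-1}^{1:N}},p^{unif}_{\bar E_{1:L}}\widetilde p_{Z_{i-1}^{1:N}})
&\leq \mathbb{V}(\widetilde p_{E_{1:L,i}Z_{i-1}^{1:N}},q_{E_{1:L}Z^{1:N}})\\
&\quad+\mathbb{V}(q_{E_{1:L}Z^{1:N}},p^{unif}_{\bar E_{1:L}}q_{Z^{1:N}})\\
&\quad+\mathbb{V}(p^{unif}_{\bar E_{1:L}}q_{Z^{1:N}},p^{unif}_{\bar E_{1:L}}\widetilde p_{Z_{i-1}^{1:N}}).
\end{align*}
For the first term, $\widetilde E_{1:L,i}=G_{X_{1:L}}(\widetilde X_{1:L,i-1}^{1:N})$ by Line~5 of Algorithm~\ref{algext}, and $\widetilde Z_{i-1}^{1:N}$ is the Block-$(i-1)$ channel output, so the stochastic kernel $(x_{1:L}^{1:N},z^{1:N})\mapsto(G_{X_{1:L}}(x_{1:L}^{1:N}),z^{1:N})$ applied to $\widetilde p_{X_{1:L,i-1}^{1:N}Z_{i-1}^{1:N}}$ and to $q_{X_{1:L}^{1:N}Z^{1:N}}$ produces $\widetilde p_{E_{1:L,i}Z_{i-1}^{1:N}}$ and $q_{E_{1:L}Z^{1:N}}$, respectively; the data processing inequality and Lemma~\ref{lemblock2} then bound this term by $\delta^{*}_{i-1}(N)$. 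The second term is at most $\delta^{*(0)}(N)$ by Lemma~\ref{c2lem3}. The third term equals $\mathbb{V}(q_{Z^{1:N}},\widetilde p_{Z_{i-1}^{1:N}})$, which is again at most $\delta^{*}_{i-1}(N)$ by the data processing inequality (now marginalizing out $X_{1:L}^{1:N}$) and Lemma~\ref{lemblock2}. Combining, the left-hand side above is at most $2\delta^{*}_{i-1}(N)+\delta^{*(0)}(N)$, and together with the first reduction this yields $\mathbb{V}(\widetilde p_{E_{1:L,i}Z_{i-1}^{1:N}},\widetilde p_{E_{1:L,i}}\widetilde p_{Z_{i-1}^{1:N}})\leq 4\delta^{*}_{i-1}(N)+2\delta^{*(0)}(N)=\delta^{*(1)}_i(N)$.

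I do not anticipate a genuine obstacle: the argument is a routine multi-user transcription of Lemma~\ref{lem5}. The only points needing care are the bookkeeping of the data-processing steps---verifying that literally the same stochastic kernel is applied to both the induced and the target distributions---and checking that the error terms aggregate to exactly $\delta^{*(1)}_i(N)$ as defined in the statement.
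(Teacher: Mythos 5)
Your proposal is correct and follows essentially the same route as the paper's proof: the same two triangle-inequality decompositions (first through $p^{unif}_{\bar E_{1:L}}\widetilde p_{Z_{i-1}^{1:N}}$, then through $q_{E_{1:L}Z^{1:N}}$ and $p^{unif}_{\bar E_{1:L}}q_{Z^{1:N}}$), with the same invocations of the data processing inequality, Lemma~\ref{c2lem3}, and Lemma~\ref{lemblock2}, yielding $4\delta^{*}_{i-1}(N)+2\delta^{*(0)}(N)$ exactly as in the paper.
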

 \begin{proof}
We have 
\begin{align}\nonumber
&\mathbb{V}(\widetilde{p}_{E_{1:L,i}{Z}_{i-1}^{1:N}}, \widetilde{p}_{ E_{1:L,i}} \widetilde{p}_{{Z}_{i-1}^{1:N}})\\\nonumber
& \stackrel{(a)}\leq \mathbb{V}(\widetilde p_{E_{{1:L,i}} Z_{i-1}^{1:N}},  p^{unif}_{\bar E_{{1:L}}} \widetilde p_{Z_{i-1}^{1:N}} )\\\nonumber
& \phantom{--}+\mathbb{V}(p^{unif}_{\bar E_{1:L}}\widetilde p_{Z_{i-1}^{1:N}} , \widetilde{p}_{ E_{1:L,i} }\widetilde{p}_{{Z}_{i-1}^{1:N}})\\\nonumber
& \leq 2 \mathbb{V}(\widetilde p_{E_{{1:L,i}} Z_{i-1}^{1:N}},   p^{unif}_{\bar E_{1:L}}\widetilde p_{Z_{i-1}^{1:N}})\\\nonumber
 & \stackrel{(b)}\leq  2\left( \mathbb{V}(\widetilde p_{E_{1:L, i} Z_{i-1}^{1:N}},  q_{G_{X_{1:L}}({X^{1:N}_{1:L}})Z^{1:N}})\right.\\\nonumber & \phantom{--}\left. +\mathbb{V}(q_{G_{X_{1:L}}({X^{1:N}_{1:L})Z^{1:N}}},  p^{unif}_{\bar E_{1:L}}q_{Z^{1:N}})\right.\\\nonumber & \phantom{--}\left.+\mathbb{V}(p^{unif}_{\bar E_{1:L}}q_{Z^{1:N}},p^{unif}_{\bar E_{1:L}} \widetilde p_{Z^{1:N}_{i-1}})\right)  \\ \nonumber
   & \stackrel{(c)}\leq 2 (\mathbb{V}(\widetilde p_{{X}_{1:L, i-1}^{1:N}Z_{i-1}^{1:N}},  q_{{X}_{1:L}^{1:N}Z^{1:N}})+ {\delta^{*(0)}(N)}\\\nonumber & \phantom{--}+\mathbb{V}(q_{Z^{1:N}}, \widetilde p_{Z_{i-1}^{1:N}})) \\
& \stackrel{(d)}\leq 4 \delta^{*}_{i-1}(N) + 2 \textcolor{black}{\delta^{*(0)}(N)},
\nonumber
 \end{align}
 where $(a)$ and $(b)$ hold by the triangle inequality, $(c)$ holds by the data processing inequality because $\widetilde E_{1:L, i}\triangleq G_{X_{1:L}}(X_{1:L, i-1}^{1:N})$ by Line~$5$ of Algorithm~\ref{algext},
 and by Lemma~\ref{c2lem3}, $(d)$ holds by Lemma \ref{lemblock2}.
  \end{proof}  

Next, we show that the recycled randomness in Block $i\in \llbracket 2, k \rrbracket$ is almost independent of the channel outputs in Blocks~$1$ to $i-1$ considered jointly.
 \begin{lem} \label{c2lemnew}
   For $i\in \llbracket 2, k \rrbracket$, we have$$
\mathbb{V} \left( \widetilde{p}_{E_{1:L, i} Z_{1:i-1}^{1:N} }, \widetilde{p}_{E_{1:L, i}}  \widetilde{p}_{Z_{1:i-1}^{1:N}}  \right) \leq \delta_i^{*(2)}(N),
$$
where $\delta_i^{*(2)}(N) \triangleq (2^{i-1}-1) (4 \delta_{i-1}^{*}(N)+ 2 \delta^{*(0)}(N))$.
\end{lem}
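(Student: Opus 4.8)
The plan is to prove Lemma~\ref{c2lemnew} by induction on $i$, mirroring the argument behind Lemma~\ref{lemnew} in the two-transmitter case, with the aggregated hash output $\widetilde E_{1:L,i}$ playing the role previously played by the triple $(\widetilde D_i,\widetilde E_i,\widetilde F_i)$. For the base case $i=2$, the claimed inequality reads $\mathbb V(\widetilde p_{E_{1:L,2}Z_1^{1:N}},\widetilde p_{E_{1:L,2}}\widetilde p_{Z_1^{1:N}})\le \delta_2^{*(2)}(N)$, and since $\delta_2^{*(2)}(N)=4\delta_1^{*}(N)+2\delta^{*(0)}(N)=\delta_2^{*(1)}(N)$, this is exactly Lemma~\ref{c2lem6}.

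For the inductive step, assume the bound for $i-1$, i.e.\ $\mathbb V(\widetilde p_{E_{1:L,i-1}Z_{1:i-2}^{1:N}},\widetilde p_{E_{1:L,i-1}}\widetilde p_{Z_{1:i-2}^{1:N}})\le\delta_{i-1}^{*(2)}(N)$. First I would apply the triangle inequality to $\mathbb V(\widetilde p_{E_{1:L,i}Z_{1:i-1}^{1:N}},\widetilde p_{E_{1:L,i}}\widetilde p_{Z_{1:i-1}^{1:N}})$, inserting the two hybrids $\widetilde p_{Z_{1:i-2}^{1:N}}\widetilde p_{Z_{i-1}^{1:N}E_{1:L,i}}$ and $\widetilde p_{Z_{1:i-2}^{1:N}}\widetilde p_{Z_{i-1}^{1:N}}\widetilde p_{E_{1:L,i}}$, exactly as in the proof of Lemma~\ref{lemnew}. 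Cancelling the common factors then produces three terms: (i) $\mathbb V(\widetilde p_{E_{1:L,i}Z_{1:i-1}^{1:N}},\widetilde p_{Z_{1:i-2}^{1:N}}\widetilde p_{Z_{i-1}^{1:N}E_{1:L,i}})$; (ii) $\mathbb V(\widetilde p_{E_{1:L,i}Z_{i-1}^{1:N}},\widetilde p_{E_{1:L,i}}\widetilde p_{Z_{i-1}^{1:N}})$; and (iii) $\mathbb V(\widetilde p_{Z_{1:i-2}^{1:N}}\widetilde p_{Z_{i-1}^{1:N}},\widetilde p_{Z_{1:i-1}^{1:N}})$. Term (ii) is at most $\delta_i^{*(1)}(N)$ by Lemma~\ref{c2lem6}.

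For terms (i) and (iii) I would observe that both are images, under the data processing inequality, of the single quantity $Q\triangleq\mathbb V(\widetilde p_{Z_{1:i-1}^{1:N}E_{1:L,i-1}E_{1:L,i}},\widetilde p_{Z_{1:i-2}^{1:N}}\widetilde p_{Z_{i-1}^{1:N}E_{1:L,i-1}E_{1:L,i}})$: term (i) is obtained by marginalizing out $E_{1:L,i-1}$, and term (iii) by marginalizing out $(E_{1:L,i-1},E_{1:L,i})$, so $(\mathrm i)+(\mathrm{iii})\le 2Q$. The one genuinely delicate point—the main obstacle—is establishing and cleanly exploiting the Markov chain
$$\bigl(\widetilde E_{1:L,i},\widetilde Z_{i-1}^{1:N}\bigr)\ -\ \widetilde E_{1:L,i-1}\ -\ \widetilde Z_{1:i-2}^{1:N},$$
which holds because, conditioned on $\widetilde E_{1:L,i-1}$, the pair $(\widetilde E_{1:L,i},\widetilde Z_{i-1}^{1:N})$ is a deterministic function of $\widetilde E_{1:L,i-1}$, the fresh randomness $E_{1:L,i-1}$ fed into Block~$i-1$ (recall $\widetilde E_{l,i}=G_{X_l}(\widetilde X_{l,i-1}^{1:N})$ with $\widetilde X_{l,i-1}^{1:N}=e_N^{X_l}(\widetilde E_{l,i-1}\lVert E_{l,i-1})$), and the Block~$i-1$ channel noise $N_{i-1}$, while $\widetilde Z_{1:i-2}^{1:N}$ depends only on Blocks $1,\dots,i-2$; since $(E_{1:L,i-1},N_{i-1})$ is independent of $(\widetilde Z_{1:i-2}^{1:N},\widetilde E_{1:L,i-1})$, the chain follows. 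Using it to factor $\widetilde p_{Z_{1:i-1}^{1:N}E_{1:L,i-1}E_{1:L,i}}=\widetilde p_{Z_{1:i-2}^{1:N}E_{1:L,i-1}}\,\widetilde p_{Z_{i-1}^{1:N}E_{1:L,i}\mid E_{1:L,i-1}}$ and cancelling the common conditional factor reduces $Q$ to $\mathbb V(\widetilde p_{Z_{1:i-2}^{1:N}E_{1:L,i-1}},\widetilde p_{Z_{1:i-2}^{1:N}}\widetilde p_{E_{1:L,i-1}})$, which the induction hypothesis bounds by $\delta_{i-1}^{*(2)}(N)$.

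Finally I would collect the estimates: the left-hand side is at most $\delta_i^{*(1)}(N)+2\delta_{i-1}^{*(2)}(N)$. Expanding $\delta_i^{*(1)}(N)=4\delta_{i-1}^{*}(N)+2\delta^{*(0)}(N)$ and $\delta_{i-1}^{*(2)}(N)=(2^{i-2}-1)(4\delta_{i-2}^{*}(N)+2\delta^{*(0)}(N))$, and using the monotonicity of $j\mapsto\delta_j^{*}(N)$ (so $\delta_{i-2}^{*}(N)\le\delta_{i-1}^{*}(N)$) together with $1+2(2^{i-2}-1)=2^{i-1}-1$, this is at most $(2^{i-1}-1)(4\delta_{i-1}^{*}(N)+2\delta^{*(0)}(N))=\delta_i^{*(2)}(N)$, closing the induction. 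Apart from the Markov-chain bookkeeping, every step is a routine application of the triangle inequality, the data processing inequality, and Lemma~\ref{c2lem6}.
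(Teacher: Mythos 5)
Your proposal is correct and follows essentially the same route as the paper's proof: induction with base case from Lemma~\ref{c2lem6}, the same two hybrid distributions in the triangle inequality, data processing to bound the sum of the remaining two terms by twice the joint quantity involving $E_{1:L,i-1:i}$, the Markov chain $(\widetilde E_{1:L,i},\widetilde Z_{i-1}^{1:N})-\widetilde E_{1:L,i-1}-\widetilde Z_{1:i-2}^{1:N}$ to cancel the common conditional factor, and the induction hypothesis plus monotonicity of $\delta_j^{*}(N)$ to close the recursion. Your justification of the Markov chain via the fresh randomness and channel noise of Block $i-1$ is a valid elaboration of what the paper asserts.
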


\begin{proof}
We prove the result by induction. The lemma is true for $i=2$ by Lemma \ref{c2lem6}. Assume now that the lemma holds for $i \in \llbracket 2,k-1 \rrbracket$. Then, for $i \in \llbracket 3,k \rrbracket$, we have 
   \begin{align*}
    &  \mathbb{V} \left( \widetilde{p}_{Z_{1:i-1}^{1:N} E_{1:L, i}},   \widetilde{p}_{Z_{1:i-1}^{1:N}} \widetilde{p}_{E_{1:L, i}} \right)\\
    & \stackrel{(a)}{\leq}  
       \mathbb{V} \left( \widetilde{p}_{Z_{1:i-1}^{1:N} E_{1:L, i}} , \widetilde{p}_{Z_{1:i-2}^{1:N}} \widetilde{p}_{ Z_{i-1}^{1:N}  E_{1:L, i} } \right) .\\\nonumber & \phantom{--}+ \mathbb{V} \left( \widetilde{p}_{Z_{1:i-2}^{1:N}} \widetilde{p}_{ Z_{i-1}^{1:N} E_{1:L, i} }, \widetilde{p}_{Z_{1:i-2}^{1:N}} \widetilde{p}_{Z_{i-1}^{1:N}} \widetilde{p}_{ E_{1:L, i} } \right) \\ & \phantom{--} + \mathbb{V} \left( \widetilde{p}_{Z_{1:i-2}^{1:N}} \widetilde{p}_{Z_{i-1}^{1:N}} \widetilde{p}_{E_{1:L, i}}, \widetilde{p}_{Z_{1:i-1}^{1:N}} \widetilde{p}_{E_{1:L, i}} \right)\\
            & =  \mathbb{V} \left( \widetilde{p}_{Z_{1:i-1}^{1:N}E_{1:L, i}} , \widetilde{p}_{Z_{1:i-2}^{1:N}} \widetilde{p}_{ Z_{i-1}^{1:N}E_{1:L, i} } \right)\\\nonumber & \phantom{--}
         +\mathbb{V} \left(  \widetilde{p}_{Z_{i-1}^{1:N} E_{1:L, i}}  , \widetilde{p}_{Z_{i-1}^{1:N}} \widetilde{p}_{E_{1:L, i} }\right)\\\nonumber & \phantom{--}+ \mathbb{V} \left( \widetilde{p}_{Z_{1:i-2}^{1:N}} \widetilde{p}_{Z_{i-1}^{1:N}} , \widetilde{p}_{Z_{1:i-1}^{1:N}}  \right)\\
    & \stackrel{(b)}{\leq}   \delta^{*(1)}_i(N) + 2\mathbb{V} \left( \widetilde{p}_{Z_{1:i-1}^{1:N} E_{1:L, i-1:i}} , \widetilde{p}_{Z_{1:i-2}^{1:N}} \widetilde{p}_{ Z_{i-1}^{1:N}  E_{1:L, i-1:i} } \right)
          \\
            & \stackrel{(c)}{=}  \delta^{*(1)}_i(N)
         +2 \mathbb{V} \left( \widetilde{p}_{Z_{1:i-2}^{1:N} E_{1:L, i-1}} \widetilde{p}_{Z_{i-1}^{1:N} E_{1:L, i} | E_{1:L, i-1}}  ,\right.\\\nonumber & \phantom{--}\left. \widetilde{p}_{Z_{1:i-2}^{1:N}} \widetilde{p}_{ Z_{i-1}^{1:N} E_{1:L, i-1:i}} \right)\\
            & =   \delta^{*(1)}_i(N)+ 2 \mathbb{V} \left( \widetilde{p}_{Z_{1:i-2}^{1:N}E_{1:L, i-1}}   , \widetilde{p}_{Z_{1:i-2}^{1:N}} \widetilde{p}_{   E_{1:L, i-1}} \right)\\
                          & \stackrel{(d)}{\leq}  \delta^{*(1)}_i(N)  + 2 \delta_{i-1}^{*(2)}(N),
  \end{align*}
 where $(a)$ holds by the triangle inequality, $(b)$  holds  by Lemma~\ref{c2lem6},   $(c)$ holds by the  Markov chain $(\widetilde{E}_{1:L, i}, \widetilde{Z}_{i-1}^{1:N}) - \widetilde{E}_{1:L, i-1} - \widetilde{Z}_{1:i-2}^{1:N}$, $(d)$ holds by the induction hypothesis. 
\end{proof}
The following lemmas show that the channel outputs of all the blocks are asymptotically independent, and  that the  target  output  distribution  is  well approximated jointly over all blocks.
    \begin{lem}
    We have
$$
\mathbb{V} \left( \widetilde{p}_{Z_{1:k}^{1:N}},\prod_{i =1}^k  \widetilde{p}_{Z_{i}^{1:N}} \right)\leq (k-1)  \delta_k^{*(2)} (N),
$$ where $\delta_k^{*(2)} (N)$ is defined in Lemma \ref{c2lemnew}. \label{c2lem8}
  \end{lem}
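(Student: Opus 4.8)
The plan is to reproduce, in the $L$-transmitter setting, the argument used for Lemma~\ref{lem8}, with the Case~1 recycled-randomness triple $(\widetilde D_i,\widetilde E_i,\widetilde F_i)$ replaced by the aggregated recycled randomness $\widetilde E_{1:L,i}$ of Algorithm~\ref{algext}, and with Lemma~\ref{lemnew} replaced by its $L$-transmitter counterpart Lemma~\ref{c2lemnew}. First I would telescope the joint-versus-product distance through the triangle inequality,
$$\mathbb{V}\left(\widetilde p_{Z_{1:k}^{1:N}},\prod_{i=1}^{k}\widetilde p_{Z_i^{1:N}}\right)\leq\sum_{i=2}^{k}\mathbb{V}\left(\widetilde p_{Z_{1:i}^{1:N}}\prod_{j=i+1}^{k}\widetilde p_{Z_j^{1:N}},\;\widetilde p_{Z_{1:i-1}^{1:N}}\prod_{j=i}^{k}\widetilde p_{Z_j^{1:N}}\right),$$
and then cancel the common tail factor $\prod_{j=i+1}^{k}\widetilde p_{Z_j^{1:N}}$ inside each summand, which reduces the task to bounding $\mathbb{V}(\widetilde p_{Z_{1:i}^{1:N}},\widetilde p_{Z_{1:i-1}^{1:N}}\widetilde p_{Z_i^{1:N}})$ for every $i\in\llbracket 2,k\rrbracket$.

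For a fixed $i$, I would lift this marginal distance to a distance between distributions that also carry the recycled block $\widetilde E_{1:L,i}$: by the data processing inequality (the left-hand distributions being the $Z_{1:i}^{1:N}$-marginals of the right-hand ones),
$$\mathbb{V}\left(\widetilde p_{Z_{1:i}^{1:N}},\widetilde p_{Z_{1:i-1}^{1:N}}\widetilde p_{Z_i^{1:N}}\right)\leq\mathbb{V}\left(\widetilde p_{Z_{1:i}^{1:N}E_{1:L,i}},\;\widetilde p_{Z_{1:i-1}^{1:N}}\widetilde p_{Z_i^{1:N}E_{1:L,i}}\right).$$
Using the Markov chain $\widetilde Z_i^{1:N}-\widetilde E_{1:L,i}-\widetilde Z_{1:i-1}^{1:N}$, I would factor $\widetilde p_{Z_{1:i}^{1:N}E_{1:L,i}}=\widetilde p_{Z_{1:i-1}^{1:N}|E_{1:L,i}}\,\widetilde p_{Z_i^{1:N}E_{1:L,i}}$, so that the right-hand side above collapses, after cancelling the common factor $\widetilde p_{Z_i^{1:N}|E_{1:L,i}}$, to $\mathbb{V}(\widetilde p_{Z_{1:i-1}^{1:N}E_{1:L,i}},\widetilde p_{Z_{1:i-1}^{1:N}}\widetilde p_{E_{1:L,i}})$, which is exactly the quantity bounded by $\delta_i^{*(2)}(N)$ in Lemma~\ref{c2lemnew}.

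Finally I would sum over $i\in\llbracket 2,k\rrbracket$ and use $\sum_{i=2}^{k}\delta_i^{*(2)}(N)\leq(k-1)\max_{j\in\llbracket 2,k\rrbracket}\delta_j^{*(2)}(N)=(k-1)\delta_k^{*(2)}(N)$, the last equality holding because $i\mapsto\delta_i^{*(2)}(N)=(2^{i-1}-1)(4\delta_{i-1}^{*}(N)+2\delta^{*(0)}(N))$ is nondecreasing in $i$ (both $2^{i-1}-1$ and $\delta_{i-1}^{*}(N)$ increase with $i$). The whole argument is routine given the earlier lemmas; the only step I would treat with care --- the main obstacle, such as it is --- is justifying the Markov chain $\widetilde Z_i^{1:N}-\widetilde E_{1:L,i}-\widetilde Z_{1:i-1}^{1:N}$, that is, that conditioning the Block-$i$ channel output on the recycled bits $\widetilde E_{1:L,i}$ already renders it independent of all earlier channel outputs. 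This follows from the dependency structure of Algorithm~\ref{algext}: the bits $E_{1:L,i}$ injected afresh in Block~$i$ are independent of Blocks~$1,\dots,i-1$, while $\widetilde E_{1:L,i}=G_{X_{1:L}}(\widetilde X_{1:L,i-1}^{1:N})$ is a deterministic function of the Block-$(i-1)$ codewords, so $\widetilde Z_i^{1:N}$ depends on the past solely through $\widetilde E_{1:L,i}$; this is the exact $L$-transmitter analogue of the Markov chain invoked at step~$(b)$ of the proof of Lemma~\ref{lem8}.
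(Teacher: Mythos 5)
Your proposal is correct and follows essentially the same route as the paper, which omits this proof precisely because it is the $L$-transmitter transcription of the proof of Lemma~\ref{lem8}: telescoping via the triangle inequality, lifting to the joint distribution with the recycled randomness $\widetilde E_{1:L,i}$ by data processing, factoring through the Markov chain $\widetilde Z_i^{1:N}-\widetilde E_{1:L,i}-\widetilde Z_{1:i-1}^{1:N}$, and invoking Lemma~\ref{c2lemnew}. Your explicit justification of the Markov chain and of the monotonicity of $i\mapsto\delta_i^{*(2)}(N)$ (so that the maximum is attained at $i=k$) are details the paper leaves implicit, and they are argued correctly.
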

             \begin{lem} \label{c2lem9}  For block $i\in \llbracket 1, k \rrbracket$,
         we have
$$\mathbb{V} \left( \widetilde{p}_{Z_{1:k}^{1:N}},   q_{Z^{1:kN}}  \right) \leq    (k-1)  \delta_k^{*(2)} (N)+ k  \delta^{*}_{k}(N),
$$    where $\delta_k^{*(2)} (N) $ is defined in Lemma \ref{c2lemnew} and $\delta_k^{*}(N)$ is defined in Lemma \ref{lemblock2}.  \end{lem}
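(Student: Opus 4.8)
The plan is to follow the two–transmitter argument of Lemma~\ref{lemindep} essentially verbatim, with the substitutions $\delta_k^{(2)}(N)\rightarrow\delta_k^{*(2)}(N)$ and $\delta_k(N)\rightarrow\delta_k^{*}(N)$, since the $L$–transmitter coding scheme of Section~\ref{sec:mult} has the same block–Markov structure as the Case~1 scheme. First I would apply the triangle inequality to split
\begin{align*}
\mathbb{V}\!\left(\widetilde{p}_{Z_{1:k}^{1:N}},\, q_{Z^{1:kN}}\right)
\leq \mathbb{V}\!\left(\widetilde{p}_{Z_{1:k}^{1:N}},\, \prod_{i=1}^k \widetilde{p}_{Z_i^{1:N}}\right)
+ \mathbb{V}\!\left(\prod_{i=1}^k \widetilde{p}_{Z_i^{1:N}},\, q_{Z^{1:kN}}\right),
\end{align*}
and bound the first term by $(k-1)\,\delta_k^{*(2)}(N)$ using Lemma~\ref{c2lem8} (whose proof in turn rests on the Markov chain $\widetilde{Z}_i^{1:N}-\widetilde{E}_{1:L,i}-\widetilde{Z}_{1:i-1}^{1:N}$ and Lemma~\ref{c2lemnew}).

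For the second term I would telescope over the $k$ blocks, using $q_{Z^{1:kN}}=\prod_{i=1}^k q_{Z^{1:N}}$ and peeling off one factor at a time, which by the triangle inequality gives $\mathbb{V}\!\left(\prod_{i=1}^k \widetilde{p}_{Z_i^{1:N}},\, \prod_{i=1}^k q_{Z^{1:N}}\right)\leq \sum_{i=1}^k \mathbb{V}\!\left(\widetilde{p}_{Z_i^{1:N}},\, q_{Z^{1:N}}\right)$; this is exactly the induction labelled $(d)$ in the proof of Lemma~\ref{lemindep}. Each summand is then controlled by the data processing inequality applied to Lemma~\ref{lemblock2}: marginalizing $\widetilde p_{X_{1:L,i}^{1:N}Z_i^{1:N}}$ and $q_{X_{1:L}^{1:N}Z^{1:N}}$ onto the $Z$–coordinate yields $\mathbb{V}(\widetilde{p}_{Z_i^{1:N}},\, q_{Z^{1:N}})\leq \delta_i^{*}(N)\leq \delta_k^{*}(N)$, where the last inequality holds because $\delta_i^{*}(N)$ is nondecreasing in $i$. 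Summing over $i\in\llbracket 1,k\rrbracket$ gives $k\,\delta_k^{*}(N)$, and adding the two contributions yields the stated bound.

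I do not expect a genuine obstacle: the argument is bookkeeping, and the only points needing a moment of care are that Lemma~\ref{c2lem8} is available in the stated form (so that the product approximation comes with constant $(k-1)\delta_k^{*(2)}(N)$) and that $\delta_i^{*}(N)$ is monotone in $i$, so that the per–block errors can be uniformly replaced by $\delta_k^{*}(N)$ in the telescoped sum; both are immediate from the definitions. If anything, the mildly delicate part is purely organizational — making sure Lemma~\ref{c2lem8} (and hence the chain of Lemmas~\ref{c2lem6}--\ref{c2lemnew}) is in place before this lemma, together with the Markov chains established in the analysis of Section~\ref{sec:mult}.
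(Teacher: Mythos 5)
Your proposal is correct and follows essentially the same route as the paper: the paper itself omits the proof of this lemma, noting it is identical to that of Lemma~\ref{lemindep}, which proceeds exactly as you describe — triangle inequality plus Lemma~\ref{c2lem8} for the product approximation, then peeling off one block at a time and bounding each per-block term by $\delta_i^{*}(N)\leq\delta_k^{*}(N)$ via Lemma~\ref{lemblock2} and monotonicity.
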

   The proofs of Lemmas \ref{c2lem8} and \ref{c2lem9} are similar to the proofs of Lemmas~\ref{lem8} and~\ref{lemindep}, respectively, and are thus omitted. Finally, the next lemma shows that the encoding scheme of Section~\ref{sec:mult} achieves the desired rate-tuple.
\begin{lem}
   Let $\epsilon_0>0$. For $k$ large enough and any~$l \in \mathcal{L}$, we have
   $
               \displaystyle  \lim_{{{N \rightarrow +\infty}}}R_{l} = I(X_{l};Z|X_{1:l-1})+\epsilon_0+2\xi$.\label{c2lem10}
   \end{lem}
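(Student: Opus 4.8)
The plan is to mirror the rate computation in the proof of Lemma~\ref{lem10}. The key conceptual point to keep in mind is that $R_l$ counts only the \emph{fresh} uniform randomness injected at Transmitter~$l$ over the $k$ blocks, normalized by the total blocklength $kN$; the recycled bits $\widetilde E_{l,i}$ produced by $G_{X_l}$ in Line~$5$ of Algorithm~\ref{algext} are reused from the previous block and therefore do not contribute to $R_l$. This is precisely the effect of the block-Markov randomness-recycling construction: the per-block cost for $i\geq 2$ drops from $H(X_l)$ down to $I(X_l;ZX_{1:l-1})$, at the price of a vanishing $N\epsilon_2$ slack and a one-time $H(X_l)$ surplus in Block~$1$.

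Concretely, I would read off the \textsc{Require} line of Algorithm~\ref{algext}: Block~$1$ consumes $|E_{l,1}| = N(H(X_l)+\epsilon_2)$ bits, while each Block $i \in \llbracket 2,k\rrbracket$ consumes $|E_{l,i}| = N(I(X_l;ZX_{1:l-1})+\epsilon_2)$ bits. Hence
\begin{align*}
R_l &= \frac{\sum_{i=1}^k |E_{l,i}|}{kN} \\
&= \frac{H(X_l)+\epsilon_2}{k} + \frac{k-1}{k}\bigl(I(X_l;ZX_{1:l-1})+\epsilon_2\bigr)\\
&\leq \frac{H(X_l)}{k} + I(X_l;ZX_{1:l-1}) + \epsilon_2,
\end{align*}
where the last step uses $\tfrac{\epsilon_2}{k}+\tfrac{k-1}{k}\epsilon_2=\epsilon_2$ and $\tfrac{k-1}{k}\leq 1$.

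Next I would pick $k$ large enough that $\tfrac{1}{k}\max_{l\in\mathcal{L}}H(X_l) < \epsilon_0$, which absorbs the Block-$1$ overhead uniformly over $l$ and gives $R_l \leq \epsilon_0 + I(X_l;ZX_{1:l-1}) + \epsilon_2$. Letting $N\to+\infty$, we have $\delta^*_{\mathcal{L}}(N) = \log(|\mathcal{X}_{\mathcal{L}}|+3)\sqrt{\tfrac{2}{N}(L+\log N)} \to 0$, so $\epsilon_2 = 2(\delta^*_{\mathcal{L}}(N)+\xi)\to 2\xi$, whence $\lim_{N\to+\infty}R_l \leq I(X_l;ZX_{1:l-1}) + \epsilon_0 + 2\xi$. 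Finally, since $p_{X_{\mathcal{L}}}=\prod_{l\in\mathcal{L}}p_{X_l}$, the variable $X_l$ is independent of $X_{1:l-1}$, so by the chain rule $I(X_l;ZX_{1:l-1}) = I(X_l;X_{1:l-1}) + I(X_l;Z|X_{1:l-1}) = I(X_l;Z|X_{1:l-1})$, which yields the claimed limit. There is no real obstacle here beyond this routine accounting; the only thing that requires care is remembering not to charge the recycled randomness $\widetilde E_{l,i}$ to the rate, and making the choice of $k$ uniform in $l$ via the $\max$ over $\mathcal{L}$.
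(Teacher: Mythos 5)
Your proposal is correct and follows essentially the same route as the paper's proof: read off the randomness budget from the \textsc{Require} line of Algorithm~\ref{algext}, write $R_l=\frac{\sum_{i=1}^k|E_{l,i}|}{kN}$, bound it by $\frac{H(X_l)}{k}+I(X_l;ZX_{1:l-1})+\epsilon_2$, choose $k$ with $\frac{H(X_l)}{k}<\epsilon_0$, and let $N\to+\infty$ so that $\epsilon_2\to 2\xi$. The only difference is that you spell out the identity $I(X_l;ZX_{1:l-1})=I(X_l;Z|X_{1:l-1})$ via the independence of $X_l$ and $X_{1:l-1}$, which the paper leaves implicit.
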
 
   \begin{proof}
   Let $k$ be such that for any $l \in \mathcal{L}$ we have $
      \frac{
      H(X_l)}{k}  <\epsilon_0.$ 
Then, by the definition of $\epsilon_2$, for any $l \in \mathcal{L}$, we have
         \begin{align*}
          R_l &= \frac{ \sum_{i=1}^k | E_{l, i}|}{kN} \displaybreak[0]\\ 
          &=  \frac{N(H(X_l)+\epsilon_2)  + (k-1) N(I(X_l;ZX_{1:l-1})+\epsilon_2)}{kN}\\
  & \leq    \frac{H(X_l)}{k}+ I(X_l;ZX_{1:l-1})+\epsilon_2 \\
 &  \leq   \epsilon_0+I(X_l;ZX_{1:l-1})+\epsilon_2\\
   & \xrightarrow{N \to +\infty}I
   (X_l;ZX_{1:l-1})+\epsilon_0+2\xi. 
      \end{align*} 
   \end{proof}
      \section{Concluding Remarks}\label{sec:conclusion} 
   We showed that codes for MAC resolvability can be obtained solely from source resolvability codes, used as black boxes, and two-universal hash functions.    The crux of our approach is randomness recycling implemented with distributed hashing across a block-Markov coding scheme.   
   Since explicit constructions for source resolvability codes and two-universal hash functions are known, our approach provides explicit codes to achieve the entire multiple access channel resolvability region for arbitrary channels with binary input~alphabets. 
    \appendices
    \section{An explicit coding scheme for source resolvability} \label{Appsc}
 Let $n \in \mathbb{N}$ and $N \triangleq 2^n$. Let $G_n \triangleq  \left[ \begin{smallmatrix}
       1 & 0            \\[0.3em]
       1 & 1 
     \end{smallmatrix} \right]^{\otimes n} $ be the source polarization matrix defined in \cite{arikan2010source}.  For any set $\mathcal{A} \subseteq \llbracket 1,N \rrbracket$ and any sequence $X^{1:N}$, let $X^{1:N}[\mathcal{A}]$ be the components of $X^{1:N}$ whose indices are in~$\mathcal{A}$. 
         Next, consider a binary memoryless source $(\mathcal{X},q_X)$, where $|\mathcal{X}|=2$. Let $X^{1:N}$ be distributed according to $q_{X^{1:N}} \triangleq \prod_{i=1}^N q_X$, and define $A^{1:N} \triangleq G_n X^{1:N}$. Define also for $\beta < 1/2$, $\delta_N \triangleq 2^{-N^{\beta}}$, the~sets
\begin{align*}
    \mathcal{V}_X &\triangleq \left\{ i \in \llbracket 1, N \rrbracket: H( A^i | A^{1:i-1}) >    1 - \delta_N \right\} , \\
        \mathcal{H}_X &\triangleq \left\{ i \in \llbracket 1, N \rrbracket: H( A^i | A^{1:i-1}) >     \delta_N \right\}.
\end{align*}

\begin{algorithm}[h]
  \caption{Encoding algorithm for source resolvability}
  \label{alg:sr}
  \begin{algorithmic}   [1] 
     \REQUIRE A vector $R$ of $ |\mathcal{V}_X|$  uniformly distributed bits
     \STATE Define $\widetilde A^{1:N} [\mathcal{V}_X] \triangleq R  $
     \STATE  Define $\widetilde A^j$ according to $q_{A^j|A^{1:j-1}}$ for  $j\in \mathcal{V}_X^c \backslash \mathcal{H}_X^c$ and as $\widetilde A^j \triangleq \displaystyle\argmax_{a \in \{0,1\}} q_{A^j|A^{1:j-1}}\!(a|a^{1:j-1}\!)$ for $j\in \mathcal{H}_X^c$
      \STATE Define $\widetilde X^{1:N} \triangleq \widetilde A^{1:N} G_n$
       \end{algorithmic}
\end{algorithm}
  In Algorithm \ref{alg:sr}, the distribution of $\widetilde X^{1:N}$ is such that  $ \lim_{N \to \infty} \mathbb{V}(\widetilde{p}_{X^{1:N}},q_{X^{1:N}}) =0$ by \cite{chou2016polar,chou2015using}. Moreover, the rate of $R$ is $\frac{|\mathcal{V}_X|}{N} \xrightarrow{N \to +\infty} H(X)$ by \cite[Lemma 1]{chou2013polar}, and the rate of randomness used in Line 2 is $0$ by~\cite[Lemma~20]{chou2018empirical}. Hence, Algorithm \ref{alg:sr} achieves the source resolvability of $(\mathcal{X},q_X)$.
    
   \section{Supporting Lemmas}  \label{appA} 
    
 A function $f_X$ defined over a finite alphabet $\mathcal{X}$ is  subnormalized non-negative if $f_X(x)\geq 0 , \forall x \in \mathcal{X}$ and $\sum_{x\in\mathcal{X}}f_X(x) \leq 1$. Additionally, for a subnormalized non-negative function $f_{XY}$ defined over a finite alphabet $\mathcal{X}\times \mathcal{Y}$, its marginals are defined as $f_X(x) \triangleq \sum_{y\in\mathcal{Y}} f_{XY}(x,y), \forall x \in \mathcal{X}$ and $f_Y(y)\triangleq \sum_{x\in\mathcal{X}} f_{XY}(x,y), \forall y \in \mathcal{Y}$, similar to probability distributions.
  
    \begin{lem} [{\cite{chou2018a},\cite[Lemma 2]{chou2021distributed}}] \label{lems1}
       Define $\mathcal{A}\triangleq \llbracket 1,A \rrbracket $.  Let $(\mathcal{T}_a)_{a \in \mathcal{A}}$ be $A$ finite alphabets and define for $\mathcal{S}\subseteq \mathcal{A}$, $\mathcal{T}_{\mathcal{S}}\triangleq \bigtimes_{a \in \mathcal{S}} \mathcal{T}_a$. Consider the random variables  $T^{1:N}_{\mathcal{A}}\triangleq ({T}^{1:N}_a)_{a \in \mathcal{A}}$ and $Z^{1:N}$ defined over $\mathcal{T}_{\mathcal{A}}^N  \times\mathcal{Z}^N$ with probability distribution $q_{T^{1:N}_{\mathcal{A}} Z^{1:N}}\triangleq \prod_{i=1}^N q_{T_{\mathcal{A}} Z}$. For any $\epsilon>0$, there exists a subnormalized non-negative function $w_{T^{1:N}_{\mathcal{A}} Z^{1:N}}$ defined over $\mathcal{T}^N_{\mathcal{A}} \times\mathcal{Z}^N$ such that $\mathbb{V}(q_{T^{1:N}_{\mathcal{A}} Z^{1:N}},w_{T^{1:N}_{\mathcal{A}} Z^{1:N}})\leq\epsilon$  and
       \begin{align*}
          H_{\infty}(w_{T^{1:N}_{\mathcal{S}} Z^{1:N}}|q_{Z^{1:N}})\geq N H({T_{\mathcal{S}}}|Z)-N \delta_{\mathcal{S}}(N), \forall \mathcal{S}\subseteq \mathcal{A},
       \end{align*}
       where $\delta_{\mathcal{S}}(N)\triangleq \log (\lvert\mathcal{T}_{\mathcal{S}}\rvert+3)\sqrt{\frac{2}{N}(A-\log \epsilon)}$, and we have defined the min-entropy as in \cite {renner2008security,watanabe2013non}, i.e.,
       \begin{align*}
          & H_{\infty}(w_{T^{1:N}_{\mathcal{S}} Z^{1:N}}|q_{Z^{1:N}})\\ & \triangleq  -\log \!\!\!\displaystyle \max_{\substack{{t^{1:N}_{\mathcal{S}} \in \mathcal{T}^N_{\mathcal{S}}}\\ z^{1:N} \in \textup{supp}(q_{Z^{1:N}}) }}\!\!\frac{w_{T^{1:N}_{\mathcal{S}} Z^{1:N}}(t^{1:N}_{\mathcal{S}}, z^{1:N})}{q_{Z^{1:N}}(z^{1:N})}.\label{lem2}
       \end{align*}
       \end{lem}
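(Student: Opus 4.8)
\emph{Proof proposal.} This is a ``distributed AEP''/smoothing statement — indeed it is the content of \cite{chou2018a} and \cite[Lemma 2]{chou2021distributed} — and here is the argument I would give. The plan is to take $w_{T^{1:N}_{\mathcal A}Z^{1:N}}$ to be the restriction of the product distribution $q_{T^{1:N}_{\mathcal A}Z^{1:N}}$ to a set that is jointly typical for \emph{every} subset simultaneously. Concretely, for each nonempty $\mathcal S\subseteq\mathcal A$ I would define
\[
\mathcal G_{\mathcal S}\triangleq\Big\{(t_{\mathcal A}^{1:N},z^{1:N}):\ q_{Z^{1:N}}(z^{1:N})>0,\ \tfrac1N\log\tfrac{q_{Z^{1:N}}(z^{1:N})}{q_{T_{\mathcal S}^{1:N}Z^{1:N}}(t_{\mathcal S}^{1:N},z^{1:N})}\ge H(T_{\mathcal S}|Z)-\delta_{\mathcal S}(N)\Big\},
\]
noting that membership in $\mathcal G_{\mathcal S}$ depends on $t_{\mathcal A}^{1:N}$ only through its $\mathcal S$-block $t_{\mathcal S}^{1:N}$, and then set $\mathcal G\triangleq\bigcap_{\emptyset\neq\mathcal S\subseteq\mathcal A}\mathcal G_{\mathcal S}$ and $w_{T_{\mathcal A}^{1:N}Z^{1:N}}\triangleq q_{T_{\mathcal A}^{1:N}Z^{1:N}}\,\mathbbm{1}_{\mathcal G}$. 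Since $0\le w\le q$ pointwise, $w$ is automatically subnormalized non-negative, and $\mathbb V(q,w)=q(\mathcal G^c)$.

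The min-entropy bound would then come out directly from the construction. For fixed $\mathcal S$, $t_{\mathcal S}^{1:N}$, and $z^{1:N}\in\mathrm{supp}(q_{Z^{1:N}})$, I would sum $w$ over $t_{\mathcal A\setminus\mathcal S}^{1:N}$, bound $\mathbbm{1}_{\mathcal G}\le\mathbbm{1}_{\mathcal G_{\mathcal S}}$, and pull the $\mathcal S$-only indicator out of the sum, obtaining $w_{T_{\mathcal S}^{1:N}Z^{1:N}}(t_{\mathcal S}^{1:N},z^{1:N})\le\mathbbm{1}_{\mathcal G_{\mathcal S}}(t_{\mathcal S}^{1:N},z^{1:N})\,q_{T_{\mathcal S}^{1:N}Z^{1:N}}(t_{\mathcal S}^{1:N},z^{1:N})$; dividing by $q_{Z^{1:N}}(z^{1:N})$ and invoking the defining inequality of $\mathcal G_{\mathcal S}$ bounds the ratio by $2^{-N(H(T_{\mathcal S}|Z)-\delta_{\mathcal S}(N))}$, so $H_\infty(w_{T_{\mathcal S}^{1:N}Z^{1:N}}|q_{Z^{1:N}})\ge NH(T_{\mathcal S}|Z)-N\delta_{\mathcal S}(N)$, with the case $\mathcal S=\emptyset$ being vacuous.

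To finish, I would have to show $q(\mathcal G^c)\le\epsilon$. By a union bound over the at most $2^A$ subsets this reduces to $q(\mathcal G_{\mathcal S}^c)\le\epsilon 2^{-A}$ for each nonempty $\mathcal S$. Under $q$, the quantity $\log\tfrac{q_{Z^{1:N}}(Z^{1:N})}{q_{T_{\mathcal S}^{1:N}Z^{1:N}}(T_{\mathcal S}^{1:N},Z^{1:N})}=\sum_{i=1}^N\big(-\log q_{T_{\mathcal S}|Z}(T_{\mathcal S,i}|Z_i)\big)$ is a sum of $N$ i.i.d.\ non-negative terms of mean $H(T_{\mathcal S}|Z)$, and $\mathcal G_{\mathcal S}^c$ is its lower-deviation event. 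Although the summand is unbounded, its one-sided exponential moment $\mathbb E[q_{T_{\mathcal S}|Z}(T_{\mathcal S}|Z)^{s}]=\sum_z q_Z(z)\sum_{t_{\mathcal S}}q_{T_{\mathcal S}|Z}(t_{\mathcal S}|z)^{1+s}$ is finite for every $s\ge0$, so a Chernoff/Bernstein estimate of the lower tail is available; controlling the resulting exponent — equivalently, bounding the varentropy $\mathrm{Var}(-\log q_{T_{\mathcal S}|Z}(T_{\mathcal S}|Z))$ by $(\log(|\mathcal T_{\mathcal S}|+3))^2$, up to the base of the logarithm — gives exactly the stated $\delta_{\mathcal S}(N)=\log(|\mathcal T_{\mathcal S}|+3)\sqrt{\tfrac2N(A-\log\epsilon)}$ and hence $q(\mathcal G_{\mathcal S}^c)\le\epsilon 2^{-A}$. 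Summing over $\mathcal S$ yields $\mathbb V(q,w)\le\epsilon$.

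The hard part is this last step. Since $-\log q_{T_{\mathcal S}|Z}$ has no a priori bound in terms of $|\mathcal T_{\mathcal S}|$, Hoeffding's and McDiarmid's inequalities do not apply directly, and squeezing out the clean constant $\log(|\mathcal T_{\mathcal S}|+3)$ requires either an explicit varentropy/R\'enyi-entropy estimate combined with the one-sided finiteness of the moment generating function, or a preliminary smoothing step in which each conditional $q_{T_{\mathcal A}|Z}(\cdot|z)$ is first mixed with the uniform law on $\mathcal T_{\mathcal A}$ so that $-\log$ becomes bounded and a bounded-difference inequality applies, the small perturbation then being absorbed into the $O(1/N)$-scale slack. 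Since this is precisely the content of \cite{chou2018a} and \cite[Lemma 2]{chou2021distributed}, the cleanest option in the paper is in fact to invoke those references directly.
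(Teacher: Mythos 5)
The paper never proves this lemma: it is imported verbatim from \cite{chou2018a} and \cite[Lemma 2]{chou2021distributed}, so the ``proof'' in the paper is the citation itself, and your closing recommendation to invoke those references is exactly what the paper does. Your reconstruction of how such a statement is established — set $w\triangleq q\cdot\mathbf{1}_{\mathcal G}$ with $\mathcal G$ the intersection over nonempty $\mathcal S\subseteq\mathcal A$ of the events where $-\frac{1}{N}\log q_{T^{1:N}_{\mathcal S}|Z^{1:N}}\ge H(T_{\mathcal S}|Z)-\delta_{\mathcal S}(N)$, read off the min-entropy bound by marginalizing the indicator (legitimate, since membership in $\mathcal G_{\mathcal S}$ depends only on $(t^{1:N}_{\mathcal S},z^{1:N})$ and $\mathcal G\subseteq\mathcal G_{\mathcal S}$), and control $\mathbb V(q,w)=q(\mathcal G^{c})$ by a union bound over the at most $2^{A}$ subsets with per-subset failure probability $\epsilon 2^{-A}$, noting $\log(2^{A}/\epsilon)=A-\log\epsilon$ — is structurally the standard route and is consistent with the form of $\delta_{\mathcal S}(N)$.

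However, as a self-contained argument there is a genuine gap in the one step that carries all of the quantitative content: the per-subset lower-tail bound $q(\mathcal G_{\mathcal S}^{c})\le\epsilon 2^{-A}$ with the specific constant $\log(\lvert\mathcal T_{\mathcal S}\rvert+3)$. A bound on the varentropy of $-\log q_{T_{\mathcal S}|Z}(T_{\mathcal S}|Z)$ does not by itself give a sub-Gaussian Chernoff tail; Bernstein-type arguments additionally require control of higher moments or of the one-sided moment generating function over the relevant range of the tilting parameter, and they produce extra terms in the exponent rather than the clean deviation $\delta=\log(\lvert\mathcal T_{\mathcal S}\rvert+3)\sqrt{\frac{2}{N}\log(2^{A}/\epsilon)}$. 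That ``$+3$'' is precisely the artifact of a dedicated MGF estimate for log-likelihoods (of Holenstein--Renner type) carried out in the cited references, and your sketch asserts rather than establishes it. The alternative you float — mixing each conditional with the uniform distribution so that the log-likelihood becomes bounded and Hoeffding applies — would degrade the constant to something like $\log$ of the alphabet size divided by the mixing weight and still requires relating the typical set of the perturbed distribution back to $q$; ``absorbed into the $O(1/N)$ slack'' is not a proof. So the only complete route in your proposal is the one you end with, namely citing \cite{chou2018a} and \cite[Lemma 2]{chou2021distributed}, which coincides with the paper.
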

       
     \begin{lem}[{\cite{chou2018a},\cite[Lemma 1]{chou2021distributed}}]
    Consider a sub-normalized non-negative function  $ p_{ X_{\mathcal L}Z}$ defined over $\bigtimes_{l \in \mathcal{L}}\mathcal{X}_{l}\times \mathcal{Z}$, where $X_{\mathcal{L}} \triangleq (X_l)_{l \in \mathcal{L}}$ and, $\mathcal{Z}$, $\mathcal{X}_{l}$, $l \in\mathcal{L}$, are finite alphabets. For $l \in \mathcal{L}$, let  $F_l:\{0,1\}^{n_l} \longrightarrow \{0,1\}^{r_l}$, be uniformly chosen in a family $\mathcal{F}_l$ of two-universal hash functions. Define $s_{\mathcal L} \triangleq \prod_{l \in \mathcal{L}} |\mathcal{F}_l|$, and for any $\mathcal{S} \subseteq \mathcal{L}$, define $r_{\mathcal{S}}\triangleq \sum_{i \in \mathcal{S}}r_i$. Define also ${F}_{\mathcal{L}}\triangleq (F_l)_{l \in \mathcal{L}}$ and 
  $
        F_{\mathcal{L}}( X_{\mathcal{L}})\triangleq \left( F_l(X_l) \right)_{l \in \mathcal{L}}.
   $\label{lem11}
   Then, for any $q_Z$ defined over $\mathcal{Z}$ such that $\textup{supp}(q_Z) \subseteq \textup{supp}(p_Z)$, we have
   \begin{align*}
     & \mathbb{V}({{p}_{F_{\mathcal{L}}( X_{\mathcal{L}}),   F_{\mathcal{L}},Z}}, p_{U_{\mathcal K}} p_{U_{\mathcal F}} p_Z) \\&\leq   {{{\sqrt{{ \displaystyle\sum_{\substack{{\mathcal S\subseteq\mathcal L}, {\mathcal S \neq \emptyset}}}}2^{r_{\mathcal S}-H_{\infty}(p_{X_{\mathcal S}Z}|q_Z)}}}}},
       \end{align*}
      where  $p_{U_{\mathcal K}}$ and  $\textcolor{black}{p_{U_{\mathcal F}}}$ are the uniform distributions over $\llbracket 1,2^{r_{{\mathcal{L}}}} \rrbracket$ and $\llbracket 1,{s_{{\mathcal{L}}}} \rrbracket$, respectively.
          \end{lem}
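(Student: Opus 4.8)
The plan is to establish Lemma~\ref{lem11} by the standard two-step recipe behind leftover-hash lemmas---first pass from the $\ell_1$ distance to a $q_Z$-weighted $\ell_2$ (collision) quantity, then evaluate the expected collision over the hash choice---carried out simultaneously over the $L$ independent families $\mathcal F_1,\dots,\mathcal F_L$. I would begin by removing the hash randomness through conditioning: since $F_{\mathcal L}=(F_l)_{l\in\mathcal L}$ is uniform over $\prod_{l}\mathcal F_l$ and independent of $(X_{\mathcal L},Z)$, for each realization $f=(f_l)_{l\in\mathcal L}$ set $p_{K_fZ}(k,z)\triangleq\sum_{x_{\mathcal L}:\,f(x_{\mathcal L})=k}p_{X_{\mathcal L}Z}(x_{\mathcal L},z)$, the joint law of $(f(X_{\mathcal L}),Z)$. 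A direct expansion gives $\mathbb V(p_{F_{\mathcal L}(X_{\mathcal L}),F_{\mathcal L},Z},p_{U_{\mathcal K}}p_{U_{\mathcal F}}p_Z)=\mathbb E_f\big[\mathbb V(p_{K_fZ},p_{U_{\mathcal K}}p_Z)\big]$, the expectation being over the independent uniform draws $f_l\in\mathcal F_l$, so by Jensen it suffices to bound $\mathbb E_f\big[\mathbb V(p_{K_fZ},p_{U_{\mathcal K}}p_Z)^2\big]$ and take a square root at the end.

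Next, fix $f$ and write $a_{k,z}\triangleq p_{K_fZ}(k,z)-2^{-r_{\mathcal L}}p_Z(z)$. Applying Cauchy--Schwarz twice---over $k\in\{0,1\}^{r_{\mathcal L}}$ with weight $q_Z(z)^{1/2}$, then over $z$ using $\sum_z q_Z(z)\le 1$ and the support hypothesis, which guarantees $a_{k,z}=0$ on $\{z:q_Z(z)=0\}$ so the weight $1/q_Z(z)$ only multiplies zeros---yields $\mathbb V(p_{K_fZ},p_{U_{\mathcal K}}p_Z)^2\le 2^{r_{\mathcal L}}\sum_{z,k}a_{k,z}^2/q_Z(z)$. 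Expanding the square and using that the $Z$-marginal of $p_{K_fZ}$ is $p_Z$, the linear cross term collapses to $-2^{-r_{\mathcal L}}\sum_z p_Z(z)^2/q_Z(z)$, leaving the bound $2^{r_{\mathcal L}}\big(\sum_{z,k}p_{K_fZ}(k,z)^2/q_Z(z)-2^{-r_{\mathcal L}}\sum_z p_Z(z)^2/q_Z(z)\big)$.

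The combinatorial core is then to evaluate $\mathbb E_f\big[\sum_k p_{K_fZ}(k,z)^2\big]$. Squaring and summing over $k$ turns $\sum_k\mathbf 1\{f(x_{\mathcal L})=k\}\mathbf 1\{f(x'_{\mathcal L})=k\}$ into $\mathbf 1\{f(x_{\mathcal L})=f(x'_{\mathcal L})\}$, so $\sum_k p_{K_fZ}(k,z)^2=\sum_{x_{\mathcal L},x'_{\mathcal L}}\mathbf 1\{f(x_{\mathcal L})=f(x'_{\mathcal L})\}\,p_{X_{\mathcal L}Z}(x_{\mathcal L},z)p_{X_{\mathcal L}Z}(x'_{\mathcal L},z)$. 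Grouping pairs by their disagreement set $\mathcal C=\{l:x_l\neq x'_l\}$, independence of the $F_l$ and two-universality of each $\mathcal F_l$ give $\mathbb E_f\big[\mathbf 1\{f(x_{\mathcal L})=f(x'_{\mathcal L})\}\big]\le\prod_{l\in\mathcal C}2^{-r_l}=2^{-r_{\mathcal C}}$; relaxing ``disagreement exactly $\mathcal C$'' to ``agreement on $\mathcal C^{c}$'' and factoring the resulting sum bounds the $\mathcal C$-block by $\sum_{x_{\mathcal C^{c}}}p_{X_{\mathcal C^{c}}Z}(x_{\mathcal C^{c}},z)^2$. Reindexing $\mathcal S=\mathcal C^{c}$, the $\mathcal S=\emptyset$ term (i.e.\ $\mathcal C=\mathcal L$) equals $2^{-r_{\mathcal L}}p_Z(z)^2$ and, after multiplication by $2^{r_{\mathcal L}}$, cancels exactly the negative term from the previous step; using $2^{r_{\mathcal L}-r_{\mathcal C}}=2^{r_{\mathcal S}}$, what survives is $\mathbb E_f\big[\mathbb V(p_{K_fZ},p_{U_{\mathcal K}}p_Z)^2\big]\le\sum_{\emptyset\neq\mathcal S\subseteq\mathcal L}2^{r_{\mathcal S}}\sum_{z,x_{\mathcal S}}p_{X_{\mathcal S}Z}(x_{\mathcal S},z)^2/q_Z(z)$. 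Finally $\sum_{z,x_{\mathcal S}}p_{X_{\mathcal S}Z}(x_{\mathcal S},z)^2/q_Z(z)\le 2^{-H_{\infty}(p_{X_{\mathcal S}Z}|q_Z)}\sum_{z,x_{\mathcal S}}p_{X_{\mathcal S}Z}(x_{\mathcal S},z)\le 2^{-H_{\infty}(p_{X_{\mathcal S}Z}|q_Z)}$ by the definition of the min-entropy and subnormalization of $p_{X_{\mathcal L}Z}$, and taking the square root via Jensen gives the claimed inequality.

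I expect the main obstacle to be the bookkeeping in the third step: pushing the disagreement-set decomposition through the expectation, getting the complementation $\mathcal S=\mathcal C^{c}$ and the exponent $2^{r_{\mathcal L}-r_{\mathcal C}}=2^{r_{\mathcal S}}$ correct, and verifying that the $\mathcal S=\emptyset$ contribution cancels precisely the cross term so the final sum runs over nonempty $\mathcal S$ only with the stated exponents $r_{\mathcal S}-H_{\infty}(p_{X_{\mathcal S}Z}|q_Z)$. The probabilistic ingredients---independence across the $L$ families, two-universality, the double Cauchy--Schwarz, and Jensen---are each routine once the combinatorics is laid out.
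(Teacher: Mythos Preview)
The paper does not prove this lemma; it is quoted from the external references \cite{chou2018a} and \cite[Lemma~1]{chou2021distributed} and used as a black box. Your proposal reproduces precisely the standard argument behind those references---average over the hash choice, Cauchy--Schwarz with a $q_Z$-weighted $\ell_2$ norm, expand the collision sum, partition pairs $(x_{\mathcal L},x'_{\mathcal L})$ by their disagreement set, apply two-universality coordinatewise, and close with the min-entropy bound---so in substance your route coincides with the one in the cited works.

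One small technical flag: you write that ``the support hypothesis \ldots guarantees $a_{k,z}=0$ on $\{z:q_Z(z)=0\}$.'' What is actually needed for the Cauchy--Schwarz step is $\operatorname{supp}(p_Z)\subseteq\operatorname{supp}(q_Z)$, i.e., $q_Z(z)=0\Rightarrow p_Z(z)=0$, which in turn forces $p_{K_fZ}(k,z)=0$ and hence $a_{k,z}=0$. The inclusion as written in the lemma statement, $\operatorname{supp}(q_Z)\subseteq\operatorname{supp}(p_Z)$, is the reverse direction and does not yield this conclusion; this appears to be a typographical slip in the statement (the cited sources carry the inclusion in the direction you need). Apart from this orientation issue, your bookkeeping---the complementation $\mathcal S=\mathcal C^{c}$, the exponent identity $2^{r_{\mathcal L}-r_{\mathcal C}}=2^{r_{\mathcal S}}$, and the cancellation of the $\mathcal S=\emptyset$ term against the cross term---is correct.
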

    \section{Proof of Lemma \ref{lem1}} \label{AppendixA}
    The proof is similar to \cite{grant2001rate}. We have
\begin{align*}
    I(XY;Z) 
 &\stackrel{(a)} =I(XUV;Z)\\
&\stackrel{(b)} =I(U;Z)+I(X;Z\lvert U)+I(V;Z\lvert UX),
\end{align*} 
where $(a)$ holds because $I(XUV;Z)\geq I(XY;Z)$ since $Y=f(U,V)$, and $I(XUV;Z)\leq I(XY;Z)$ since  $(X,U,V)-(X,Y)-Z$ forms a Markov chain, $(b)$ holds by the chain~rule.

We  know  by {\cite [Lemma 6]{grant2001rate}}  that $I(X;ZU)$ is a continuous function of $\epsilon$,  hence so is
\begin{align*}
    R_1=I(X;Z|U)=I(X;ZU),
\end{align*}
    where the last equality holds by the independence between $X$ and $U$. Then, $I(X;Z)$ and $I(X;Z|Y)$ are in the image of $R_1$ by $\eqref{lem1eqn1}$, and hence, using $I(X;Z)\leq I(X;YZ)=I(X; Z|Y)$, $ [ I(X;Z), I(X;Z|Y)]$ is also in the image of $R_1$ by continuity.

          \bibliographystyle{IEEEtran}
               \bibliography {main_v3}

\end{document}